\newcommand{\stkout}[1]{\ifmmode\text{\sout{\ensuremath{#1}}}\else\sout{#1}\fi}
\newtheorem{definition}{Definition}
\newtheorem{theorem}{Theorem}
\newtheorem{proposition}{Proposition}
\newtheorem{corollary}{Corollary}[theorem]
\newtheorem{lemma}[theorem]{Lemma}
\newtheorem{remark}{Remark}
\newcommand{\bea}{\begin{eqnarray*}}
\newcommand{\eea}{\end{eqnarray*}}
\newcommand{\beao}{\begin{eqnarray}}
\newcommand{\eeao}{\end{eqnarray}}
\newcommand{\tcr}[1]{\textcolor{red}{#1}}
\newcommand{\tcb}[1]{\textcolor{blue}{#1}}
\newcommand{\RI}{\mathcal{R}}
\newcommand{\complexn}{\mathbb{C}}
\title{Sequential realization of Quantum Instruments}
\author{Soham Sau}
\author{Michal Sedl\'ak}
\affiliation{RCQI, Institute of Physics, Slovak Academy of Sciences, Dúbravská cesta 9, 84511 Bratislava, Slovakia}
\begin{document}
\maketitle
\begin{abstract}
In adaptive quantum circuits classical results of mid-circuit measurements determine the upcoming gates. This allows POVMs, quantum channels or more generally quantum instruments to be implemented sequentially, so that fewer qubits need to be used at each of the $N$ measurement steps. In this paper, we mathematically describe these problems via adaptive sequence of instruments (ASI) and show how any instrument can be decomposed into it. Number of steps $N$ and number of ancillary qubits $n_A$ needed for actual implementation are crucial parameters of any such ASI. We show an achievable lower bound on the product $N.n_A$ and we determine in which situations this tradeoff is likely to be optimal. Contrary to common intuition we show that for quantum instruments which transform $n$ to $m(>n)$ qubits, there exist $N$-step ASI implementing them just with $(m-n)$ ancillary qubits, which are remeasured $(N-1)$ times and finally used as output qubits. 
\end{abstract}

\maketitle

\section{Introduction}

%Most of the currently developed devices for quantum technologies aim at implementing quantum circuits. 
Most of the currently developed quantum computers aim at implementing quantum circuits. 
In this paradigm %, generally, 
a product pure state of $n$ qubits is prepared and transformed by a sequence of unitary gates acting simultaneously only on a few qubits.
Finally, projective single-qubit measurements are performed on all or just some qubits to signalize the outcome of the performed task. However, for many relevant tasks for quantum information processing, we %pose 
 start with a higher-level description (often less detailed) of what needs to be implemented. For example, %the 
 preparation of a mixed state, %performing 
 realization of a quantum channel or a positive operator valued measure (POVM), or implementation of a quantum instrument (i.e., measurement having classical as well as quantum output). These concepts are mathematically more convenient when solving some optimization problems, since they focus only on the most general evolution of the primary quantum system, which might be effectively arising in many different ways from interactions with other classical or quantum systems. On the other hand, these concepts need to be realized on real quantum hardware via its available toolkit. 
 
For example, a quantum channel dilation is realized by appending ancillary qubits, and then performing a multi-qubit unitary gate, which is further decomposed into available quantum gates.

Current Noisy intermediate-scale quantum (NISQ) era  
devices \cite{preskill} have serious limitations especially in terms of
number of available qubits, number of gates or mid-circuit measurements that can be performed with reasonable quality within a single experimental run.
Thus, to utilize %unlock 
potential of NISQ devices efficient realizations of channels, POVMs and instruments should be developed.

Although %optimal 
single-step dilations for POVMs \cite{Naimark}, channels \cite{stinespring}, and instruments \cite{ozawa} are known, their demands for a number of ancillary qubits 
 often turn out to be too high
\cite{bharti, zoltan}. 
A practical strategy to solve this problem is to partially exchange %the demanded 
space resources (qubits) for time resources (the number of computational steps).  
One of the first efforts in this direction was made by Lloyd and Viola \cite{lloyd}, who showed that 
repeated interaction with a single ancillary qubit, plus it's measurement and reset, together suffice to sequentially realize an arbitrary quantum channel. This could be considered as one of the first examples of adaptive quantum circuits. 
%Unfortunately, the same general concept is denoted by several names in the literature and on purpose we use them interchangeably to interconnect more the communities. 
In the area of measurements, similar efforts were initiated by G. Wang and M. Ying  \cite{Ying}, who proved that extending system's Hilbert space by a single extra dimension allows for the sequential realization of any POVM. Later, Andersson and Oi \cite{AndOi} showed that using a single resettable ancillary qubit suffices to implement any POVM. 
%\tcr{[Place elsewhere?]} Other than the near-term devices, aplictions of sequences of measurements include 
Applicability of adaptive circuits was also considered for particular tasks, such as 
state tomography \cite{tomo2}, computing \cite{comp}, quantum sequential decoding \cite{wilde}, sequential state discrimination \cite{ssd}, joint measurability \cite{incomp} and unambiguous state discrimination \cite{danobouda}. 

Further progress was triggered by efforts to utilize sequential implementation schemes in experiments. Particularly,  Chao Shen, et.al. in \cite{channelcons}, considered such applications in superconducting circuits, including sequential implementation of quantum instruments. 
%such applications are considered 
On the way towards demonstrating the practical advantage of dynamic quantum circuits containing mid-circuit measurements \cite{mcm_first} (also denoted as quantum non-demolition measurements(QND)) conditional dynamics characterization methods %for such quantum instruments 
needed to be developed \cite{mcmbench,mcm_tomo_1,mcm_tomo_2, mcm_tomo_4, mcm_tomo_5, mcm_tomo_6, mcm_tomo_8}. 
%\tcr{\bf [Added refs - check\&read]}. 
%In the following paragraph, we would like to highlight two recent works that demonstrate the maturity of the sequential implementation approach in current-era quantum devices.
In \cite{algoseq} Córcoles, et.al.
%Takita, Inoue , Lekuch, Minev, Chow, and Gambetta 
implements the famous Quantum Phase estimation algorithm in its adaptive version, i.e. they  realize particular dynamic quantum circuits on a superconducting-based quantum system. They managed to obtain results comparable to a nonadaptive implementation of the same algorithm. %Thus,  already in 2021 they demonstrated that dynamic quantum circuits are ready to become a powerful tool for the quantum computation. % toolkit
%In \cite{MCMexpt2024}, %the 
Later, work of P. Ivashkov, et. al. \cite{MCMexpt2024}
%G. Uchehara, L. Jiang, D.S. Wang, A. Seif,
%work 
%{\bf authors} 
%demonstrates that with their NISQ device, it was possible to 
reported higher fidelity of implementation of four-dimensional informationally complete POVM via their hybrid sequential implementation method than via direct single-step implementation based on Naimark dilation. 
These two works clearly demonstrated that the current-era quantum technology 
%NISQ devices 
could benefit from 
%the exploration of 
%exploring the trade-off between spatial and temporal resources 
%when designing the experiments.
trading-off spatial to temporal resources, when designing the experiments.
%Given the above practical issues, adaptive quantum circuits are advantageous in the NISQ era experiments for implementing deeper and more complex quantum circuits than their unitary counterparts (as they allow the possibility of realizing mid-circuit measurements and feed-forward control \cite{entangleMCM})
Thus, adaptive quantum circuits are advantageous in the NISQ era experiments for implementing more complex quantum circuits thanks to utilization of mid-circuit measurements and feed-forward control \cite{entangleMCM}. 

%In the present work, we begin by defining an adaptive sequence of instruments, which is a more general sequence of measurements that allows for feed-forward control.
The aim of the present work is to theoretically analyze adaptive quantum circuits in it's idealized error-free form. Mid-circuit measurements are mathematically described by quantum instruments. Therefore, we begin by defining \emph{adaptive sequence of instruments} (ASI), which we designed to be, from mathematical perspective, a sufficiently general structure to encompass the considered situation. We give it a new name due to it's unique way of handling the classical feedforward of information. We explore the %theoretical
possibilities for the sequential implementation of quantum instruments by such sequences. 
%Quantum instruments contain as their special cases quantum states, channels and POVMs, hence ASI automatically covers a lot previously considered scenarios.
%In particular, we generalize such previous results and constructively demonstrate a broader set of implementation possibilities, which depend on a choice of a sequence of POVM postprocessing relations. 
In particular, we generalize such previous results \cite{lloyd,AndOi,danobouda,channelcons} and show a constructive decomposition technique for a quantum instrument into a
$N$-step ASI based on sequence of POVM postprocessings of the instrument’s induced POVM. 
%(POVM formed by the classical outcomes).
We %provide a 
study %of the 
efficiency of such sequential implementation schemes and %demonstrate that the considered scheme can reach 
%\tcr{hint on} optimal 
determine achievable trade-offs by the proposed method between 
the number of steps $N$ and the repeatedly used ancilla dimension needed for dilation of the involved instruments. 
%We have discovered that instruments, which map to Hilbert space of greater dimensionality can be realized by ASI with smallest ancilla covering the output dimension increase.
We have discovered that for any instrument mapping to Hilbert space of increasing dimensionality there exists an ASI with smallest ancilla system that covers just the output dimension increase and is completely consumed in last step of ASI.
%This goal is achieved by re-measuring ancilla in the first $N-1$ steps and finally using it to implement isometry in the last step.
%Previously, it was believed that at least one qubit ancilla is needed for deterministic sequential implementation. 
%for deterministic sequential implementation. 
%We also show that for instruments with $N$-tuple outcome space such $N$-step ASI exist, which in every step determine one elements of the final outcome.
%Furthermore, we observed the relation of sequential implementability to the compatibility of quantum instruments \cite{am1,am2,lm2} and instrument postprocessing \cite{lm1}, which also partly motivated the considered generalizations. 
The methods presented in this manuscript might be most relevant for problems, where we intent to implement a particular quantum channel, POVM or instrument for which %direct implementation methods 
single-step dilation 
and subsequent quantum circuit optimization does not provide sufficient control on the number of used qubits and mid-circuit measurements.

 The structure of the paper is organized as follows. In section (\ref{sec1}), we review needed notation and %we 
 define the concept of adaptive sequence of %quantum
 instruments. 
% The structure of the paper is organized as follows. In section (\ref{sec1}), we quickly review notation for the needed mathematical objects and we define the concept of adaptive sequence of quantum instruments. 
 %In section (\ref{main}), we prove the main realization theorem and, subsequently, the specialized cases of our proposition in subsection (\ref{Instruments with n-tuple outcome space}), followed by an example. 
 %In section (\ref{main}), we focus on decomposing an instrument into a $2$-step ASI and we derive Theorem \ref{main theorem}, which plays the key role in the manuscript. 
 In section (\ref{main}), we decompose an instrument into a $2$-step ASI and derive Theorem \ref{main theorem}, which plays %the key 
 prominent role in the manuscript.
 While studying it's resource requirements we define the notion of required ancilla dimension. %needed for realization of the given ASI 
 Next, we derive Theorem \ref{th:opt2stepsplit}, which for instruments with non-decreasing Hilbert space dimension quantifies needed ancilla dimension.
 %\tcr{\cancel{for ASI obtained by Theorem \ref{main theorem}}}. 
%\tcb{On the other hand,} for instruments, which have smaller output dimension with respect to input dimension,
For the rest of the instruments, we derive an improvement of Theorem \ref{main theorem} in section \ref{sec:imprvThm}. Section \ref{sec:n-asi} generalizes %part of 
the above results into the $N$-step scenario. 
%\tcr{[cancel 2nd part?]and 
%\tcb{presents} 
%a case study of %$4$
%four outcome Lüders instrument.} 
Considerations in Section \ref{sec:limitations_N-ASI}
 hint on scenarios in which 
 %decomposition methods proposed in this manuscript 
 proposed decompositions might be optimal and why trade-offs valid for them might qualitatively unavoidable in general. 
 Section \ref{POVM section} specializes the presented results to the special case of sequential POVM implementation. It also hints on relation of the overall problem to instrument compatibility \cite{am1,am2,lm2}, which is further discussed together with relation to instrument postprocessing in Section \ref{sec:rel_incomp_postproc}. 
 Finally, the manuscript concludes with Section \ref{sec:summary}, which summarizes obtained results, possible future directions and open questions.

\section{Quantum Instruments and their adaptive sequences} 
\label{sec1}
Let us consider a complex Hilbert space $\mathcal{H}$ with a finite dimension $d$, and let $\mathcal{L(H)}$ denote the set of all bounded linear operators acting on $\mathcal{H}$. 
Any quantum state is represented by a density operator $\rho$ $\in \mathcal{L}\mathcal{(H)}$, i.e. a positive-semidefinite operator $(\rho \geq 0)$ with unit trace ($Tr[\rho]=1$). 
We denote the set of states by $\mathcal{S(H)}(\subset \mathcal{L(H)})$. 
An observable is described by \emph{Positive Operator-Valued Measure (POVM)} and in this paper, we consider only POVMs with finite number of outcomes. 
We denote the set of POVMs %(observables) 
each having outcome set $\Omega$ and acting on Hilbert space $\mathcal{H}$  by 
$\mathcal{O}(\Omega,\mathcal{H})$. % $(\subset \mathcal{L(H)})$. 
POVM $A \in \mathcal{O}(\Omega,\mathcal{H})$ is a collection of effects, i.e. operators $A_i \in\mathcal{L(H)}$ obeying $0\leq A_i \leq I$ $\forall i \in \Omega$, such that the normalization condition $\sum_{i \in \Omega} A_{i}=I$ holds, with $I$ being the identity operator on $\mathcal{H}$. 
The probability of observing effect $A_i$ when the system is in state $\rho$ is given by the famous Born rule as $p_i=Tr[\rho A_i]$. Indeed, in quantum theory, one can only predict the probabilities of measurement outcomes, not the individual occurrences of specific outcomes. Nevertheless, the obtained outcome can be in each round classically post-processed by stochastic transformation, effectively leading to a different overall POVM. 
This procedure is known as POVM postprocessing \cite{cleanpovm} and can be formalized as follows. 

\begin{definition}   \label{def:1}
 POVM ${B} \in \mathcal{O}(\Omega_{B},\mathcal{H})$ is called a postprocessing of POVM  ${A} \in \mathcal{O}(\Omega_{A},\mathcal{H})$, denoted by ${A} \rightarrow {B}$, if there exists a stochastic 
 matrix $\nu_{kj}$ $k\in \Omega_{A}, j \in \Omega_{B}$, i.e. $\nu_{kj}\geq 0$, $ \forall k, \forall j$, satisfying $\sum_{j \in \Omega_B} \nu_{kj}=1$ for all $k \in \Omega_A$, 
 such that
   \begin{equation}
   \label{postproc_POVM}
       {B}_j=\sum_{k \in \Omega_A} \nu_{kj}{A}_k \quad \forall j\in\Omega_B.
   \end{equation}
\end{definition}
We say observables ${A}$ and ${B}$ are \textit{postprocessing equivalent} if $A \rightarrow {B}$ and ${B} \rightarrow {A}$ , denoted by ${A} \leftrightarrow {B}$. 
If the postprocessing matrix contains only values %$0$, $1$
%zero or one, then the postprocessing is actually called \emph{coarse-graining}, since each outcome $k$ is deterministically mapped to some outcome $j$ and pre-images (with respect to postprocessing matrix) for different values of $j$ are forming disjoint sets. In the special case of coarse-graining, when the number of outcomes stays the same  Thus, coarse-graining 
zero or one,  each outcome $k$ is deterministically mapped to some outcome $j$ and pre-images (with respect to postprocessing matrix) for different values of $j$ are forming disjoint sets. %In the special case when the number of outcomes %stays the same ($|\Omega_A|=|\Omega_B|$) 
%is not lowered ($|\Omega_A|\leq|\Omega_B|$),  we call such postprocessing a relabeling \cite{relab1}. 
If number of outcomes is reduced ($|\Omega_A|>|\Omega_B|$) we call such postprocessing a
coarse-graining, %\tcr{[ref]}, 
since several outcomes are joined to a single outcome and the discrimination capability of the POVM decreases. 
%Thus, coarse-graining then the postprocessing is actually called \emph{coarse-graining}, since

%\tcb{Now, we will discuss the concepts of transformation of quantum states.}  
Next, we discuss concepts that are used to describe evolution of quantum states.
Let $\mathcal{H}$ and $\mathcal{K}$ be two Hilbert spaces. General probabilistic transformations of quantum states, are termed \textit{quantum operations}, 
%are represented as 
and they
are described by \textit{Completely Positive Trace-Non-Increasing (CPTNI)} Maps from $\mathcal{L(\mathcal{H}})$ to $\mathcal{L(\mathcal{K})}$. 
% A \textit{Quantum channel} is a special transformation of quantum states that are deterministic. 
A \textit{Quantum channel} is such a quantum operation, which 
is deterministic in nature,
%makes deterministic transformations of quantum states, 
and it corresponds to  
%They are represented as 
\textit{Completely Positive Trace-Preserving (CPTP)} Map from $\mathcal{L(\mathcal{H}})$ to $\mathcal{L(\mathcal{K})}$. We denote the set of quantum channels as $Ch(\mathcal{H,K})$. 
Quantum operations have a well-known representation in an operator-sum form, known as the Kraus representation \cite{Kraus}: a linear map $\mathcal{N}: \mathcal{L(\mathcal{H}}) \mapsto \mathcal{L(\mathcal{K}}) $ is a quantum operation if and only if there exists bounded operators (also called Kraus operators) $K_m$, $m=1,2,\dots, r_{\mathcal{N}}$, such that $\mathcal{N}(\rho)=\sum_{m} {K_{m}}{\rho}{K_{m}^{\dag}}$ and $\sum_m {K_m^{\dag}K_m} \leq I$. Here, the smallest achievable number $r_{\mathcal{N}} \in \mathbb{N}$ is called the Kraus rank of $\mathcal{N}$. If $\sum_m {K_m^{\dag}K_m} = I$, the operation is a quantum channel. It is a well-known fact that a composition of quantum operations is also a quantum operation. 
Suppose that the quantum system is in state $\rho\in \mathcal{S(H)}$. The probability that quantum operation $\mathcal{N}$ is realized on the system is given by $p_{\mathcal{N}}=Tr[\mathcal{N}(\rho)]$ and the normalized resulting state of the system is $\mathcal{N}(\rho)/p_{\mathcal{N}}$.
While quantum operations describe the mapping between initial and final state of the system they do not include any description of how such operation was actually implemented. 

%The concept of a quantum instrument describes a measurement procedure that considers the post-measurement state, along with capturing the measurement statistics.
\textit{Quantum instrument} describes a measurement procedure in such a way that it captures not only the measurement statistics, but it %also 
describes the post-measurement state as well. In order to achieve this, every possible outcome of an instrument is associated with a quantum operation. Thus, also quantum instruments do not include details of how measurement was performed, just its overall effect on states. 
%\tcr{?} It is such a collection of quantum operations, which, for any possible state of the system, happen with unit probability. \tcr{?}
%Mathematically, a quantum instrument 
Formally, quantum instrument can be defined as \cite{Davies, measurement, ziman}, as a mapping $\mathcal{I}:i\mapsto \mathcal{I}_i$ from a finite outcome set $\Omega$ to a set of quantum operations from $\mathcal{L(\mathcal{H}})$ to $\mathcal{L(K})$ such that $\phi^{\mathcal{I}}:=\sum_{i\in \Omega}\mathcal{I}_{i} \in Ch(\mathcal{H},\mathcal{K})$ is a quantum channel. This condition indeed guarantees that the total probability of getting an outcome is one, i.e. $\sum_{i\in \Omega} %p_i
Tr[\mathcal{I}_i(\rho)]=Tr[\phi^{\mathcal{I}}(\rho)]=Tr[\rho]=1$. We will denote quantum channel $\phi^{\mathcal{I}}$ as the \emph{induced channel} of the quantum instrument $\mathcal{I}$. 
%For an input state $\rho \in \mathcal{S(H)}$, the unnormalized conditional output state is given by $\mathcal{I}_x(\rho)$, where $x$ is the obtained measurement outcome. 
It is useful to define also the \emph{induced POVM} $A^{\mathcal{I}}\in {\mathcal{O}(\Omega,\mathcal{H})}$ of an instrument $\mathcal{I}$
%, represented as  of the instrument $\mathcal{I}$ where 
by %the probability of obtaining outcome is governed by 
equation $Tr[A^{\mathcal{I}}_i\rho]=Tr[\mathcal{I}_i(\rho)]$. In the Kraus representation, one can represent the $i^{th}$ operation of the instrument as
$\mathcal{I}_i(\rho)=\sum_m K_{i,m} \rho K_{i,m}^{\dag}$ and its induced POVM can be represented as $A^{\mathcal{I}}_i=\sum_m  K_{i,m}^{\dag} K_{i,m}$. 
We denote the set of instruments, which have outcome set $\Omega$ and map from $\mathcal{L(H)}$ to $\mathcal{L(K)}$, by $Ins(\Omega,\mathcal{H},\mathcal{K})$.

Let us now discuss how quantum channels and instruments can be realized using  ancillary qubits, all initialized in state $\ket{0}$, as additional resources.
Suppose we want to realize a quantum channel $\mathcal{E}\in Ch(\mathcal{H}_2^{\otimes n},\mathcal{H}_2^{\otimes m})$ with $n$-qubit input and $m$-qubit output Hilbert space.
The key parameter for the required resources is, $r_{\mathcal{E}}$, the Kraus rank of the channel $\mathcal{E}$, since the dimension of the discarded (or ignored) subsystem (see \cite{ziman}) needs to be at least $r_{\mathcal{E}}$. Consequently, the number of qubits at the end of a channel dilation must be at least $m+\lceil \log r_{\mathcal{E}} \rceil$. Thus, when starting the dilation we need $\max{\{0,m+\lceil \log r_{\mathcal{E}} \rceil-n}\}$ ancillary qubits. Channel $\mathcal{E}$ is then achieved by performing suitable unitary transformation on the system plus ancillary qubits and ignoring the suitable part of such extended system afterwards.

Similarly, if we want to realize quantum instrument $\mathcal{I}\in Ins(\Omega,\mathcal{H}_2^{\otimes n},\mathcal{H}_2^{\otimes m})$ just by: i) adding ancillary qubits, ii) unitary evolution and  iii) projective measurements on some of the qubits, then (assuming finite outcome space $\Omega$) what matters is, $r_{\mathcal{I}}=\sum_{i\in \Omega} r_i$, the sum of the Kraus ranks of the quantum operations $\mathcal{I}_i$ defining the instrument. Analogically, in this case we need $\max{\{0,m+\lceil \log r_{\mathcal{I}} \rceil-n}\}$ ancillary qubits.

Quite important example of an instrument, which we will often utilize also in this work, is a Lüders instrument \cite{luders}.
\begin{definition} \label{luders def}
Lüders instrument $\mathcal{I}^A \in Ins(\Omega,\mathcal{H},\mathcal{H})$ of POVM $A\in \mathcal{O}(\Omega,\mathcal{H})$ is defined as
\begin{align}
\mathcal{I}^A_i(\rho)=\sqrt{A_i}\rho \sqrt{A_i} \quad \forall \rho \in \mathcal{S(H)}, \;\forall i\in \Omega.
\end{align}
\end{definition}
Instrument $\mathcal{I}^A$ represents the least disturbing (see e.g. \cite{hayashi1}) way of measuring the POVM $A$ from the point of view of the introduced state change. It's Kraus rank, $r_{\mathcal{I}^A}=\sum_{i\in \Omega} 1=|\Omega|$, is the number of elements of the outcome set $\Omega$. This minimal possible Kraus rank  $r_{\mathcal{I}^A}$ reflects in small (smallest possible in a sense) requirement on the number of ancillary qubits.

The concept of a quantum instrument is quite versatile, since it can also represent a quantum state, POVM, or a channel as its special case. In particular, Instruments with one element outcome set uniquely represent channels. On the other hand, instruments with one-dimensional output space are in one-to-one correspondence with POVMs. Finally, instruments with one dimensional input space and one element outcome set are equivalent to quantum states on the output Hilbert space.

The composition of two quantum instruments is well defined if the output Hilbert space of the first one is the same as the input Hilbert space of the second one. The newly defined object is again a quantum instrument with outcome space formed as the cartesian product of the original outcome spaces.

Let us note that the induced POVM and induced channel are unique to a given instrument. 
On the other hand, different instruments can have the same induced POVM, the same induced channel, or even both.  

If a system undergoes a sequence of measurements or any other form of evolution, then the composition of instruments is the most general description of this situation. Suppose an instrument $\mathcal{I}$ was performed on the system. We might ask what kind of interventions we might perform to achieve a different overall instrument $\mathcal{J}$ describing the total evolution of the system from the beginning to the end. The following definition was introduced in \cite{lm1}. 
 
 %by Leppäjärvi and Sedl\'ak [ref!!!] answers this question.

\begin{definition}
\label{def:postproc}
We say an instrument $\mathcal{J} \in Ins(\Omega,\mathcal{H},\mathcal{K})$ can be postprocessed to instrument
$\mathcal{T}\in Ins(\Lambda,\mathcal{H},\mathcal{V})$ and we denote it as $\mathcal{J}\rightarrow \mathcal{T}$ if there exist instruments $\mathcal{R}^{i} \in Ins(\Lambda,\mathcal{K},\mathcal{V})$ $\forall i\in\Omega$ such that $\mathcal{T}_j=\sum_{i\in\Omega} \mathcal{R}^i_j \circ \mathcal{J}_i$ $\forall j\in \Lambda$.
\end{definition}

Instrument postprocessing introduces a partial order among equivalence classes of instruments. If $\mathcal{J}\rightarrow \mathcal{T}$ and $\mathcal{T}\rightarrow \mathcal{J}$ hold simultaneously we say instrument $\mathcal{J}$ and $\mathcal{T}$ are postprocessing equivalent. 

%In the later parts of the  
In the rest of the manuscript, we will often consider similar compositions of more than two 'adaptively chosen' instruments, so formalizing such structures will be useful.

\subsection{Adaptive Sequence of Quantum Instruments (ASI)}

Our aim is to define an $N$ step \emph{adaptive sequence of instruments} %(further abbreviated as ASI) %$\mathcal{Q}^N$ 
$\mathcal{Q}$
in which choice of each instrument in %$\mathcal{Q}^N$ 
the sequence depends directly on the classical outcome of the preceding instrument. %in the sequence. 
At first sight it might seem restrictive to have dependence only on one preceding outcome. But, we will %show 
illustrate later that actually arbitrary type of outcome dependence can be recast in this way via suitable choices of instruments and their outcome sets. 
%\tcr{[erase] Furthermore, it is 
%\tcb{a convenient notation for us to use. }}
%notationally convenient solution for us to use. 
%\tcb{old one-} First of all, it is a notationally convenient choice and we will illustrate later that actually arbitrary type of outcome dependence can be recast in this way via suitable choices of instruments and their outcome sets.
%This why 
Thus, we propose the following definition. 
\begin{definition}
\label{def:ASI}
Let $\{\mathcal{H}_0,\cdots,\mathcal{H}_N\}$ be an $(N+1)-$tuple of %input 
Hilbert spaces, $\Omega_0\equiv\{1\}$ and $\{\Omega_1,\cdots,\Omega_N\}$ be a $N-$tuple of outcome spaces.
We denote intermediate outcome at step $k$ as $a_k$ and we use upper index signalize step number and to distinguish different instruments.
%, while lower index to specifies quantum operation of an instrument.
%objects of the same type.}
Thus, let us %also 
choose sets of instruments 
%$\mathcal{Q}^{k} \equiv \{\mathcal{I}^{k,a_{k-1}}\}$ $\subset Ins(\Omega_k, \mathcal{H}_{k-1},\mathcal{H}_k)$ $\forall a_{k-1} \in \Omega_{k-1}$ $\forall k, 1\leq k \leq N$ 
$\mathcal{Q}^{k} \equiv \{\mathcal{I}^{k,a_{k-1}}\}_{a_{k-1} \in \,\Omega_{k-1}}\subset Ins(\Omega_k, \mathcal{H}_{k-1},\mathcal{H}_k)$ $\forall k$, $1\leq k \leq N$.
%(where we define $\Omega_0\equiv\{1\}$). 
The $N$-step adaptive sequence of instruments $\mathcal{Q}$ is an 
$N$-tuple of all sets $\mathcal{Q}^{k}$, 
%\mathcal{Q}$
i.e. 
$\mathcal{Q}\equiv ({\mathcal{Q}^1,\cdots \mathcal{Q}^N})$.
%i.e. by $N$-tuple of all sets $\mathcal{Q}^{k}$.
\end{definition}

One can see a composition of $N$ fixed instruments $\mathcal{I}^{1},\ldots, \mathcal{I}^{N}$ as a special case of %a general %adaptive sequence of instruments
ASI. This situation corresponds to choosing all the elements of the set $\mathcal{Q}^k$ the same, i.e., $\mathcal{I}^{k,a_{k-1}}=\mathcal{I}^{k}$ $\forall a_{k-1} \in \Omega_{k-1}$, $\forall k=1,\ldots,N$. 

Let us introduce the Total Instrument by the following definition. 

\begin{definition}
\label{def:TotalI}
Let $\mathcal{Q}$ be $N-$step adaptive sequence of instruments. Then we define, a \emph{Total Instrument} $\mathcal{T}$ $\in Ins(\Omega_N,\mathcal{H}_0,\mathcal{H}_N)$ of $\mathcal{Q}$,  as 
\begin{equation}
      \label{Total instrument defined}
    \mathcal{T}_{a_N}=\sum_{a_{N-1}\in\Omega_{N-1}}\!\cdots\sum_{a_1 \in \Omega_{1}}^{} \biggl( \mathcal{I}^{N,a_{N-1}}_{a_{N}} \circ \cdots \circ \mathcal{I}^{1,1}_{a_1} \biggl) \, 
\end{equation}
   
\end{definition}

%Let us introduce the Total Instrument, which describes the 
Total Instrument describes the 
overall instrument corresponding to the adaptive sequence of instruments $\mathcal{Q}$ if we forget (or ignore) the intermediate classical outcomes $a_1,\ldots,a_{N-1}$. This approach is useful when we want the adaptive sequence of instruments to (deterministically) perform a fixed instrument with outcome set $\Omega_N$. 
Thus, the Total Instrument $\mathcal{T}$ has the same outcome space as instruments in $\mathcal{Q}^N$ and 
it slightly generalizes the idea of instrument postprocessing 
(see Definition \ref{def:postproc}) 
%\cancel{, introduced in \cite{lm1}} 
into a multi-step scenario.
%Leevi Leppäjärvi, Michal Sedlák 
\begin{figure}
\centering
  \includegraphics[width=8.2cm]{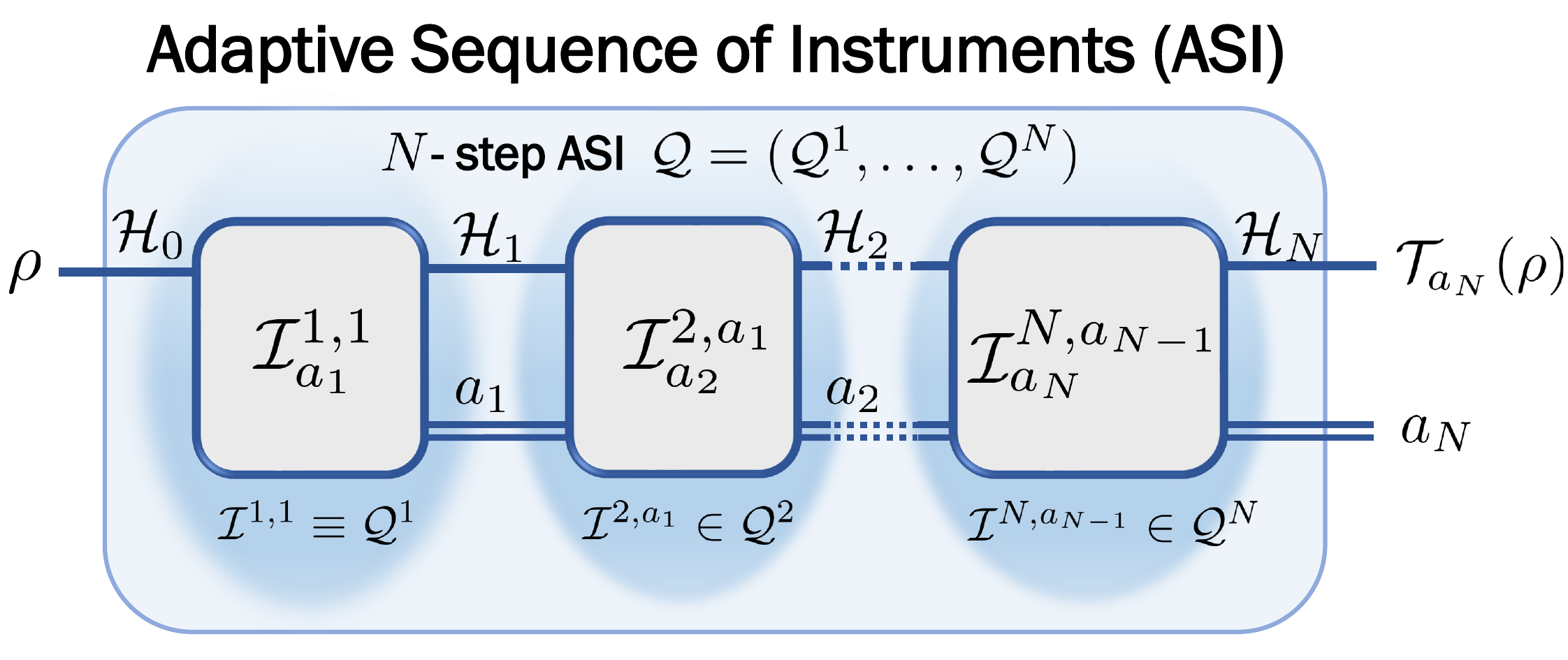}
  \captionof{figure}{Illustration of $N$-step Adaptive sequence of instruments defined in Definition \ref{def:ASI}.Quantum instrument in the $k$-th step is chosen based on the outcome $a_{k-1}$ obtained in the previous step/steps. For explanation on generality of outcome dependence see discussion below Remark \ref{remarkASI1} in the maintext.}
  %\label{FIG_1}
%\end{figure} labeling problem remains 
%\begin{figure}
%\vspace{0.5em}
 %   \centering
 \end{figure}

For further work %considerations 
it will be useful to define terms describing parts of the sequence of instruments which are split into the first $N-1$ steps and the last step.

\begin{definition}
For %an adaptive sequence of instruments 
ASI $\mathcal{Q}$, we define the Initial Instrument %\cancel{(InI)}
$\mathcal{J}\in Ins(\Omega_{N-1},\mathcal{H}_0,\mathcal{H}_{N-1})$ as 
%$\mathcal{J}^{a_1}:=\mathcal{I}^{1,1}$ 
the Total Instrument of the $N-1$ step ASI %adaptive sequence of instruments 
$({\mathcal{Q}^1,\ldots ,\mathcal{Q}^{N-1}})$, i.e. we have by definition quantum operations 
%\noindent
$\mathcal{J}_{a_{N-1}}=\sum_{a_{N-2}\in\Omega_{N-2}}\!\cdots\sum_{a_1 \in \Omega_{1}}^{} \mathcal{I}^{N-1,a_{N-2}}_{a_{N-1}} \circ \cdots \circ \mathcal{I}^{1,1}_{a_1}$,
and we define the set of Residual Instruments, %\cancel{(RIs)}, 
denoted by $\RI^{a_{N-1}}:=\mathcal{I}^{N,a_{N-1}} \in Ins(\Omega_N,\mathcal{H}_{N-1},\mathcal{H}_N)$ such that
\begin{equation} \label{eq:TTTT}
     \mathcal{T}_{a_N} =\sum_{a_{N-1}\in \Omega_{N-1}} \RI^{a_{N-1}} \circ \mathcal{J}_{a_{N-1}}
\end{equation} 
%holds. \\
\end{definition}

\begin{remark}
\label{remarkASI1}
%Let us note 
It is easy to notice that any $2$-step ASI $\mathcal{Q}$ can be also seen as a particular realization of an instrument postprocessing relation ($\mathcal{I}^{1,1}\rightarrow \mathcal{T}$) of initial instrument $\mathcal{I}^{1,1}$ of $\mathcal{Q}$ to it's Total instrument $\mathcal{T}$. Indeed, by Definition \ref{def:TotalI} we have
$\mathcal{T}_{a_2}=\sum_{a_{1}\in\Omega_{1}} \mathcal{I}^{2,a_{1}}_{a_{2}} \circ \mathcal{I}^{1,1}_{a_1}$ $\forall a_2\in\Omega_2$, which coincides with Definition \ref{def:postproc}.
\end{remark}

Let us now return to a possible multi-step dependence of the instrument choice. 
We will illustrate the in-built flexibility of the 
%adaptive sequence of instruments 
ASI in a three-step scenario, but the method works analogically in a general case. 
Suppose that in an experiment we perform first instrument $\mathcal{E}\in Ins(\Lambda_1,\mathcal{H}_0,\mathcal{H}_1)$ obtaining outcome $b_1$. Based on $b_1$ we decide to apply instrument $\mathcal{F}^{b_1}\in Ins(\Lambda_2,\mathcal{H}_1,\mathcal{H}_2)$ and obtain outcome $b_2$. Finally, based on outcomes $b_1$ and $b_2$ (i.e. based on $(b_1,b_2)$) we carry out $\mathcal{G}^{(b_1,b_2)}\in Ins(\Lambda_3,\mathcal{H}_2,\mathcal{H}_3)$ and obtain outcome $b_3$. We see that the choice of the last instrument depends on the previous two results. Our goal now is to describe how this situation can be equivalently described by the $3$ step %adaptive sequence of instruments. 
ASI. 
We choose $\Omega_1=\Lambda_1$, $\Omega_2=\Lambda_1 \times \Lambda_2$, $\Omega_3=\Lambda_3$. 
We define %$\mathcal{I}^{1,1}=\mathcal{E}$, 
for all
$a_1\in \Omega_1$, $a_2=(b_1,b_2) \in \Omega_2$, $a_3\in \Omega_3$
\begin{align}
\mathcal{I}^{1,1}_{a_1}&=\mathcal{E}_{a_1} \\
\mathcal{I}^{2,a_1}_{(b_1,b_2)}&=
  \left \{
                         \begin{aligned}
                           & \mathcal{F}^{b_1}_{b_2}
                           && \mbox{ if } b_1=a_1 \\
                           & 0 
                           && \mbox{ if } b_1 \neq a_1
                        \end{aligned}  \right.\\
  \nonumber \\
  \mathcal{I}^{3,(b_1,b_2)}_{a_3}&= \mathcal{G}^{(b_1,b_2)}_{a_3} \;.
\end{align}
We see that by enlarging the outcome set of the instruments %$\mathcal{Q}^k$ 
in the $k$-th step these instruments can be always defined in such a way that their outcomes encode obtained outcomes of (potentially) all previous steps, thus allowing for arbitrary dependence on the previous outcomes.

\section{Sequential realization of quantum instruments} \label{main}

Suppose we obtain instrument $\mathcal{T}$ as the composition of instruments $\mathcal{J}$ and $\RI$. If we are asked to implement instrument $\mathcal{T}$, we can either use some dilation technique for it, or we could also consider creating a different dilation that can be built by combining dilations of instruments $\mathcal{J}$ and $\RI$. This might be useful if we want to minimize the use of resources such as ancillary systems. In particular, if we are able to reset (reinitialize) the ancillary system, then the same qubits can be used for realizing the dilation of instrument $\mathcal{J}$ and for instrument $\RI$. In this way, one can trade spatial resources (number of qubits) for temporal resources (number of steps to realize the task), since the dilations of  $\mathcal{J}$ and $\RI$ would be done one after another. %The above 
The example above %in the previous paragraph %section 
shows that being able to realize an instrument sequentially might be useful for its practical implementation. The goal of this section is to explore how a fixed instrument $\mathcal{T}$ can be realized by an adaptive sequence of instruments. We will also refer to $\mathcal{T}$ as Total Instrument. 
We could also say we want to decompose the Total Instrument into an ASI. %adaptive sequence of instruments. 
The following theorem shows under which conditions we can always do that.
%In this section, we will investigate whether, given a single instrument, it is possible to realize it as an adaptive sequence of instruments. Suppose we are given an instrument, and let us refer to it as Total Instrument. 
%This question is of theoretical interest as it improves our understanding of the underlying framework of quantum instruments. 
%By realizing an instrument sequentially, we tend to lower the implementational complexity of the measurement. This means the obtained results can help us devise new ways to implement complex quantum measurements with limited spatial resources. \tcr{It turns out that we can, in fact, realize a quantum instrument into such constituent instruments. First, we will prove our claim for a scenario where an instrument can be realized as two constituent instruments.}\\
%We begin by formulating a realization theorem for our claim.

\begin{theorem} \label{main theorem}
Given a Total Instrument $\mathcal{T}$ $\in Ins(\Omega,\mathcal{H},\mathcal{K})$ and any POVM $B \in \mathcal{O}(\Omega_B,\mathcal{H})$ such that the induced POVM of $\mathcal{T}$ can be postprocessed to POVM $B$ 
(i.e. $A^{\mathcal{T}}\rightarrow B$), then there exists an adaptive sequence of an Initial Instrument $\mathcal{J}$ $\in Ins(\Omega_B,\mathcal{H},\mathcal{H})$ and a set of Residual Instruments $\RI^{j} \in Ins(\Omega,\mathcal{H},\mathcal{K})$ $\forall j \in \Omega_B$ such that
%the total instrument can be realized as a concatenation 
\begin{equation} \label{eq:totallll}
    \mathcal{T}_k= \sum_{j\in \Omega_B} \RI^{j}_k \circ \mathcal{J}_{j} \quad \quad\forall k\in \Omega
    \end{equation}
   
where the Initial Instrument $\mathcal{J}_{j}(\rho)=\sqrt{B_{j}}\rho\sqrt{B_{j}}, \forall \rho \in \mathcal{S(H)}$ is the Lüders instrument of POVM $B$.
 
\end{theorem}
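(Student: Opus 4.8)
The plan is to construct the Residual Instruments explicitly from the data we are given — the Total Instrument $\mathcal{T}$, the Lüders instrument $\mathcal{J}$ of $B$, and the stochastic matrix $\nu$ witnessing $A^{\mathcal{T}}\rightarrow B$ — and then verify that the composition reproduces $\mathcal{T}$. Write $A^{\mathcal{T}}_k=\sum_m K_{k,m}^\dagger K_{k,m}$ from a Kraus representation $\mathcal{T}_k(\rho)=\sum_m K_{k,m}\rho K_{k,m}^\dagger$. Since $A^{\mathcal{T}}\rightarrow B$, there is a stochastic matrix $\nu_{kj}$ (indices $k\in\Omega$, $j\in\Omega_B$) with $B_j=\sum_k \nu_{kj}A^{\mathcal{T}}_k$. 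The first key step is to notice that the operator $B_j$ appears under the square root in $\mathcal{J}_j$, and that after applying $\mathcal{J}_j$ the state has been ``multiplied'' by $\sqrt{B_j}$ on both sides; I want the Residual Instrument $\RI^j$ to effectively ``undo'' this $\sqrt{B_j}$ and re-apply the correct Kraus operators $K_{k,m}$ of $\mathcal{T}$.

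Concretely, the second step is to define the Kraus operators of $\RI^j$ as something like $L^j_{k,m}=\sqrt{\nu_{kj}}\,K_{k,m}\,\sqrt{B_j}^{\,+}$, where $\sqrt{B_j}^{\,+}$ denotes the Moore–Penrose pseudoinverse (or, to avoid support issues, one restricts to the support of $B_j$ and handles the kernel separately, e.g. by routing it into one fixed outcome). One then checks trace-non-increase: $\sum_{k,m}(L^j_{k,m})^\dagger L^j_{k,m}=\sqrt{B_j}^{\,+}\big(\sum_k \nu_{kj}\sum_m K_{k,m}^\dagger K_{k,m}\big)\sqrt{B_j}^{\,+}=\sqrt{B_j}^{\,+}\big(\sum_k\nu_{kj}A^{\mathcal{T}}_k\big)\sqrt{B_j}^{\,+}=\sqrt{B_j}^{\,+}B_j\sqrt{B_j}^{\,+}=P_j$, the projector onto the support of $B_j$, which is $\leq I$, so each $\RI^j$ is a valid instrument (completing it to trace-preserving if desired by adding a term supported on $I-P_j$ into some outcome). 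The third step is the composition check: for each $k\in\Omega$,
\begin{align}
\sum_{j\in\Omega_B}\RI^j_k\circ\mathcal{J}_j(\rho)
&=\sum_{j,m}L^j_{k,m}\sqrt{B_j}\,\rho\,\sqrt{B_j}\,(L^j_{k,m})^\dagger\nonumber\\
&=\sum_{j,m}\nu_{kj}\,K_{k,m}\,(\sqrt{B_j}^{\,+}\sqrt{B_j})\,\rho\,(\sqrt{B_j}\sqrt{B_j}^{\,+})\,K_{k,m}^\dagger\nonumber\\
&=\sum_{m}\Big(\sum_j\nu_{kj}\Big)K_{k,m}\rho K_{k,m}^\dagger=\mathcal{T}_k(\rho),
\end{align}
using that $\sum_j\nu_{kj}=1$ and that $\sqrt{B_j}^{\,+}\sqrt{B_j}=\sqrt{B_j}\sqrt{B_j}^{\,+}=P_j$ acts as identity on the range of $\sqrt{B_j}$, so that $\sqrt{B_j}P_j\cdots P_j\sqrt{B_j}$ collapses correctly.

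The main obstacle I anticipate is precisely the support/pseudoinverse subtlety: $B_j$ need not be invertible, so $L^j_{k,m}$ as written only ``sees'' the component of $\rho$ inside $\mathrm{supp}(B_j)$, and I must argue that the discarded kernel component never contributes — which follows because in the composition $\mathcal{J}_j$ already multiplies by $\sqrt{B_j}$, whose range is $\mathrm{supp}(B_j)=\mathrm{ran}(P_j)$, so $P_j$ acts trivially there. One must also make sure each $\RI^j$ is not merely trace-non-increasing but can be chosen trace-preserving (as a genuine instrument's induced channel demands); this is handled by dumping the defect $I-P_j$ into an arbitrary fixed outcome via an isometry, which does not affect the composition identity since that defect is annihilated by the preceding $\sqrt{B_j}$. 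A cleaner alternative avoiding pseudoinverses entirely is to first reduce, via postprocessing of the Lüders instrument by a fixed unitary/isometry, to the case where $B_j$ is replaced by its ``canonical'' form, but the pseudoinverse route is the most direct and I expect the write-up to take it, being careful to state the kernel-routing convention once and for all.
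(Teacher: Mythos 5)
Your construction is the same as the paper's: Kraus operators $\sqrt{\nu_{kj}}\,K_{k,m}\,B_j^{-1/2}$ for the residual instruments, the same normalization computation yielding the projector $P_j$ onto $\mathrm{supp}(B_j)$, and completion to trace preservation on the kernel of $B_j$. However, the one genuinely non-trivial step of the verification is asserted with a justification that does not work. After substituting, the composition gives
\begin{equation*}
\sum_{j,m}\nu_{kj}\,K_{k,m}\,P_j\,\rho\,P_j\,K_{k,m}^\dagger ,
\end{equation*}
and to conclude that this equals $\sum_m K_{k,m}\rho K_{k,m}^\dagger$ you need $K_{k,m}P_j=K_{k,m}$ for every $j$ with $\nu_{kj}>0$. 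Your stated reason --- that $P_j$ acts as the identity on the range of $\sqrt{B_j}$, so the preceding $\sqrt{B_j}$ makes the projectors harmless --- only shows that the composition depends on $\rho$ through $P_j\rho P_j$; it does not show that $\mathcal{T}_k$ itself depends on $\rho$ only through $P_j\rho P_j$, which is exactly what is being claimed. The missing ingredient is the support inclusion: since $B_j=\sum_{k'}\nu_{k'j}A^{\mathcal{T}}_{k'}\geq \nu_{kj}\sum_m K_{k,m}^\dagger K_{k,m}$, and the support of a sum of positive semidefinite operators contains the support of each summand, one gets $\mathrm{supp}(K_{k,m}^\dagger K_{k,m})\subseteq\mathrm{supp}(B_j)$ whenever $\nu_{kj}>0$, hence $K_{k,m}P_j=K_{k,m}$. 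This is precisely the content of Eq.~(\ref{eq:tkjm_res}) and Lemma~\ref{lmm:supsum} in the paper; without it the chain of equalities in your display is not justified.

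A secondary point: completing $\RI^j$ to a trace-preserving instrument by routing $I-P_j$ through ``an isometry'' into a fixed outcome presupposes $\dim\ker(B_j)\leq\dim\mathcal{K}$, which can fail when $\dim\mathcal{H}>\dim\mathcal{K}$; in that case one must use a general channel (several Kraus operators) from the kernel subspace into $\mathcal{K}$, which is why the paper treats the two dimension regimes separately. This is easily repaired, unlike the support-inclusion step, which is the actual heart of the proof.
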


\begin{proof}
%Here, we give a constructive proof of our proposed theorem. 
Due to postprocessing relation $A^{\mathcal{T}} \rightarrow B$ (see Eq. \ref{postproc_POVM}) there exist postprocessing matrix $\nu_{kj}$ $k\in \Omega, j \in \Omega_{B}$.
Let $\{{T^{\prime}}_{k,m}\}$ be the set of the Kraus operators of the Total instrument, 
i.e. 
\begin{equation}
    \mathcal{T}_{k}(\rho)= \sum_{m}{{T}^{\prime}}_{k,m}\;\rho\;{{T}^{\prime}}_{k,m}^{\dag} 
\end{equation}
where $\sum_{k\in \Omega} \sum_{m}{{T}^{\prime}}_{k,m}^{\dag}{T^{\prime}}_{k,m}=I_\mathcal{H}$. 
We choose another set of Kraus operators $\{{T}_{k,jm}\}$ of $\mathcal{T}$, such that the two sets of Kraus operators are related by 
\begin{align}
\label{def:Kkjm}
{T}_{k,jm}= \sqrt{\nu_{kj}}\;{T}^{\prime}_{k,m}
\end{align}
where %the indices take values from the following ranges 
$k\in \Omega, j\in\Omega_B$, and $m=1,\ldots,r_{\mathcal{T}_k}$. 
%runs from one to Kraus rank of quantum operation $\mathcal{T}_k$.
It is straightforward to verify that, indeed, the action of both sets of Kraus operators is the same, i.e. 
%the following equations
\begin{align}
\label{eq:Tkrho}
\mathcal{T}_{k}(\rho) &= \sum_{j,m}{{T}}_{k,jm}\;\rho\; {{T}}_{k,jm}^{\dag} \quad \forall k\in \Omega, \rho \in \mathcal{S(H)} 
\end{align}
and the normalization condition is as well satisfied
\begin{align}
\label{eq:normt}
\sum_{k,j,m} &{{T}_{k,jm}}^{\dag}{{T}_{k,jm}}= \sum_{k,m} \;\sum_j \nu_{kj}\;{{T}^{\prime}}_{k,m}^{\dag} {T^{\prime}}_{k,m}=I_\mathcal{H} 
\end{align}
We observe that using equations (\ref{postproc_POVM}) and (\ref{def:Kkjm}) POVM elements $B_j$ can be expressed via Kraus operators $T_{k,jm}$ as
\begin{align}
\label{eq:BjbyT}
B_j=\sum_{k,m} \nu_{kj} {T}^{\prime\,\dag}_{k,m} {T}^\prime_{k,m}= \sum_{k,m} {T}^{\dag}_{k,jm} {T}_{k,jm} .
\end{align}
Let us define a projector 
$\Pi_j$ %${S}_{B_{a_i}}$
onto the support of operator $B_j$.
We denote the Moore-Penrose pseudo-inverse \cite{moo, pen}  of $B_{j}^{\frac{1}{2}}$ by $B_{j}^{-\frac{1}{2}}$. Thus, 
\begin{equation} \label{projector}
   {\Pi}_{j}B_{j}^{\pm\frac{1}{2}}=B_{j}^{\pm\frac{1}{2}}.
\end{equation} 
%%%% new order o things!!!

We can represent operators ${T}_{k,jm}$ in singular value decomposition 
\begin{equation} \label{singular}
    {T}_{k,jm}=\sum_{v}\lambda_v^{k,jm} \ket{f_v^{k,jm}} \bra{e_v^{k,jm}}
\end{equation}
where for every $k,j,m$, vectors $\ket{e^{k,jm}_v}, \ket{f^{k,jm}_v}$,where $v=1,\ldots r_{k,jm}$ are chosen from suitable orthonormal bases $\{\ket{e^{k,jm}_v}\}^{d_{\mathcal{H}}}_{v=1}$ and 
$\{\ket{f^{k,jm}_v}\}^{d_{\mathcal{K}}}_{v=1}$ in 
$\mathcal{H}$ and $\mathcal{K}$ 
respectively, and $\lambda_v^{k,jm}$ are %non-negative 
positive numbers. Using Eq. (\ref{eq:BjbyT}) we can write
$ B_{j}=\sum_{k,m,v}({\lambda_v^{k,jm}})^2 \ket{e_v^{k,jm}} \bra{e_v^{k,jm}}$,
%\begin{align}
% B_{j}=& \sum_{k,m}({\lambda_v^{k,jm}})^2 \ket{e_v^{k,jm}} \bra{e_v^{k,jm}} \nonumber\\
%\end{align}
which can be written as the sum of two positive semidefinite operators, i.e.
\begin{align}\label{eq:Bj_as_sum} 
%=& \sum_{v,k,m}({\lambda_v^{k,jm}})^2 \ket{e_v^{k,jm}} \bra{e_v^{k,jm}} \nonumber\\
 B_{j}=& \left( \sum_{v, k',m'}({\lambda_v^{k',jm'}})^2 \ket{e_v^{k',jm'}} \bra{e_v^{k',jm'}} \right. \nonumber\\
&\left. -\sum_{v}({\lambda_v^{k,jm}})^2 \ket{e_v^{k,jm}} \bra{e_v^{k,jm}} \right)  \\
&+ \sum_{v}({\lambda_v^{k,jm}})^2 \ket{e_v^{k,jm}} \bra{e_v^{k,jm}} \nonumber
\end{align}
We can see that the support of $\sum_{v}({\lambda_v^{k,jm}})^2 \ket{e_v^{k,jm}} \bra{e_v^{k,jm}}$ is same as the input range of ${T}_{k,jm}$ which means the input range of ${T}_{k,jm}$ is included in 
the support of $B_j$ . (See Appendix \ref{appendix1} for the proof). For the next steps, this finding is most conveniently formulated for us as  
\begin{align}
\label{eq:tkjm_res}
   {T}_{k,jm}\Pi_{j}= {T}_{k,jm}
\end{align}

%Further, we will denote the number of elements of the set 
%\Omega$ as $|\Omega|$. $X$ as $|X|$.
\begin{comment}
Next, let us define the Residual instruments by the choice of its Kraus operators ${R}_{k,m}^{j}$ %represents the set of Kraus operators, we can choose a general representation of the form. 
\begin{align} 
\label{eq:RES1}
{R}^{j}_{k,m}& ={T}_{k,jm}B_j^{-\frac{1}{2}}  \\
 {R}^j_{k_0,0}& = (I_{\mathcal{H}}-\Pi_j).
 \label{eq:RES2}
\end{align}
Let us note that for one value of $k$ we have defined an additional Kraus operator ${R}^j_{k_0,0}$, so in the following equations, index $m$ possibly starts from zero. \tcb{The introduction of additional Kraus operator instead of adding such term into other Kraus operators is necessary in the general case, since the output range of operator ${T}_{k,jm}$ does necessarily have a zero overlap with 
$I_{\mathcal{H}}-\Pi_j$. However, if for the considered value of $j,k$ and for all $m$ those overlaps are zero one can define Kraus operators ${R}_{k,m}^{j}$ as
\begin{align} 
\label{eq:altRes}
{R}^{j}_{k,m}& ={T}_{k,jm}B_j^{-\frac{1}{2}} + \frac{1}{\sqrt{|\Omega|}}(I_{\mathcal{H}}-\Pi_j)
\end{align}
without introducing any additional Kraus operators. We remark that the part proportional to $(I_{\mathcal{H}}-\Pi_j)$ can be distributed among the Kraus operators with different $k$ in some more convenient way (e.g. being in only one of them), while still respecting the instrument's normalization and all the desired properties proved below for the general case of Eqs. (\ref{eq:RES1},\ref{eq:RES2}).
%if for implementation reason's it is more advantageous,  
}
\end{comment}

Next, let us first consider the case of $\dim \mathcal{H} \leq \dim \mathcal{K}$. In this case it is always possible to define the Residual instruments, $\RI^j$, using the Kraus operators of the Total instrument and without introducing any auxiliary ones. Let us make the following choice of Kraus operators ${R}_{k,m}^{j}$ 
%represents the set of Kraus operators, we can choose a general representation of the form. 
\begin{align} 
\label{eq:RES1}
{R}^{j}_{k,m}& ={T}_{k,jm}B_j^{-\frac{1}{2}} + c_{k,jm} V_{k,jm}(I_{\mathcal{H}}-\Pi_j),
 %\label{eq:RES2}
\end{align}
where %$c_{k,jm}\in \mathrm{C}$, 
$c_{k,jm}$ is any set of complex numbers, such that $\sum_{k,m} |c_{k,jm}|^2=1$ and $V_{k,jm}=\sum_{v=1}^{\dim \mathcal{H}} \ket{f_v^{k,jm}} \bra{e_v^{k,jm}}$ is an isometry from $\mathcal{H}$ to $\mathcal{K}$.

Valid quantum instruments must satisfy the normalization condition. This is expressed for Total instrument $\mathcal{T}$ by Eq. (\ref{eq:normt}). Lüders instrument $\mathcal{J}$ is normalized by construction. Thus, it remains to prove that Kraus operators of residual instruments are properly normalized.

For every instrument $\RI^j$ we get
\begin{align}
\label{eq:normRjkm}
\sum_{k,m} R^{j\,\dag}_{k,m} \, R^j_{k,m}  %\bigl(R^j_{k,m}\bigl)^{\dag} \; R^j_{k,m})  
=&  \sum_{k,m} {B_{j}}^{-\frac{1}{2}} {T}_{k,jm}^{\dag} {T}_{k,jm} {B_{j}}^{-\frac{1}{2}} \nonumber\\
     &+ (I_{\mathcal{H}}-{\Pi}_{j})(I_{\mathcal{H}}-{\Pi}_{j}) \nonumber \\
=&\; \Pi_j+ (I_{\mathcal{H}}-{\Pi}_{j})=I_{\mathcal{H}},
\end{align}
where we used Eq. (\ref{eq:BjbyT}), $V^\dagger_{k,jm}V_{k,jm}=I_{\mathcal{H}}$ and the fact that 
support of an operator $\ket{e_v^{k,jm}} \bra{e_v^{k,jm}}$ 
%${\lambda_v^{k,jm}} \ket{e_v^{k,jm}} \bra{e_v^{k,jm}}$ 
is included in the support of $B_j$ (proof is immediate from Eq. (\ref{eq:Bj_as_sum}) and appendix  \ref{appendix1}). 
%Only the first term remains here, the rest vanishes due to eqn:(\ref{++}), and eqn:(\ref{normalization}) reduces to 
%This completes the proof.

Next, using equations (\ref{eq:tkjm_res}) and (\ref{eq:RES1}) and the definition of $\mathcal{J}$ from Theorem \ref{main theorem}, we get 
\begin{align} 
    \sum_{j \in \Omega_B} (\RI^{j}_k \,\circ \,\mathcal{J}_{j})(\rho) %= \nonumber\\
    =& \sum_{j,m} {T}_{k,jm}B_{j}^{-\frac{1}{2}}B_{j}^{\frac{1}{2}}\rho B_{j}^{\frac{1}{2}}B_{j}^{-\frac{1}{2}} {T}_{k,jm}^{\dag}   \nonumber\\
    =& \sum_{j,m} {T}_{k,jm}\Pi_{j} \,\rho\, \Pi_{j} {T}_{k,jm}^{\dag}  \nonumber \\
    =& \sum_{j,m} {T}_{k,jm}\,\rho\, {T}_{k,jm}^{\dag}   \label{biggy}
\end{align}
$\forall \rho \in \mathcal{S(H)}$, where other potential terms in the second line vanished due to Eq. (\ref{projector}).
Thus, equation (\ref{biggy}) equals $\mathcal{T}(\rho)$. Hence, we have proved that
$\mathcal{T}^{k}= \sum_{j \in \Omega_B} \RI^{j}_k \circ \mathcal{J}_{j}$ in the case of $\dim \mathcal{H} \leq \dim \mathcal{K}$.

Next, we analyze the case of  $\dim \mathcal{H} > \dim \mathcal{K}$. 
In general, this case is more complex, because there might not be enough space to accommodate subspace
$I_{\mathcal{H}}-{\Pi}_{j}$ in the output Hilbert space $\mathcal{K}$. In such cases, we may need one or even several additional Kraus operators. Here, we present a generally valid solution, but a point to be noted is that the solution may not always be resource efficient %which might not have the optimal 
in terms of the total number of Kraus operators. After the proof, we discuss its potential improvements. Let us define Hilbert space denoted by $(I_{\mathcal{H}}-{\Pi}_{j})\mathcal{H}$ as orthocomplement of the support of operator $B_j$ in Hilbert space $\mathcal{H}$.
For each $j\in\Omega_B$ let us choose some quantum instrument $\RI'^j\in Ins(\Omega,(I_{\mathcal{H}}-{\Pi}_{j})\mathcal{H},\mathcal{K})$ preferably with lowest possible Kraus rank. 
Thus, the normalization of these instruments reads
\begin{align}
    \sum_{k\in \Omega} \;\;\sum_{m=r_{k}+1}^{r_k+r'^j_k} (R'^j_{k,m})^\dagger R'^j_{k,m} = I_{\mathcal{H}}-{\Pi}_{j} \quad \forall j\in \Omega_B,
\end{align}
where we choose the index $m$ to start after $r_k$, which is the Kraus rank of $\mathcal{T}_k$, and $r'^j_k$ marks the Kraus rank of quantum operations $\RI'^j_k$. 
Let us now define
\begin{align} 
\label{eq:def_RESIDUAL_2}
{R}^{j}_{k,m}=
    \begin{cases}
        {T}_{k,jm}B_j^{-\frac{1}{2}} & m=1,\ldots,r_k \\
        R'^j_{k,m} & m=r_{k}+1, \ldots, r_k+r'^j_k
    \end{cases}    
 %\label{eq:RES2}
\end{align}
We can now easily verify validity of Eqs. (\ref{eq:normRjkm}) and (\ref{biggy}), which is obvious from the construction of $\RI'^j$ and Eq. (\ref{eq:tkjm_res}).

Thus, we have realized the Total Instrument $\mathcal{T}$ as a $2$-step %adaptive sequence 
ASI $\mathcal{Q}=(\mathcal{Q}^1,\mathcal{Q}^2)$ constituted by Initial Instrument  $\{\mathcal{J}\}=\mathcal{Q}^1$ and a set of Residual Instruments $\{\RI^j, j\in \Omega_B\}=\mathcal{Q}^2$.
This completes the proof. 
\end{proof}

%\tcb{[Discussion relevant only for $\dim \mathcal{H} > \dim \mathcal{K}$]} 

Suppose we would like to optimize the number of Kraus operators for each instrument $\mathcal{R}^j$ in the case of  $\dim \mathcal{H} > \dim \mathcal{K}$. 
Ideally, we would introduce similar terms as in Eq. (\ref{eq:RES1}), which is used when $\dim \mathcal{H} \leq \dim \mathcal{K}$, to avoid introducing additional Kraus operators. However, the image of ${T}_{k,jm}$ already occupies $d^j_{k,m}=\dim Im({T}_{k,jm})$ dimensions. Thus, a given Kraus operator $R^j_{k,m}$ can only faithfully %\cancel{transfere} 
transfer %accomodate operators of 
additional $d_{\mathcal{K}}-d^j_{k,m}$ dimensions. Out of such pieces of $R^{j\,\dag}_{k,m} \, R^j_{k,m}$ we need to build up $(I_{\mathcal{H}}-{\Pi}_{j})$, the missing part of the normalization. Since we know that the rank of a sum of positive operators is, at most the sum of ranks of the individual operators, such pieces together can create at most an operator with a rank $\sum_{k:\,\nu_{kj}>0}(d_{\mathcal{K}}-d^j_{k,m})$. We remark here that if $\nu_{kj}=0$ then ${T}_{k,jm}=0$ and such an operator does not need to be implemented. Thus, adding some additional part to such ${T}_{k,jm}$ also means introducing additional Kraus operator(s).
We can summarize that if $\sum_{k:\,\nu_{kj}>0}(d_{\mathcal{K}}-d^j_{k,m})\geq Tr(I_{\mathcal{H}}-{\Pi}_{j})$ then no additional Kraus operators need to be introduced for a given instrument $\RI^j$. 
Otherwise, 
% $\lceil (Tr(I_{\mathcal{H}}-{\Pi}_{j})-\sum_{k:\,\nu_{kj}>0}(d_{\mathcal{K}}-d^j_{k,m}))/ d_{\mathcal{K}}\rceil$ 
\begin{align}
\label{eq:addKraus}
%n_{add}=    
%\lceil \left(Tr(I_{\mathcal{H}}-{\Pi}_{j})-\sum_{k:\,\nu_{kj}>0}(d_{\mathcal{K}}-d^j_{k,m})\right)/ d_{\mathcal{K}}\rceil 
n_{add}=    \lceil \frac{1}{d_{\mathcal{K}}}(Tr(I_{\mathcal{H}}-{\Pi}_{j})-\sum_{k:\,\nu_{kj}>0}(d_{\mathcal{K}}-d^j_{k,m}))\rceil 
\end{align}
additional Kraus operators are needed.

\begin{remark}
In Theorem \ref{main theorem} the Initial instrument $\mathcal{J}$ depends only on the POVM $B$, which is a postprocessing of the induced POVM of the Total instrument $A^{\mathcal{T}}$. Thus, we see that the actual state change introduced by the Total instrument does not influence instrument $\mathcal{J}$. In conclusion, all total instruments with the same induced POVM $A^{\mathcal{T}}$ can use the same initial instrument.

\end{remark}

\begin{remark}
\label{re_outcomes}
Let us note that outcomes $j$ and $k$ obtained in the sequential realization of the instrument  $\mathcal{T}$  from Theorem \ref{main theorem} are not completely independent. By the proof of Theorem \ref{main theorem}, especially its Eq. (\ref{def:Kkjm}), we see that $k$ can only take such values, for which the postprocessing matrix elements $\nu_{k,j}>0$. In other words, the sequential realization is designed in such a way that some of the outcomes $k\in\Omega$ of the residual instrument $\RI^j$ have zero probability of appearance for all states produced by quantum operation $\mathcal{J}_j$ of the Initial Instrument $\mathcal{J}$. \\  
\end{remark}

%\subsection{Example consisting of a sum of projectors}
\subsection{Example illustrating Remark \ref{re_outcomes}} %and Eq. (\ref{eq:RES2})}
%extra Kraus operator}
%and role of Eq. (15).

This subsection shows a simple example of the use of Theorem \ref{main theorem}. 
Let us consider a three outcome POVM $A \in \mathcal{O}(\Omega,\mathcal{H})$ defined on a three dimensional Hilbert space $\mathcal{H}$ %\dim(\mathcal{H})=3$ and $\Omega=\{0,1,2\}$.
as follows

\begin{eqnarray}
A_0 &= \frac{1}{2}(P_0 + P_1)  \nonumber \\
A_1 &= \frac{1}{2}(P_1 + P_2)   \\
A_2 &= \frac{1}{2}(P_2 + P_0)  \nonumber  \; 
\end{eqnarray}
where $P_i=\ket{i}\bra{i},i=0,1,2$ are projectors onto the orthonormal basis states $\ket{i}$. The Total Instrument we want to realize will be the Lüders instrument of $A$, whose action on a state $\rho$ is given by $\mathcal{T}_i(\rho)=\sqrt{A_i}\rho\sqrt{A_i}\;\;i=0,1,2$. We want to realize $\mathcal{T}$ as %an adaptive sequence of instruments 
$2$-step ASI 
by applying Theorem \ref{main theorem}. 
Let us post-process the POVM $A$ into another POVM $B \in \mathcal{O}(\Omega_B,\mathcal{H})$ by taking 
\begin{align}
\nu =
\begin{pmatrix}
1 & 0  \\
1 & 0 \\
0 & 1
\end{pmatrix}
\quad\;\;
    \begin{aligned}
    \sqrt{B_0}&= \frac{1}{\sqrt{2}}P_0 + P_1 + \frac{1}{\sqrt{2}}P_2  \nonumber \\
    \sqrt{B_1}&= \frac{1}{\sqrt{2}}P_0 + \frac{1}{\sqrt{2}}P_2             
    \end{aligned}
\end{align} 

where $\Omega_B=\{0,1\}$. The Initial Instrument is $\mathcal{J}_j(\rho)=\sqrt{B_j} \rho \sqrt{B_j}; j=0,1$. 
Very simple calculation using Eq. (\ref{eq:RES1}) gives the following form of
the Residual Instruments  %are given by
\begin{align}  
     {\RI^{0}_0}(\rho)&= (P_0 + \frac{1}{\sqrt{2}}P_1)\,\rho\,(P_0 + \frac{1}{\sqrt{2}}P_1) \nonumber \\
    {{\RI}^{0}_1}(\rho)&= (\frac{1}{\sqrt{2}}P_1 + P_2)\,\rho\, (\frac{1}{\sqrt{2}}P_1 + P_2)    \\ 
    {{\RI}^{0}_2}(\rho)&=0  \nonumber \\
    {{\RI}^{1}_0}(\rho)&={\RI}^{1}_1(\rho)=0 \nonumber \\
 {{\RI}^{1}_2}(\rho)&= \rho \nonumber,
\end{align}
where choice of $c_{k,0m}$ is irrelevant due to $(I-\Pi_0)=0$ and we chose $c_{0,11}=c_{1,11}=0$, $c_{2,11}=1$. 
It is straightforward to check that the above equations define valid quantum instruments. 
%One specific fact to notice is that the extra Kraus operator defined in Eq. (\ref{eq:RES2}) is appearing here as $\frac{1}{\sqrt{3}}\Pi_1$ in the Residual instrument $\mathcal{W}^{1}$. 
In accordance with Remark \ref{re_outcomes} we see that 
$\RI^0_2 \circ \mathcal{J}_0=0$, $\RI^1_0 \circ \mathcal{J}_1=0$, $\RI^1_1 \circ \mathcal{J}_1=0$ proving that, if initial instrument $\mathcal{J}$ produced outcome zero, then Residual instrument $\RI^0$ can produce only outcomes $0$ or $1$ and similarly if initial instrument $\mathcal{J}$ produced outcome $1$, then Residual instrument $\RI^1$ can produce only outcome $2$.

\subsection{Resource implications of Theorem \ref{main theorem}} \label{subsec:Resource implications subsection}
Let us analyze the necessary ancilla size needed for sequential implementation of instruments proposed in Theorem \ref{main theorem}. Again, let's first concentrate on the case of non-shrinking Hilbert space ($\dim \mathcal{H} \leq \dim \mathcal{K}$). 
Let us denote by
\begin{align}
\label{def:gincr}
    g\equiv \lceil\dim \mathcal{K}/\dim\mathcal{H}\rceil 
\end{align}
the minimal dimension of additional quantum system with Hilbert space $\mathcal{V}$ that needs to be 
%\cancel{tensorized} \tcr{added via tensor product}  with Hilbert space $\mathcal{H}$ 
added 
to allow embedding of Hilbert space $\mathcal{K}$ in $\mathcal{H}\otimes\mathcal{V}$. If $\dim \mathcal{K}=\dim\mathcal{H}$ then $g=1$.
From the details of the proof of Theorem \ref{main theorem}, especially from Eqs. (\ref{def:Kkjm}) and (\ref{eq:RES1}) it follows that the number of non-zero Kraus operators of Residual instrument $\RI^j$ can be chosen to be %$m_j=\sum_{k:\,\nu_{kj}>0} r_k$, 
\begin{align}
\label{eq:defmj}
    m_j=\sum_{k:\,\nu_{kj}>0} r_k,     
\end{align}
where $r_k$ is the Kraus rank of $\mathcal{T}_k$. If we assume the same ancillary system with dimension $d_A$ can be used to realize instrument $\mathcal{J}$ and after reset any of the instruments $\mathcal{R}^j , j\in \Omega_B$ then 
it implies the following statements. 
The ancilla dimension $d_A$ must be clearly at least the maximum of the needed dimension for the realization of the Initial instrument and any of the Residual instruments. %We note that the initial instrument is
As the initial instrument is always the Lüders instrument (i.e. requiring $d_A\geq |\Omega_B|$), the increase in output dimension %if $\dim \mathcal{H} < \dim \mathcal{K}$ 
happens in residual instruments $\mathcal{R}^j$. Consequently, for their realization some subsystem of ancillary system together with Hilbert space $\mathcal{H}$ must be used to hold the output state and the whole ancillary system must be factorizable into two parts with dimensions $g$ and $M$, 
where $M\geq m\equiv \max_{j\in \Omega_B}  m_j$. In turn, this implies that if $\max\{|\Omega_B|,gm\}=|\Omega_B|$
we might need to increase the dimension of the ancillary space to make it factorizable during the realization of the residual instruments $\mathcal{R}^j$. In conclusion,
ancilla must have dimension at least 
%$d_A=\max(|\Omega_B|, \{\lceil\dim \mathcal{K}/\dim\mathcal{H}\rceil m_j\}_{j\in \Omega_B})$,
\begin{align}
\label{eq:dagen}
% d_A=\max(&\lceil\dim \mathcal{K}/\dim\mathcal{H}\rceil \lceil \frac{|\Omega_B|}{\lceil\dim \mathcal{K}/\dim\mathcal{H}\rceil} \rceil, \\
% &\{\lceil\dim \mathcal{K}/\dim\mathcal{H}\rceil m_j\}_{j\in \Omega_B}),  
  d_A=\max(g \lceil \frac{|\Omega_B|}{g} \rceil, 
 \{g\, m_j\}_{j\in \Omega_B}),  
\end{align}
where the term $g$ accounts for the need of additional quantum systems needed to store the output on the larger output space and the fact that $d_A$ needs to be divisible by number $g$.

If there are several non-zero entries in the postprocessing matrix $\nu_{kj}$ for the fixed $k$ and some values of $j$, it implies that the rank $r_k$ is summed up in all such $m_j$'s (see Eq.(\ref{eq:defmj})). In other words, fractions of quantum operation $\mathcal{T}_k$ %are realized 
will be realized in several Residual instruments $\RI^j$, which increases the needed Kraus rank by $r_k$ for each of them. Thus, we see that having $\nu_{kj}$ equal to zero or one reduces the values of ranks $m_j$. In other words, coarse-grainings $A^{\mathcal{T}}\rightarrow B$ in general require fewer resources when Theorem \ref{main theorem} is applied. 
We tried to graphically illustrate these ideas by Figures \ref{FIG_1} and \ref{FIG_2}. The statement we can make is in a sense qualitative rather then quantitative. More precisely, consider applications of Theorem \ref{main theorem} for differently chosen POVM $B$. Among them, we claim that the subset for which the used ancilla size is minimal,  will always contain an instance in which POVM $B$ was created by coarse-graining of the induced POVM of the total instrument $A^\mathcal{T}$. 
Intuitively, every quantum operation $\mathcal{T}_k$ of the total instrument needs to be realized through some sequences of outcomes $(j,k)$ of the $2$-step 
%adaptive sequence of instruments. 
ASI. 
Every such path implies that the instrument $\mathcal{R}^j$ needs to realize quantum operation $\mathcal{R}^j_k$ with the same Kraus rank as $\mathcal{T}_k$. Thus, every $\mathcal{R}^j$ needs $g\,r_k$ dimensions of the ancilla just to realize the final outcome $k$. Thus, minimizing the number of paths reaching every final outcome $k$ clearly minimizes the required ancilla size for instruments $\mathcal{R}^j$. On the other hand, if some of the instruments $\mathcal{R}^j$ do not fully utilize the ancilla for their implementation 
%($m_j < \max(|\Omega_B|, \{m_{j'}\}_{j'\in \Omega_B})$) 
%($\lceil \dim \mathcal{K}/\dim\mathcal{H}\rceil m_j < d_A$) 
($g\, m_j < d_A$) 
adding a "path" $(j,k)$ through them might be possible. This could be done by properly modifying the postprocessing matrix $\nu_{kj}$ defining the POVM $B$ in Theorem \ref{main theorem}. While such a change will not affect (if we just lower, not nullify elements of $\nu_{kj}$) the required ancilla dimension for other residual instruments, it might however affect the form of the instrument that needs to be implemented, hence affecting the number of quantum gates needed for it's implementation. Thus, sequential implementations of instruments based on coarse-graining of outcomes does not need to be most effective from the gate count point of view. 

\begin{figure}[t]
\centering
  \includegraphics[width=8.2cm]{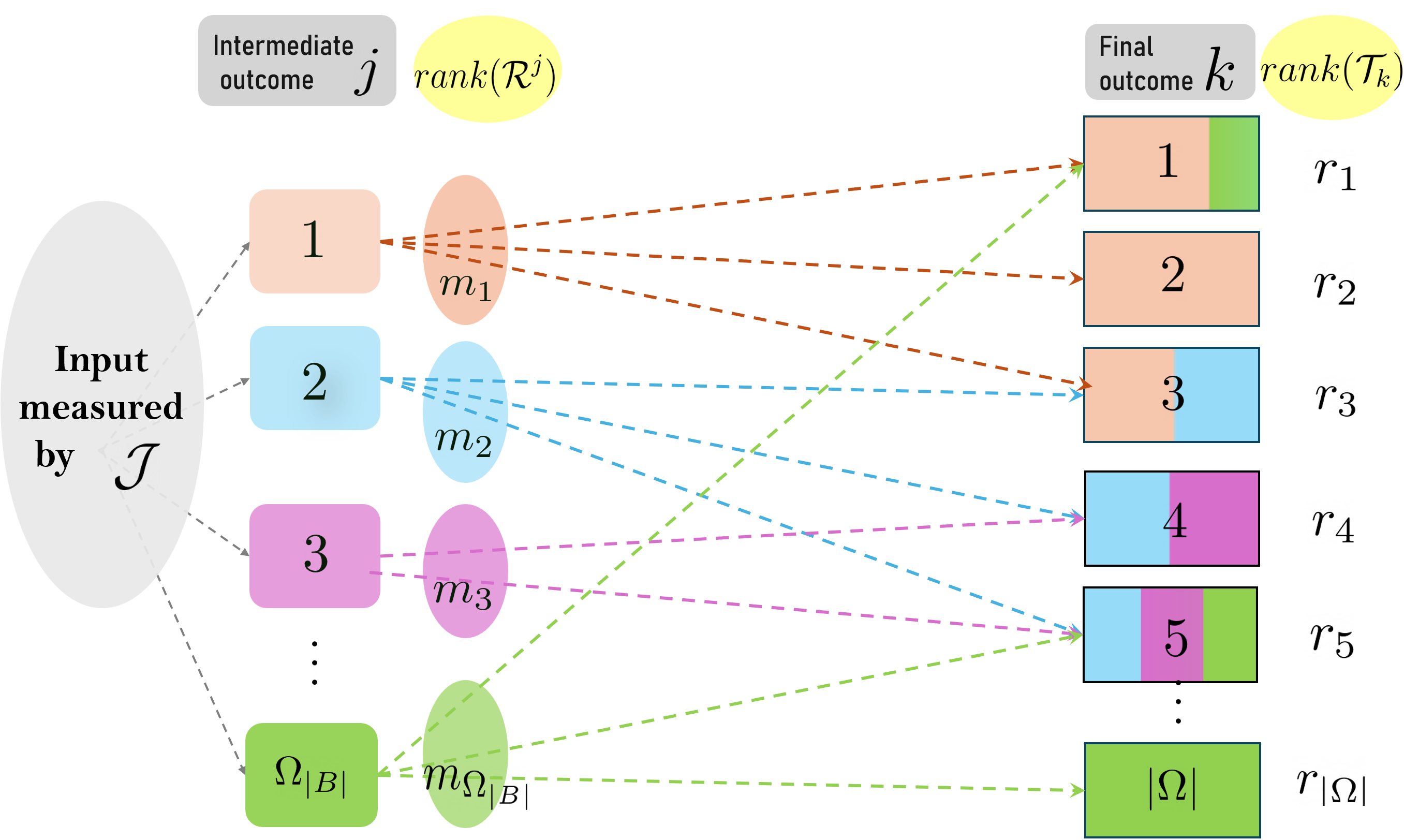}
  \captionof{figure}{Illustration of how postprocessing matrix $\nu_{kj}$ with several nonzero entries in some of the rows $k$ influences the Kraus rank of instruments $\RI^j$ and consequently the dimension of the ancilla system $d_A$.}
  \label{FIG_1}
 \end{figure}
 \begin{figure}
     \centering
  \includegraphics[width=8.2cm]{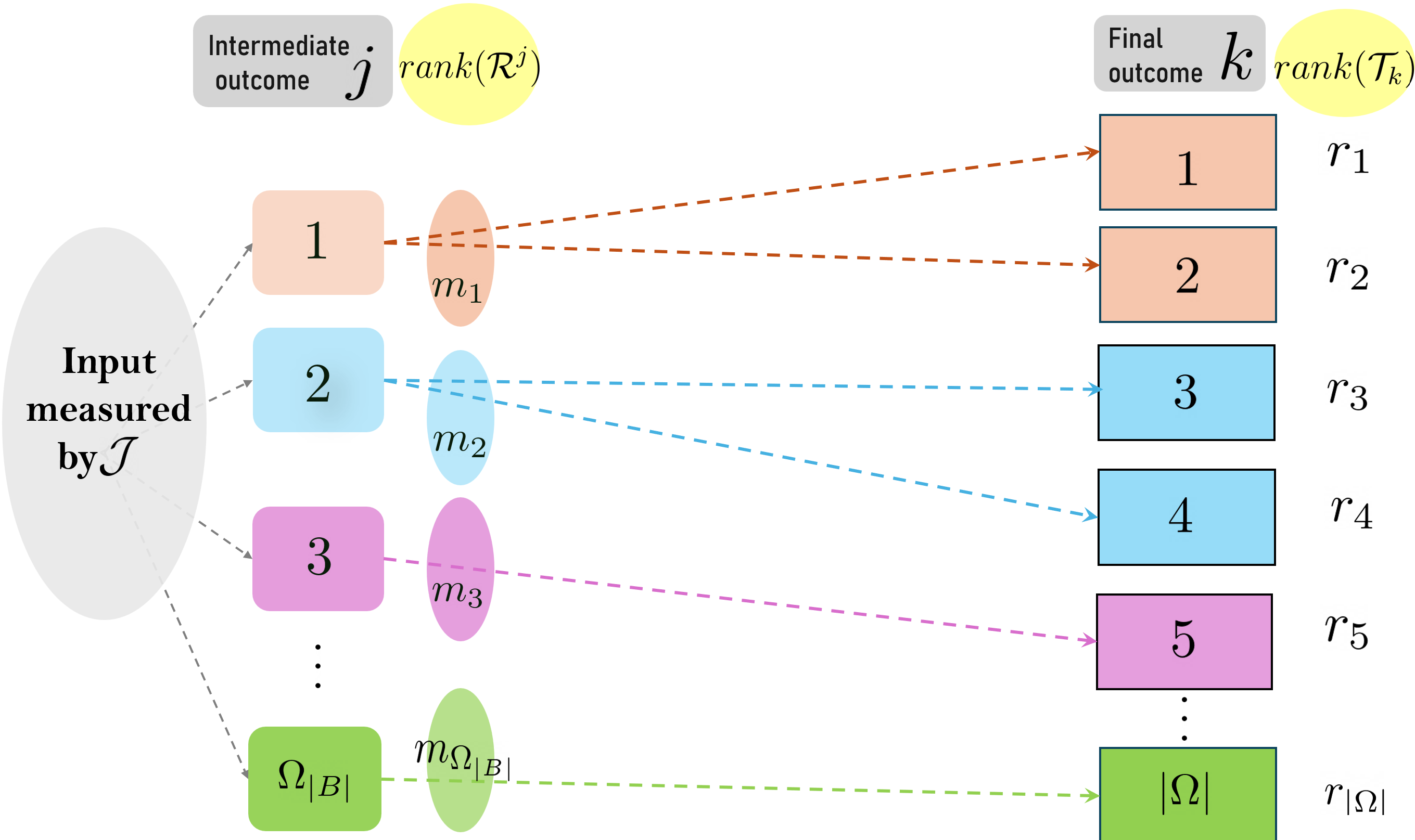} 
  \captionof{figure}{In contrast to Figure \ref{FIG_1}, if postprocessing matrix $\nu_{kj}$ contains only zeros or ones Kraus ranks of instruments $\RI^j$ are generally smaller, since each $r_k$ contributes to ranks of instruments $\RI^j$ just once.}
  \label{FIG_2}
\end{figure}

Above we have seen that to minimize the ancilla dimension it is meaningful to consider how to split outcomes $k$ of the Total instrument $\mathcal{T}$ into suitable number of disjoint subsets $\omega_j\subset \Omega$, where $j\in\Omega_B$, so that 
%$d_A=\max(|\Omega_B|, \{\lceil \dim \mathcal{K}/\dim\mathcal{H}\rceil \sum_{k\in\omega_j} r_k\}_{j\in \Omega_B})$
\begin{align}
%    d_A=\lceil \dim \mathcal{K}/\dim\mathcal{H}\rceil \max(\leceil \frac{|\Omega_B|}{\lceil \dim \mathcal{K}/\dim\mathcal{H}\rceil} \rceil, \{ \sum_{k\in\omega_j} r_k\}_{j\in \Omega_B})
    d_A=g\, \max(\lceil \frac{|\Omega_B|}{g} \rceil, \{ \sum_{k\in\omega_j} r_k\}_{j\in \Omega_B})
\end{align}
is minimized. This is in general quite complicated combinatorial task and it would be more pleasant to face the situation where each $r_k$ is equal to one. Indeed, this can be done by the following trick. For each quantum instrument, $\mathcal{T}$, one can define many detailed quantum instruments \cite{lm1}. These are instruments, which we obtain by taking some Kraus representation for each quantum operation of instrument $\mathcal{T}$ and we define a single outcome of the new quantum instrument by each of all those Kraus operators. %$\mathcal{T}^D$. 

For the purposes of this paper it will be useful to consider the following subset of those instruments.
\begin{definition}
Suppose we choose some minimal Kraus representations $\mathcal{T}_k(\rho)=\sum_{m=1}^{r_k} {T}_{k,m}\rho\,{T}_{k,m}^{\dag}$ for quantum operations $\mathcal{T}_k$ of quantum instrument $\mathcal{T}\in Ins(\Omega,\mathcal{H},\mathcal{K})$. %We 
For every such choice we 
define set $\Omega_D=\{(k,m)|k\in\Omega,m=1,\ldots,r_k\}$ as the set of all tuples $(k,m)$ 
%such that $k\in\Omega$, $m=1,\ldots,r_k$.     
and consequently %for every choice of minimal Kraus representation 
we also define \emph{minimal detailed quantum instrument} $\mathcal{T}^D\in Ins(\Omega_D,\mathcal{H},\mathcal{K})$ by it's Kraus rank one quantum operations 
$\mathcal{T}_{k,m}(\rho)={T}_{k,m}\rho\,{T}_{k,m}^{\dag}$. 
\end{definition}

Obviously, by coarse-graining over the second element of $2$-tuple outcome $(k,m)$ of $\mathcal{T}^D$ we 
recover the Total instrument $\mathcal{T}$. Thus, if we would build an implementation of quantum instrument $\mathcal{T}^D$ on a real quantum hardware and out of obtained outcome $(k,m)$ we would report just the value of $k$, addition of such simple classical postprocessing of outcomes would turn the experiment into an implementation of instrument $\mathcal{T}$. Analogical idea for (sequential) implementation of POVMs was already used in \cite{MCMexpt2024}. In theory this approach works perfectly, in reality it might give rise to some problems related to small statistics, and errors introduced by imprecise measurements, since detailed instruments require more qubits to be actually measured. Such disadvantages are hard to estimate, since they will heavily depend on the actual hardware. For the purposes of our further considerations the main advantage of detailed instruments is that they have Kraus rank one ($r_k=1$) for all $k$. 
%\textcolor{teal}{A quantum instrument is indecomposable iff it has Kraus rank equal to one}
Such quantum instruments are also termed indecomposable instruments \cite{lm1}.

Thus, assuming availability of classical postprocessing of outcomes, we can without loss of generality consider finding an ancilla efficient implementation of instrument $\mathcal{T}^D$ instead of implementation for an instrument $\mathcal{T}$. In this case $\sum_{k\in\omega_j} r_k=|\omega_j|$, 
%$\sum_{j\in \Omega_B} |\omega_j| = \sum_{k\in\Omega_D} 1= |\Omega_D|=r_{\mathcal{T}}$ 
\begin{align}
\label{eq:sumrt}
    \sum_{j\in \Omega_B} |\omega_j| = \sum_{k\in\Omega_D} 1= |\Omega_D|=r_{\mathcal{T}}
\end{align}
and we want to minimize 
%$d_A=\max(|\Omega_B|, \{\lceil \dim \mathcal{K}/\dim\mathcal{H}\rceil |\omega_j|\}_{j\in \Omega_B})$. 
\begin{align}
\label{eq:minAnc}
    d_A=g\, \max(\lceil \frac{|\Omega_B|}{g} \rceil, \{|\omega_j|\}_{j\in \Omega_B}),
\end{align}
over choice of set $\Omega_B$ and cardinality of subsets $\omega_j \subset \Omega_D$ $j\in\Omega_B$. This optimization is most easily solved if we pose the question in the following way. What is the highest rank $r_{\mathcal{T}}$ of quantum instrument $\mathcal{T}_D$ that $2$-step ASI with ancilla size $d_A$ can realize. From the above considerations we anyway consider only ancilla sizes $d_A$ divisible by $g$. Obviously, the more elements set $\Omega_B$ has  the more positive summands in Eq. (\ref{eq:sumrt}) we can have. Thus, highest rank $r_{\mathcal{T}}$ with fixed $d_A$ is then achieved when 
%$|\Omega_B|$ is the highest number divisible by $g$ that is not greater than $d_A$, i.e.
%\begin{align}
%    |\Omega_B|=g\, \lfloor \frac{d_A}{g} \rfloor.
%\end{align}
$|\Omega_B|=d_A$. Realizability of instruments $\mathcal{R}^j$ with $d_A$ dimensional ancilla requires $g|\omega_j|\leq d_A$ $\forall j\in \Omega_B$, or equivalently Eq. (\ref{eq:minAnc}) must not be violated. The highest value each $|\omega_j|$ can have is an integer $d_A/g$. In such case we obtain the highest rank $r_{\mathcal{T}}$ %of quantum instrument $\mathcal{T}_D$ 
that any $2$-step ASI with ancilla size $d_A$ can realize and it reads
\begin{align}
\label{eq:maxrt}
    r_{\mathcal{T}}=\sum_{j\in \Omega_B} |\omega_j| = d_A \,\frac{d_A}{g}=\frac{d_A^2}{g}
\end{align}
Thus, all instruments realizable (via Theorem \ref{main theorem}) with $d_A$ dimensional ancilla must obey $r_{\mathcal{T}}\leq\frac{d_A^2}{g}$. This can be seen as an lower bound on ancilla size $d_A$.    Taking into account that $d_A$ is divisible by $g$ we can write a bit more tight lower bound on $d_A$ as follows
\begin{align}
\label{eq:rtbound}
    \lceil\sqrt{\frac{r_{\mathcal{T}}}{g}}\rceil\leq\frac{d_A}{g}.
\end{align}
On the other hand, let's show that this bound is also tight, i.e. if we choose
ancilla size $d_A=g \lceil\,\sqrt{\frac{r_{\mathcal{T}}}{g}}\,\rceil$ we can realize quantum instrument $\mathcal{T}_D$ with rank $r_{\mathcal{T}}$ by a $2$-step ASI. % with ancilla size $d_A$. 
To see that this is indeed possible we would choose $|\Omega_B|=d_A$. If all $|\omega_j|$ would be set to $\lceil\,\sqrt{\frac{r_{\mathcal{T}}}{g}}\,\rceil$ then we could realize an instrument with rank $r=\sum_{j\in \Omega_B} |\omega_j| = g \lceil\,\sqrt{\frac{r_{\mathcal{T}}}{g}}\,\rceil\,\lceil\,\sqrt{\frac{r_{\mathcal{T}}}{g}}\,\rceil\geq r_{\mathcal{T}}$. By suitably lowering values of $|\omega_j|$ (some can even be set to zero, which is equivalent to lowering $|\Omega_B|$) we can make $r=\sum_{j\in \Omega_B} |\omega_j|=r_{\mathcal{T}}$.

\begin{comment}
    Obviously, we want to have as uniform values of $|\omega_j|$ as possible and the number of sets $\omega_j$ (i.e. $|\Omega_B|$) should be ideally having such value that $\lceil \frac{|\Omega_B|}{g} \rceil$ and $|\omega_j|$ coincide. 
As we show \tcr{[hopefully we will really show]} 
in Apendix [] the minimization yields a result, which is together with the above reasoning summarized in the following theorem.
\end{comment}
%number $d_A$
The above reasoning is summarized in the following theorem. 
\begin{comment}
    \begin{theorem}
\label{th:opt2stepsplit}
Minimal ancilla dimension for $2$-step implementation of quantum instrument $\mathcal{T}$
%we seek is upper bounded by   %%%%%%%%%%%%%%%%%%% Which is correct? BET or PROVE :) 
%is very well approximated by 
%is equal to 
is given by % $\lceil\sqrt{|\Omega_D|}\rceil$.
minimization of Eq. (\ref{eq:minAnc}) with respect choice of values $|\omega_j|$ (and consequently also $|\Omega_B|$) and in the special case $\dim \mathcal{H} = \dim \mathcal{K}$ it simplifies to
%\tcr{[for now reasoning really done for $\dim \mathcal{H} = \dim \mathcal{K}$]}
\begin{align}
    d_A=g \lceil\,\sqrt{\frac{r_{\mathcal{T}}}{g}}\,\rceil.
\end{align}
\end{theorem}
\end{comment}

\begin{theorem}
\label{th:opt2stepsplit}
Consider %\tcb{indecomposable} 
quantum instrument $\mathcal{T}\in Ins(\Omega,\mathcal{H},\mathcal{K})$
with Kraus rank %$r_{\mathcal{T}}=|\Omega|$ 
$r_{\mathcal{T}}$ and such that $\dim \mathcal{H} \leq \dim \mathcal{K}$. 
Minimal ancilla dimension for implementation of quantum instrument $\mathcal{T}$ 
%$\in Ins(\Omega,\mathcal{H},\mathcal{K})$ with rank $r_{\mathcal{T}}$ 
using $2$-step ASI and classical postprocessing of outcomes is given by 
%\tcr{[for now reasoning really done for $\dim \mathcal{H} = \dim \mathcal{K}$]}
\begin{align}
    d_A=g \lceil\,\sqrt{\frac{r_{\mathcal{T}}}{g}}\,\rceil,
\end{align}
where $g=\lceil\dim \mathcal{K}/\dim\mathcal{H}\rceil$ captures the ratio between dimension of output and input Hilbert space of the instrument $\mathcal{T}$.
\end{theorem}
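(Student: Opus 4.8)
The statement packages two claims: (a) \emph{achievability}, that some $2$-step ASI with a single resettable ancilla of dimension $d_A = g\lceil\sqrt{r_{\mathcal{T}}/g}\rceil$, together with classical postprocessing of outcomes, realizes $\mathcal{T}$; and (b) \emph{optimality}, that no smaller reusable ancilla works. The pivotal preparatory step, set up in the discussion preceding the theorem, is to replace $\mathcal{T}$ by a minimal detailed instrument $\mathcal{T}^D \in Ins(\Omega_D,\mathcal{H},\mathcal{K})$: coarse-graining the Kraus-label component of the outcome $(k,m)$ recovers $\mathcal{T}$, so it suffices to realize $\mathcal{T}^D$, for which every quantum operation has Kraus rank $1$ and $|\Omega_D| = r_{\mathcal{T}}$.

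For achievability I would apply Theorem \ref{main theorem} to the total instrument $\mathcal{T}^D$ with a POVM $B$ obtained from $A^{\mathcal{T}^D}$ by a $0/1$-valued postprocessing (a coarse-graining). Put $n := \lceil\sqrt{r_{\mathcal{T}}/g}\rceil$ and $d_A := gn$, and let the coarse-graining correspond to any partition $\{\omega_j\}_{j\in\Omega_B}$ of $\Omega_D$ into $|\Omega_B| = d_A$ blocks each of size at most $n$; such a partition exists because $|\Omega_D| = r_{\mathcal{T}} \le gn\cdot n = d_A n$. By Eq.~(\ref{eq:defmj}) the residual instrument $\RI^j$ then has $m_j = |\omega_j| \le n$ nonzero Kraus operators, so by Eq.~(\ref{eq:dagen}) it is realizable with an ancilla that factors as $g \times M$ with $M \ge n$, i.e.\ of dimension $gn = d_A$, while the Lüders initial instrument $\mathcal{J}$ needs dimension $|\Omega_B| = d_A$. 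Hence a single ancilla of dimension $d_A$, which is divisible by $g$, is reused for both steps, and classically merging the outcomes of $\mathcal{T}^D$ produces $\mathcal{T}$.

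For optimality I would count Kraus-rank-one ``slots''. In a $2$-step ASI of the form provided by Theorem \ref{main theorem}, the Lüders-type initial instrument is dilated by adjoining the $d_A$-dimensional ancilla, applying a unitary and measuring; since its output space is again $\mathcal{H}$, the discarded subsystem, of dimension at least $|\Omega_B|$, must fit inside the ancilla, so $|\Omega_B| \le d_A$. Each residual instrument $\RI^j$ maps into the strictly larger space $\mathcal{K}$, so $g$ of the ancilla's dimensions are irrevocably spent embedding $\mathcal{K}$, leaving an effectively $(d_A/g)$-dimensional factor to carry its Kraus index; thus $m_j \le d_A/g$. Since a sum of quantum operations has Kraus rank at most the sum of the ranks and $\mathcal{J}$ is Lüders, $r_{\mathcal{T}} = \sum_k r_k \le \sum_{j\in\Omega_B} m_j \le d_A\cdot(d_A/g) = d_A^2/g$ (cf.\ Eq.~(\ref{eq:maxrt})). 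Rearranging, and using that $d_A/g$ is a positive integer, gives $d_A/g \ge \lceil\sqrt{r_{\mathcal{T}}/g}\rceil$, i.e.\ Eq.~(\ref{eq:rtbound}), which matches the construction.

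I expect the delicate part to be the optimality direction. Two points need care: first, that passing to the minimal detailed instrument $\mathcal{T}^D$ loses nothing for a \emph{lower} bound --- splitting a quantum operation of $\mathcal{T}$ among several residual instruments, or padding a Kraus representation, can only raise the ancilla demand, so it is enough to bound realizations of $\mathcal{T}^D$; and second, that the ``$g$ dimensions for the output growth, plus divisibility of $d_A$ by $g$'' bookkeeping encoded in Eq.~(\ref{eq:dagen}) is genuinely forced by the requirement of a single \emph{reusable} ancilla and the fixed input/output dimensions, rather than being an artifact of the Lüders-based construction. The achievability direction is, by contrast, a straightforward bin-packing argument once the balanced partition is displayed.
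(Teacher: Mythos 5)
Your proposal is correct and follows essentially the same route as the paper: reduce to the minimal detailed instrument $\mathcal{T}^D$, obtain achievability by a balanced coarse-graining partition with $|\Omega_B|=d_A$ and $|\omega_j|\le\lceil\sqrt{r_{\mathcal{T}}/g}\rceil$, and obtain optimality from the counting bound $r_{\mathcal{T}}\le d_A\cdot(d_A/g)$ together with divisibility of $d_A$ by $g$, exactly as in Eqs.~(\ref{eq:maxrt})--(\ref{eq:rtbound}). Your closing caveat that the lower bound is established only for ASIs arising from Theorem~\ref{main theorem} matches the paper's own scope here (the more general bound is deferred to Section~\ref{sec:limitations_N-ASI}).
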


For quantum instruments whose Kraus rank is not more than the ratio of the dimensions between output and input Hilbert space we immediately obtain the following corollary of Theorem \ref{th:opt2stepsplit}.

%\tcr{[define adaptive sequence of instruments (ASI)????]}

\begin{corollary}
\label{cor:low_mem_Inst_2s}
Suppose Kraus rank $r_{\mathcal{T}}$ of instrument $\mathcal{T} \in Ins(\Omega,\mathcal{H},\mathcal{K})$ obeys
$r_{\mathcal{T}}\leq \lceil\dim \mathcal{K}/\dim\mathcal{H}\rceil$. Then it's implementation with a $2$-step adaptive sequence of instruments (plus final postprocessing of classical outcomes) does not require other ancillary systems than those needed to form the final output space. In particular one can choose ancilla with dimension $d_A=\lceil\dim \mathcal{K}/\dim\mathcal{H}\rceil$, perform with it Lüders instrument $\mathcal{I}^{B}$ with $B=A^{\mathcal{T}^D}$, reset the ancilla and reuse it to realize an isometry conditionally chosen based on the obtained outcome of instrument $\mathcal{I}^{B}$. Finally, the intermediate classical outcome needs to be 
coarse-grained back from $\Omega_D$ to %the set 
$\Omega$.
\end{corollary}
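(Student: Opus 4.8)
The plan is to obtain the dimension statement directly from Theorem~\ref{th:opt2stepsplit}, and then to make the advertised two-step scheme explicit by specializing the constructions used in the proofs of Theorems~\ref{main theorem} and~\ref{th:opt2stepsplit}.

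First I would put $g=\lceil\dim\mathcal{K}/\dim\mathcal{H}\rceil$ and observe that any quantum instrument has Kraus rank $r_{\mathcal{T}}\ge 1$, so the hypothesis $r_{\mathcal{T}}\le g$ forces $0<r_{\mathcal{T}}/g\le 1$ and hence $\lceil\sqrt{r_{\mathcal{T}}/g}\,\rceil=1$. Substituting into Theorem~\ref{th:opt2stepsplit} gives that the minimal ancilla dimension is $d_A=g$. Since, by Eq.~(\ref{def:gincr}), $g$ is precisely the smallest dimension of an auxiliary space $\mathcal{V}$ permitting the embedding of $\mathcal{K}$ into $\mathcal{H}\otimes\mathcal{V}$, an ancilla of dimension $g$ carries no room beyond what is strictly needed to host the enlarged output; this proves the first claim. (The corner case $g=1$ forces $r_{\mathcal{T}}=1$, i.e.\ $\mathcal{T}$ is a single isometry, which is impossible when $\dim\mathcal{K}<\dim\mathcal{H}$ and is a bare unitary channel needing no ancilla when $\dim\mathcal{K}=\dim\mathcal{H}$; so henceforth one may assume $\dim\mathcal{H}<\dim\mathcal{K}$ and invoke the branch of Theorem~\ref{main theorem} built on Eq.~(\ref{eq:RES1}).)

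To spell out the scheme I would pass to a minimal detailed instrument $\mathcal{T}^{D}\in Ins(\Omega_D,\mathcal{H},\mathcal{K})$ of $\mathcal{T}$, whose quantum operations all have Kraus rank one, and apply Theorem~\ref{main theorem} to $\mathcal{T}^{D}$ with $B=A^{\mathcal{T}^{D}}$, that is with the trivial postprocessing matrix $\nu_{jj'}=\delta_{jj'}$ on $\Omega_D$ (admissible since $A^{\mathcal{T}^{D}}\to A^{\mathcal{T}^{D}}$). The Initial Instrument it produces is the Lüders instrument $\mathcal{I}^{B}$ of $B$ on $\mathcal{H}$; it has $|\Omega_D|=r_{\mathcal{T}}\le g$ Kraus operators and equal input and output space $\mathcal{H}$, hence is realizable with a $g$-dimensional (reset) ancilla measured in a basis of which at most $r_{\mathcal{T}}$ outcomes occur. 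Because $\nu$ is the identity, Eq.~(\ref{eq:RES1}) specialized to $\mathcal{T}^{D}$ leaves, for each $j\in\Omega_D$, only the outcome $j$ with a nonzero operation carrying the single Kraus operator $R^{j}_{j}=T_{j}B_j^{-1/2}+c\,V_{j}(I_{\mathcal{H}}-\Pi_j)$, where $T_{j}$ is the $j$-th detailed Kraus operator and $V_{j}$ completes its singular-value decomposition to an isometry $\mathcal{H}\to\mathcal{K}$; choosing the free coefficient $c=1$ (legitimate since this operator now carries the whole $(I_{\mathcal{H}}-\Pi_j)$ term) and using the polar form $T_{j}=W_{j}\sqrt{B_j}$ with $W_{j}=V_{j}\Pi_{j}$, one gets $T_{j}B_j^{-1/2}=W_{j}$ and hence $R^{j}_{j}=W_{j}+V_{j}(I_{\mathcal{H}}-\Pi_j)=V_{j}$. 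Thus $\RI^{j}$ reduces to the Kraus-rank-one operation $\rho\mapsto V_{j}\rho V_{j}^{\dagger}$, realizable with the same $g$-dimensional ancilla; resetting that ancilla between the two steps and finally coarse-graining the intermediate outcome $(k,m)\in\Omega_D$ back to $k\in\Omega$ recovers $\mathcal{T}$, the identity $V_{j}\sqrt{B_j}=T_{j}$ being exactly the computation behind Eq.~(\ref{biggy}).

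The only delicate point, and hence the expected main obstacle, is the verification that the residual step collapses to a single outcome-dependent isometry instead of a genuine multi-Kraus instrument: one must check that dumping the full $(I_{\mathcal{H}}-\Pi_j)$ contribution into the lone surviving Kraus operator still yields a properly normalized (here rank-one) quantum operation. This, however, is precisely the cross-term computation already carried out around Eq.~(\ref{eq:normRjkm}) in the proof of Theorem~\ref{main theorem} — the input singular vectors of every Kraus operator lie in $\mathrm{supp}(B_j)$, so $W_{j}=V_{j}\Pi_{j}$ and $V_{j}(I_{\mathcal{H}}-\Pi_j)$ is orthogonal to it — so no new estimate is required. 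A last bookkeeping check is that the single $g$-dimensional ancilla serves both steps simultaneously, namely the Lüders step needs dimension $r_{\mathcal{T}}\le g$ and the isometry step needs dimension $g$ with $g$ trivially divisible by $g$, so the output-enlarging tensor factor splits off as in the resource analysis of Section~\ref{subsec:Resource implications subsection}.
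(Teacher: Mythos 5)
Your proposal is correct and follows the same route the paper intends: the dimension claim is exactly the substitution $r_{\mathcal{T}}\le g\Rightarrow\lceil\sqrt{r_{\mathcal{T}}/g}\,\rceil=1$ in Theorem~\ref{th:opt2stepsplit}, and your explicit two-step scheme (Lüders instrument of $B=A^{\mathcal{T}^D}$ with identity postprocessing, followed by the residual Kraus operator collapsing to the full isometry $V_j=T_jB_j^{-1/2}+V_j(I_{\mathcal{H}}-\Pi_j)$) is precisely the construction the corollary itself describes, with the normalization check reducing to the orthogonality $\Pi_j(I_{\mathcal{H}}-\Pi_j)=0$ already established in the proof of Theorem~\ref{main theorem}. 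Your treatment of the corner case $g=1$ and the bookkeeping that the single $g$-dimensional ancilla serves both the $r_{\mathcal{T}}$-outcome Lüders step and the output-enlarging isometry are correct additions that the paper leaves implicit.
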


The above corollary represents an immediate advantage of ASI with respect to direct implementation of this type of  instrument, which would require $r_{\mathcal{T}}\lceil\dim \mathcal{K}/\dim\mathcal{H}\rceil$ dimensional ancilla.

Theorem \ref{th:opt2stepsplit} can be also interpreted as saying that for instruments with the same input and output Hilbert space the graph of correlations between outcomes in successive steps of ASI (see Figure \ref{FIG_2}) is optimally an equally branching tree, which in effect minimizes the needed ancilla dimension.

\subsection{Improvement of Theorem \ref{main theorem}}
\label{sec:imprvThm}

Let us turn our attention to instruments for which $\dim \mathcal{H} > \dim \mathcal{K}$. In this situation the benefit of using Theorem \ref{main theorem} to sequentially implement instrument $\mathcal{T}$ might be partly compromised by the need to introduce additional Kraus operators in the Residual instruments. 
In such cases it is helpful to search for sequential decompositions in which the intermediate Hilbert space dimension is as small as possible. Thus, instead of having ASI with $\mathcal{H}_0=\mathcal{H}_1=\mathcal{H}$, $\mathcal{H}_2=\mathcal{K}$ as in Theorem \ref{main theorem}, we would like to have $\mathcal{H}_1\neq \mathcal{H}$ and $\dim \mathcal{H}_1 < \dim \mathcal{H}$. This is possible to achieve if $\max_{j\in \Omega_B} rank(B_j)<\dim \mathcal{H}$ by fine-tuning Theorem \ref{main theorem} as follows.

\begin{theorem} \label{theorem3}
Given a Total Instrument $\mathcal{T}$ $\in Ins(\Omega,\mathcal{H},\mathcal{K})$ and a POVM $B \in \mathcal{O}(\Omega_B,\mathcal{H})$ such that the induced POVM of $\mathcal{T}$ can be postprocessed to POVM $B$ (i.e. $A^{\mathcal{T}}\rightarrow B$) and  
$d_1=\max_{j\in \Omega_B} rank(B_j)<\dim \mathcal{H}$, 
then there exists an adaptive sequence of an Initial Instrument $\tilde{\mathcal{J}}$ $\in Ins(\Omega_B,\mathcal{H},\mathcal{H}_1)$ and a set of Residual Instruments $\tilde{\RI}^{j} \in Ins(\Omega,\mathcal{H}_1,\mathcal{K})$ $\forall j \in \Omega_B$ such that
%the total instrument can be realized as a concatenation 
\begin{equation} %\label{eq:totallll}
    \mathcal{T}_k= \sum_{j\in \Omega_B} \tilde{\RI}^{j}_k \circ \tilde{\mathcal{J}}_{j} \quad \quad\forall k\in \Omega
    \end{equation}  
where $\dim \mathcal{H}_1=d_1$ and the Initial Instrument 
$\tilde{\mathcal{J}}_{j}(\rho)=K_j\rho K_j^\dagger, \forall \rho \in \mathcal{S(H)}$ is an indecomposable instrument with an induced POVM $B$ (i.e. $B_j=K_j^\dagger K_j$).
\end{theorem}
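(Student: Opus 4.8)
The plan is to mimic the proof of Theorem \ref{main theorem} but replace the full-dimensional input Hilbert space of the residual instruments by a smaller one, namely the space $\mathcal{H}_1$ of dimension $d_1=\max_{j\in\Omega_B}\operatorname{rank}(B_j)$. First I would observe that since $B_j$ has rank at most $d_1$ for every $j$, its support $\Pi_j\mathcal{H}$ embeds isometrically into $\mathcal{H}_1$. So for each $j\in\Omega_B$ I would fix an isometry $W_j:\Pi_j\mathcal{H}\hookrightarrow\mathcal{H}_1$ (extended by zero on $(I_\mathcal{H}-\Pi_j)\mathcal{H}$, or equivalently a partial isometry $W_j:\mathcal{H}\to\mathcal{H}_1$ with $W_j^\dagger W_j=\Pi_j$, $W_jW_j^\dagger\leq I_{\mathcal{H}_1}$). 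The idea is that the Initial Instrument $\tilde{\mathcal{J}}$ first measures $B$ in an indecomposable (Kraus-rank-one) way \emph{and simultaneously} compresses the output into $\mathcal{H}_1$: define $K_j=W_j\sqrt{B_j}$, so that $K_j^\dagger K_j=\sqrt{B_j}W_j^\dagger W_j\sqrt{B_j}=\sqrt{B_j}\Pi_j\sqrt{B_j}=B_j$, confirming that $\tilde{\mathcal{J}}$ has induced POVM $B$ and is indecomposable as required. Normalization $\sum_j K_j^\dagger K_j=\sum_j B_j=I_\mathcal{H}$ is then automatic, and each $K_j$ genuinely maps into $\mathcal{H}_1$.

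Next I would build the residual instruments on $\mathcal{H}_1$. As in Theorem \ref{main theorem}, let $\nu_{kj}$ be the postprocessing matrix realizing $A^{\mathcal{T}}\rightarrow B$, pick Kraus operators $T'_{k,m}$ of $\mathcal{T}$, and set $T_{k,jm}=\sqrt{\nu_{kj}}\,T'_{k,m}$, so that $B_j=\sum_{k,m}T_{k,jm}^\dagger T_{k,jm}$ and (by Eq. (\ref{eq:tkjm_res})) $T_{k,jm}\Pi_j=T_{k,jm}$. Because $W_j$ is a partial isometry with initial space $\Pi_j\mathcal{H}$, the operator $W_j^\dagger:\mathcal{H}_1\to\mathcal{H}$ satisfies $W_j^\dagger W_j=\Pi_j$ and hence $T_{k,jm}W_j^\dagger W_j=T_{k,jm}\Pi_j=T_{k,jm}$; also $K_j^{-1}$ in the pseudo-inverse sense is $B_j^{-1/2}W_j^\dagger$ on $W_j\Pi_j\mathcal{H}$. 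I would therefore define the residual Kraus operators as $\tilde{R}^j_{k,m}=T_{k,jm}B_j^{-1/2}W_j^\dagger$ acting on $\mathcal{H}_1$, possibly supplemented — exactly as in the $\dim\mathcal{H}>\dim\mathcal{K}$ branch of Theorem \ref{main theorem} — by extra Kraus operators supported on $(I_{\mathcal{H}_1}-W_jW_j^\dagger)\mathcal{H}_1$ to absorb the leftover part $I_{\mathcal{H}_1}-W_jW_j^\dagger$ of the normalization; when $\dim\mathcal{H}_1\le\dim\mathcal{K}$ one can instead add terms $c_{k,jm}V_{k,jm}(I_{\mathcal{H}_1}-W_jW_j^\dagger)$ without new outcomes, as in Eq. (\ref{eq:RES1}). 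Then the key computation is
\begin{align}
\sum_{j}\bigl(\tilde{\RI}^j_k\circ\tilde{\mathcal{J}}_j\bigr)(\rho)
&=\sum_{j,m}T_{k,jm}B_j^{-\frac12}W_j^\dagger W_j\sqrt{B_j}\,\rho\,\sqrt{B_j}W_j^\dagger W_j B_j^{-\frac12}T_{k,jm}^\dagger\nonumber\\
&=\sum_{j,m}T_{k,jm}\Pi_j\,\rho\,\Pi_j T_{k,jm}^\dagger=\sum_{j,m}T_{k,jm}\,\rho\,T_{k,jm}^\dagger=\mathcal{T}_k(\rho),
\end{align}
using $W_j^\dagger W_j=\Pi_j$, $B_j^{-1/2}\Pi_j\sqrt{B_j}=\Pi_j$, $\Pi_j B_j^{-1/2}=B_j^{-1/2}$ (Eq. (\ref{projector})), and Eq. (\ref{eq:tkjm_res}); the extra Kraus operators do not contribute because they are annihilated by $W_jW_j^\dagger$ acting on the range of $K_j$.

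For the normalization of each $\tilde{\RI}^j$ I would check $\sum_{k,m}\tilde{R}^{j\,\dagger}_{k,m}\tilde{R}^j_{k,m}=W_j\bigl(\sum_{k,m}B_j^{-1/2}T_{k,jm}^\dagger T_{k,jm}B_j^{-1/2}\bigr)W_j^\dagger+(I_{\mathcal{H}_1}-W_jW_j^\dagger)=W_j\Pi_j W_j^\dagger+(I_{\mathcal{H}_1}-W_jW_j^\dagger)=W_jW_j^\dagger+(I_{\mathcal{H}_1}-W_jW_j^\dagger)=I_{\mathcal{H}_1}$, exactly paralleling Eq. (\ref{eq:normRjkm}). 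The rest of the argument — that this constitutes a valid $2$-step ASI with $\mathcal{H}_0=\mathcal{H}$, $\mathcal{H}_1$ of dimension $d_1$, $\mathcal{H}_2=\mathcal{K}$ — is then identical to the conclusion of Theorem \ref{main theorem}. The main obstacle I anticipate is purely bookkeeping: verifying that the partial isometries $W_j$ interact correctly with the pseudo-inverses $B_j^{-1/2}$ and with Eq. (\ref{eq:tkjm_res}) on the right subspaces, and handling the case $\dim\mathcal{H}_1>\dim\mathcal{K}$ where auxiliary Kraus operators (and the count of Eq. (\ref{eq:addKraus}), now with $\dim\mathcal{H}_1$ in place of $\dim\mathcal{H}$) are still needed; there is no genuinely new conceptual difficulty beyond Theorem \ref{main theorem}, only the observation that the intermediate space can be shrunk to the largest support dimension of the $B_j$'s.
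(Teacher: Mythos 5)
Your construction is essentially identical to the paper's: your partial isometry $W_j$ with $W_j^\dagger W_j=\Pi_j$ is exactly the paper's operator $M_j=\sum_{k}\ket{k}\bra{v^j_k}$, the initial Kraus operators $K_j=W_j\sqrt{B_j}$ coincide, and your residual Kraus operators $T_{k,jm}B_j^{-1/2}W_j^\dagger$ are precisely the paper's $\tilde{\RI}^{j}=\RI^{j}\circ\mathcal{M}_j$ written out at the level of Kraus operators. If anything, you are slightly more careful than the paper about normalizing $\tilde{\RI}^{j}$ on the leftover subspace $(I_{\mathcal{H}_1}-W_jW_j^\dagger)\mathcal{H}_1$ when $\operatorname{rank}(B_j)<d_1$ (where the paper's map $\mathcal{M}_j$ is not actually trace-preserving), a point its appeal to ``isometric channels'' quietly glosses over.
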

 
\begin{proof}
Let us choose some orthonormal basis $\{\ket{j}\}_{j=1}^{\dim \mathcal{H}}$ of Hilbert space $\mathcal{H}$. We define Hilbert space $\mathcal{H}_1$ as subspace of $\mathcal{H}$ generated by basis vectors $\{\ket{j}\}_{j=1}^{\dim \mathcal{H}_1}$.
Consider spectral decompositions of operators $B_j$
\begin{align}
    B_j=\sum_{k=1}^{rank(B_j)} \lambda^j_k \ket{v^j_k} \bra{v^j_k},
\end{align}
where $\lambda^j_k>0$ and vectors $\ket{v^j_k}$ are orthonormal. 
We define operators %$M_j= \sum_{k=1}^{rank(B_j)} \ket{k} \bra{v^j_k}$
\begin{align}
%\label{eq:def_Mj}
    M_j= \sum_{k=1}^{rank(B_j)} \ket{k} \bra{v^j_k},
\end{align}
which map from Hilbert space $\mathcal{H}$ to $\mathcal{H}_1$ and we note that $M^\dagger_j M_j = \Pi_j$ is the projector onto the support of operator $B_j$ as in Theorem \ref{main theorem}.
Let us define instrument $\tilde{\mathcal{J}}$ by choosing it's Kraus operators to be $K_j=M_j\sqrt{B_j}$. 
It is easy to check normalization of instrument $\tilde{\mathcal{J}}$, since 
$\sum_j K_j^\dagger K_j=\sum_j \sqrt{B_j}\,\Pi_j\sqrt{B_j}=I$. 
Next, we define isometric channels $\mathcal{M}_j \in Ch(\mathcal{H}_1,\mathcal{H})$ as $\mathcal{M}_j(\rho)=M_j^\dagger \rho M_j$ and we apply Theorem \ref{main theorem} on instrument $\mathcal{T}$ and obtain residual instruments $\RI^{j}$. This allow us to define residual instruments $\tilde{\RI}^{j}$ as $\tilde{\RI}^{j}=\RI^{j} \circ\, \mathcal{M}_j$. Finally, we use $\Pi_j\sqrt{B_j}=\sqrt{B_j}$ to observe that 
\begin{align}
    (\tilde{\RI}^{j}_k \circ \tilde{\mathcal{J}}_{j})(\rho)&= \RI^{j}(M_j^\dagger  M_j\sqrt{B_j} \rho \sqrt{B_j}M_j^\dagger M_j) \nonumber\\
    &= (\RI^{j}_k \circ \mathcal{J}_{j})(\rho)\quad \forall \rho\in \mathcal{S}(\mathcal{H}).
\end{align}
This concludes the proof, since due to Theorem \ref{main theorem}  $\forall k\in \Omega$
$\mathcal{T}_k= \sum_{j\in \Omega_B} \RI^{j}_k \circ \mathcal{J}_{j} = \sum_{j\in \Omega_B} \tilde{\RI}^{j}_k \circ \tilde{\mathcal{J}}_{j}$.
\end{proof}

Ideally, one would like to determine the minimal ancilla dimension also for application of Theorem \ref{theorem3} to instruments that map from bigger to smaller Hilbert spaces ($\dim \mathcal{H} > \dim \mathcal{K}$). However, the situation is rather complex, since there is no simple formula that would relate the chosen postprocessing matrix and the rank of POVM elements $B_j$. On top of that, also the number of needed Kraus operators for Residual instruments is not directly determined by the Total instrument, but depends on the dimensionality of the images of it's Kraus operators as well as on the dimension of the intermediate Hilbert space $\mathcal{H}_1$. For this reason, we instead present an example illustrating usefulness of Theorem \ref{theorem3} and leave research on resource implications of Theorem \ref{theorem3} for future work.

Let us consider the following example of a Total instrument $\mathcal{\tilde{T}}_i(\rho)=\tilde{K}_i\rho\tilde{K}^\dagger_i$,  $\tilde{\mathcal{T}}\in Ins(\{1,2,3\},\mathcal{H}_2^{\otimes 2},\mathcal{H}_2)$ mapping from $4$-dimensional Hilbert space to a $2$-dimensional Hilbert space, where

\begin{align}
\label{eq:defbadex}
    \tilde{K}_1&=\ket{0}\bra{00}+\frac{1}{\sqrt{2}}\ket{1}\bra{01} \nonumber\\
    \tilde{K}_2&=\frac{1}{\sqrt{2}}\ket{0}\bra{01}+\frac{1}{\sqrt{2}}\ket{1}\bra{10} \\
    \tilde{K}_3&=\frac{1}{\sqrt{2}}\ket{0}\bra{10}+\ket{1}\bra{11}. \nonumber
\end{align}

Direct single step implementation of this instrument would clearly require three dimensional ancilla. 
As we have seen in the discussion below Eq. (\ref{eq:dagen}) to explore resource saving applications of Theorem \ref{main theorem} it is sufficient to consider coarse-grainings of the three outcome induced POVM $A^{\tilde{\mathcal{T}}}$. There are essentially only three nontrivial options: 
i) join outcomes $1$ and $2$, ii) join outcomes $1$ and $3$, iii) join outcomes $2$ and $3$. 
On purpose the example is built, so that all these options are equivalent from the point of view of the relevant parameters, which are the rank of operators $\tilde{K}_i$, and the rank of potential POVM elements $B_j$. Thus, it is enough to discuss option i) to see resource implications of Theorem \ref{main theorem} and improvement if Theorem \ref{theorem3} is used instead. To see the difference, we define a two outcome POVM $B$ as
\begin{align}
    B_0&=\tilde{K}^\dagger_1 \tilde{K}_1 + \tilde{K}^\dagger_2 \tilde{K}_2    
    %= I-\frac{1}{2}\ket{10}\bra{10}-\ket{11}\bra{11} \nonumber \\
    = I-B_1 \nonumber \\
    B_1&=\tilde{K}^\dagger_3 \tilde{K}_3 = \frac{1}{2}\ket{10}\bra{10}+\ket{11}\bra{11}.
\end{align}
It is easy to observe that $rank (B_0)=3$, $rank (B_1)=2$ and 
% $\forall i \;\rank K_i=2$. 
$\forall i \;\dim Im(K_i)=2$. 
Referring to the notation below the proof of Theorem \ref{main theorem}, this implies $d^0_{1,0}=d^0_{2,0}=2$, which means that if Theorem \ref{main theorem} is applied then $n_{add}=1$ additional Kraus operators are needed (see Eq. \ref{eq:addKraus}) for both Residual instrument $\mathcal{R}^0$ and $\mathcal{R}^1$. This means instruments $\mathcal{R}^0$ and $\mathcal{R}^1$ have Kraus rank $3$ and $2$, respectively. Therefore, corresponding 2-step ASI requires at least three dimensional ancilla to realize it's instruments. 
Therefore, usage of Theorem \ref{main theorem} provides no advantage with respect to direct dilation of the Total instrument $\tilde{\mathcal{T}}$. 
On the other hand, if Theorem \ref{theorem3} is used instead, one can choose a three-dimensional Hilbert space $\mathcal{H}_1$. Thanks to this, there is no need for additional Kraus operator for $\mathcal{R}^0$ and again one additional Kraus is needed for $\mathcal{R}^1$. However, in this case both $\mathcal{R}^0$ and $\mathcal{R}^1$ have Kraus rank $2$ and a two-dimensional ancilla is sufficient for realization of the corresponding ASI. Thus, we demonstrated an example in which Theorem \ref{theorem3} provides resource saving ASI in comparison to both direct dilation of the instrument and use of Theorem \ref{main theorem}.

\section{%Implementing Instruments 
Implementation of an Instrument via $N$-step ASI}
%\subsection{Implementation of an Instrument as an $N-$step ASI}
%adaptive sequence of Instruments}
\label{sec:n-asi}
The aim of this section is to use Theorem \ref{main theorem} recursively to show that one can decompose an instrument also into an $N$-step ASI.
%adaptive sequence of instruments. 
Results presented below fully %that we are about to  present %will %fully harness the 
rely on the notion of an adaptive sequence of instruments (ASI), which was  introduced in Section \ref{sec1} and we believe that it provides precise and readable notation in the following proofs and discussions.
\begin{corollary} 
\label{corollary 1}
    Given a Total instrument $\mathcal{T}\in Ins(\Omega_N,\mathcal{H}_0,\mathcal{H}_N)$ and the postprocessing relations $A^{\mathcal{T}}\rightarrow B^{N-1}\rightarrow \cdots \rightarrow B^{1}$ for %the 
    POVMs $B^{k} \in \mathcal{O}(\Omega_k,\mathcal{H}_0)$ %for 
    $k=1,\cdots,N-1$, there exists an adaptive sequence of instruments $\mathcal{Q}=({\mathcal{Q}^1,\cdots, \mathcal{Q}^N})$ in which the instrument applied in the $k^{th}$ step is denoted by $\mathcal{I}^{k,a_{k-1}}\in \mathcal{Q}^k$, the Initial instrument $\mathcal{I}^{1,1}$ is the Lüders instrument of POVM $B^{1}$ and 
    \begin{equation}
      \label{eq:splittoN}
     \mathcal{T}_{a_N}=\!\!\sum_{a_{N-1} \in \Omega_{N-1}}\!\!\!\cdots\sum_{a_1 \in \Omega_1} \mathcal{I}^{N,a_{N-1}}_{a_N}\!\circ \ldots \circ\mathcal{I}^{2,a_1}_{a_2}\circ\mathcal{I}^{1,1}_{a_1}.
     \end{equation}    
\end{corollary}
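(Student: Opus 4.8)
The plan is to prove Corollary \ref{corollary 1} by induction on the number of steps $N$, using Theorem \ref{main theorem} as the single-step splitting tool. The base case $N=2$ is exactly Theorem \ref{main theorem}: given $A^{\mathcal{T}}\rightarrow B^1$ we obtain the $2$-step ASI consisting of the L\"uders instrument $\mathcal{I}^{1,1}$ of $B^1$ and the Residual instruments $\RI^{j}$, so that $\mathcal{T}_{a_2}=\sum_{a_1} \RI^{a_1}_{a_2}\circ \mathcal{I}^{1,1}_{a_1}$, which matches \eqref{eq:splittoN} with $\mathcal{I}^{2,a_1}:=\RI^{a_1}$.

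For the inductive step I would assume the statement holds for $N-1$ steps and apply Theorem \ref{main theorem} \emph{at the other end} of the chain of postprocessings. Concretely, given $A^{\mathcal{T}}\rightarrow B^{N-1}\rightarrow \cdots \rightarrow B^1$, first invoke Theorem \ref{main theorem} with the POVM $B^{N-1}$ (this is legitimate since $A^{\mathcal{T}}\rightarrow B^{N-1}$): this produces a $2$-step decomposition $\mathcal{T}_{a_N}=\sum_{a_{N-1}} \RI^{a_{N-1}}_{a_N}\circ \mathcal{J}_{a_{N-1}}$ where $\mathcal{J}$ is the L\"uders instrument of $B^{N-1}$ acting on $\mathcal{H}_0$. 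Now the remaining chain $B^{N-1}\rightarrow B^{N-2}\rightarrow\cdots\rightarrow B^1$ has length $N-2$, but it starts from $B^{N-1}$, which is precisely the induced POVM of the L\"uders instrument $\mathcal{J}$ (since $A^{\mathcal{J}}_j=\sqrt{B^{N-1}_j}\sqrt{B^{N-1}_j}=B^{N-1}_j$). Hence the induction hypothesis applies to the instrument $\mathcal{J}\in Ins(\Omega_{N-1},\mathcal{H}_0,\mathcal{H}_0)$ treated as a ``Total instrument'', yielding an $(N-1)$-step ASI with Initial instrument equal to the L\"uders instrument of $B^1$ and
$\mathcal{J}_{a_{N-1}}=\sum_{a_{N-2}}\cdots\sum_{a_1}\mathcal{I}^{N-1,a_{N-2}}_{a_{N-1}}\circ\cdots\circ\mathcal{I}^{1,1}_{a_1}$.

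Substituting this expansion of $\mathcal{J}_{a_{N-1}}$ into the $2$-step relation and setting $\mathcal{I}^{N,a_{N-1}}:=\RI^{a_{N-1}}$ gives exactly \eqref{eq:splittoN}, and the composed object is by Definitions \ref{def:ASI} and \ref{def:TotalI} a genuine $N$-step ASI whose Total Instrument is $\mathcal{T}$. The Initial instrument is the L\"uders instrument of $B^1$ as required. One bookkeeping point to verify is that each freshly built set of instruments $\mathcal{Q}^k=\{\mathcal{I}^{k,a_{k-1}}\}$ indeed consists of valid quantum instruments with matching input/output spaces — but this is guaranteed step by step by Theorem \ref{main theorem} and the induction hypothesis, so no new analytic work is needed.

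The only mild subtlety — and the place I would be most careful — is the direction in which the recursion consumes the postprocessing chain: one must peel off $B^{N-1}$ first (the POVM ``closest'' to $A^{\mathcal{T}}$) rather than $B^1$, so that at each stage the object being further decomposed is a L\"uders instrument whose induced POVM is exactly the next POVM in the chain, keeping the hypothesis of Theorem \ref{main theorem} satisfied. Equivalently one could run the induction from the short end, splitting $\mathcal{T}$ using $B^1$ and then refining; either ordering works, but stating it cleanly is the one spot where an off-by-one or a mismatch of which POVM plays the role of $A^{\mathcal{T}}$ could creep in. Apart from that, the proof is a routine iteration of Theorem \ref{main theorem}.
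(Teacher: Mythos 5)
Your proposal is correct and follows essentially the same route as the paper: the paper also applies Theorem \ref{main theorem} recursively, first splitting $\mathcal{T}$ via $B^{N-1}$ to obtain the last-step Residual instruments, then treating the resulting L\"uders Initial instrument (whose induced POVM is exactly $B^{N-1}$) as the new Total instrument and repeating with $B^{N-1}\rightarrow B^{N-2}$, and so on, exactly as in your inductive step. Your phrasing as induction versus the paper's explicit $(N-1)$-fold recursion is only a cosmetic difference, and your remark about peeling the chain from the $B^{N-1}$ end is precisely the point the paper's construction relies on.
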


\begin{proof}
We will apply the Theorem (\ref{main theorem}) recursively to prove our result.  
In order to clearly distinguish different uses of Theorem \ref{main theorem}, we will add a fixed superscript to all involved objects.
%, what were the object entering as assumptions and products of the use of theorem To make 
At first, we will use the postprocessing relation $A^{\mathcal{T}}\rightarrow B^{N-1}$ and decompose the Total Instrument $\mathcal{T}\equiv\mathcal{T}^N$ using Theorem \ref{main theorem} into a two-step adaptive sequence. %, and 
We use the obtained Residual Instruments to define the last step of the $N$ step adaptive sequence. Thus, 
$\mathcal{Q}^N=\{\mathcal{I}^{N,a_{N-1}}\equiv\RI^{N,a_{N-1}}\}_{a_{N-1}\in \Omega_{N-1}}$. We denote the obtained initial instrument as $\mathcal{J}^{N}$.
Next, we use the Initial Instrument $\mathcal{J}^{N}$ as the Total Instrument in Theorem \ref{main theorem}, i.e. $\mathcal{T}^{N-1}=\mathcal{J}^{N}$. Thus, at this step $\mathcal{T}^{N-1}\in Ins(\Omega_{N-1},\mathcal{H}_0,\mathcal{H}_0)$ and $A^{{\mathcal{T}}^{N-1}} = B^{N-1}$. Using the postprocessing relation $B^{N-1}\rightarrow B^{N-2}$ in Theorem \ref{main theorem}, we obtain a new set of residual instruments, which we use to define the one-to-last step of the adaptive sequence. 
%\pagebreak

$\mathcal{Q}^{N-1}=\{\mathcal{I}^{N-1,a_{N-2}}\equiv\RI^{N-1,a_{N-2}}\}_{a_{N-2}\in \Omega_{N-2}}$. 

We remind that currently
\begin{align}
    %\sum_{a_{N-2} \in \Omega_{N-2}} \sum_{a_{N-1} \in \Omega_{N-1}} 
    \sum_{\substack{a_{N-1} \in \Omega_{N-1} \\ a_{N-2} \in \Omega_{N-2}}} 
    \mathcal{I}^{N,a_{N-1}}_{a_{N}} \circ\mathcal{I}^{N-1,a_{N-2}}_{a_{N-1}} \circ \mathcal{J}^{N-1}_{a_{N-2}}&= \nonumber \\ 
    %\sum_{a_{N-2} \in \Omega_{N-2}} \sum_{a_{N-1} \in \Omega_{N-1}} 
    \sum_{\substack{a_{N-1} \in \Omega_{N-1} \\ a_{N-2} \in \Omega_{N-2}}}     
    \mathcal{I}^{N,a_{N-1}}_{a_{N}} \circ\RI^{N-1,a_{N-2}}_{a_{N-1}} \circ \mathcal{J}^{N-1}_{a_{N-2}}&= \nonumber \\
    \sum_{a_{N-1} \in \Omega_{N-1}} \mathcal{I}^{N,a_{N-1}}_{a_{N}} \circ\mathcal{T}^{N-1}_{a_{N-1}}&= \nonumber \\
%    \sum_{a_{N-1} \in \Omega_{N-1}} \mathcal{I}^{N,a_{N-1}}_{a_{N}} \circ\mathcal{J}^{N}_{a_{N-1}}&= \nonumber \\    
    \sum_{a_{N-1} \in \Omega_{N-1}} \RI^{N,a_{N-1}}_{a_{N}} \circ\mathcal{J}^{N}_{a_{N-1}}&= \nonumber \\        
=\mathcal{T}^N_{a_N}
\end{align}

By performing the described recursion process $(N-1)$ times in total, we determine the whole adaptive sequence of instruments if, in the last recursion step, we fix $\mathcal{Q}^{1}=\{\mathcal{I}^{1,1}\equiv\mathcal{J}^{2}\}$. We note that all instruments in the sets $\mathcal{Q}^1,\cdots, \mathcal{Q}^{N-1}$ are indecomposable, since they originate from application of Theorem \ref{main theorem} onto a % Total instrument, which is also 
Lüders instrument.
\end{proof}
%\tcr{!!! NEW !!!}

%\tcr{write this as a corollary ??}

We note that the resource consideration preceding Theorem \ref{th:opt2stepsplit} can be generalized to the $N$-step scenario. Assuming that $\dim \mathcal{H} \leq \dim \mathcal{K}$ and that we are applying Corollary \ref{corollary 1} to the detailed instrument $\mathcal{T}^D$, the highest Kraus rank of $\mathcal{T}^D$, which any $N$-step ASI can implement is given by 
\begin{align}
    \label{eq:nstepbound}
    r_{\mathcal{T}}\leq d_A^{N-1}\frac{d_A}{g},
\end{align}
where $g$ is given by Eq. (\ref{def:gincr}). This implies attainable lower bound for ancilla dimension, i.e. the minimal ancilla dimension given by
\begin{align}
\label{eq:dA_gen}
     d_A=g \lceil\,\sqrt[N]{\frac{r_{\mathcal{T}}}{g^{N-1}}}\,\rceil
     = g \lceil\,\frac{1}{g}\sqrt[N]{g\, r_{\mathcal{T}}}\;\rceil.
\end{align}
%One could consider also gradual increase in dimensionality of intermediate Hilbert spaces of ASI, but this is more limiting to the achievable rank of the Total instrument than in Corollary \ref{corollary 1}, since usable ancilla size for detecting outcomes is lowered earlier than in the last step.

In situations when the number of qubits for the Hilbert space $\mathcal{H}$ and the size of ancilla usable for quantum computations is fixed by the available hardware, Eq. (\ref{eq:nstepbound}) can be rewritten as a lower bound on the number of steps of any ASI (originating from Corollary \ref{corollary 1})  capable to implement the chosen detailed %indecomposable %Total
instrument $\mathcal{T}^D$. In particular, we get
\begin{align}
    \label{eq:boundforN}
    N\geq \log_{d_A}{(g\,r_{\mathcal{T}})}.
\end{align}
Another interesting way to rewrite Eq. (\ref{eq:nstepbound}) is to put on one side the quantities fixed by the choice of the Total instrument and on the other the quantities we can trade-off, i.e. the number of steps $N$ of the ASI and the dimension of the used ancillary system $d_A$ or equivalently the corresponding number of ancillary qubits $d_A=2^{n_A}$. Thus, we obtain $g\,r_{\mathcal{T}}\leq 2^{(n_A\,N)}$ or equivalently
the following corollary.
%\begin{align}
%    N\,n_A \geq \log_2{(g\, r_{\mathcal{T}})}
%\end{align}
%This inequality requires that the number of steps of the considered ASI times the number of used ancilla qubits must be %at least the logarithm of the product of the dimension increase factor $g$ and the rank of Total instrument.  

\begin{corollary}
\label{cor:tradeoff1}
    Given a Total instrument $\mathcal{T}\in Ins(\Omega_N,\mathcal{H}_0,\mathcal{H}_N)$
    with $\dim\mathcal{H}_0\leq\dim \mathcal{H}_N$
    any $N$-step adaptive sequence of instruments, which implements it using Corollary \ref{corollary 1}, obeys the inequality %obeys the following tradeoff between
\begin{align}
    \label{eq:tradeoff1}
    N\,n_A \geq \log_2{(g\, r_{\mathcal{T}})}
\end{align}
%This inequality requires that 
that requires
the number of steps of the considered ASI times the number of used ancilla qubits $n_A$ to be at least the logarithm of the product of the dimension increase factor $g=\lceil\dim \mathcal{H}_N/\dim\mathcal{H}_0\rceil$ and the rank of Total instrument.  
\end{corollary}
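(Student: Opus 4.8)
The plan is to derive Corollary~\ref{cor:tradeoff1} as a direct algebraic rearrangement of the $N$-step resource bound in Eq.~(\ref{eq:nstepbound}), so essentially no new combinatorics is required beyond what precedes it. First I would recall that by the discussion leading to Eq.~(\ref{eq:nstepbound}), if we apply Corollary~\ref{corollary 1} to the detailed instrument $\mathcal{T}^D$ using a repeatedly reusable ancilla of dimension $d_A$ across all $N$ steps, then the realizable Kraus rank satisfies $r_{\mathcal{T}} \le d_A^{N-1}\,(d_A/g) = d_A^N/g$, where $g = \lceil \dim\mathcal{H}_N/\dim\mathcal{H}_0\rceil$ captures the output-dimension increase and the hypothesis $\dim\mathcal{H}_0 \le \dim\mathcal{H}_N$ guarantees $g\ge 1$ so that all the ceiling/divisibility manipulations behind Eq.~(\ref{eq:nstepbound}) are valid. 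Multiplying through by $g$ gives $g\,r_{\mathcal{T}} \le d_A^N$.

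Next I would substitute the qubit parametrization $d_A = 2^{n_A}$, which is the natural accounting when the ancilla consists of $n_A$ qubits that are reset and reused at each of the $N$ steps. This turns $g\,r_{\mathcal{T}} \le d_A^N$ into $g\,r_{\mathcal{T}} \le 2^{n_A N}$, and taking base-$2$ logarithms of both sides — legitimate since both sides are positive — yields $\log_2(g\,r_{\mathcal{T}}) \le n_A N$, which is exactly Eq.~(\ref{eq:tradeoff1}). I would then add a sentence emphasizing the interpretation: the left side is fixed once the Total instrument is chosen (it depends only on its Kraus rank and on the ratio of output to input Hilbert space dimensions), whereas $N$ and $n_A$ on the right are precisely the spatial/temporal resources one is free to trade off against each other, subject only to this single inequality.

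The only subtlety worth flagging — and where I would be careful rather than obstructed — is making sure the statement is correctly scoped: the bound is claimed for ASI obtained \emph{via} Corollary~\ref{corollary 1} applied to $\mathcal{T}^D$ (equivalently, to $\mathcal{T}$ together with final classical postprocessing of outcomes), and it relies on the assumption that the \emph{same} $d_A$-dimensional ancilla is reused in every step. I do not expect any genuine difficulty here; the content is entirely in Eq.~(\ref{eq:nstepbound}), and the corollary is a cosmetic but conceptually useful repackaging of it. If one wanted to be thorough one could also note consistency with the $N=1$ case, where the inequality reduces to $n_A \ge \log_2(g\,r_{\mathcal{T}})$, matching the direct single-step dilation requirement discussed in Section~\ref{sec1}.
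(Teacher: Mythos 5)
Your proposal is correct and matches the paper's own derivation: the text immediately preceding the corollary obtains it in exactly this way, rewriting Eq.~(\ref{eq:nstepbound}) as $g\,r_{\mathcal{T}}\leq 2^{(n_A N)}$ via $d_A=2^{n_A}$ and taking base-$2$ logarithms. Your added remarks on scoping (application via Corollary~\ref{corollary 1} to $\mathcal{T}^D$ with a reused ancilla) and the $N=1$ consistency check are accurate but not needed beyond what the paper already states.
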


\subsection{General limitations of N-step ASI}
\label{sec:limitations_N-ASI}
Considerations from the previous subsection can be slightly generalized to characterize implementation possibilities of arbitrary $N$-step ASI, not just those produced via Corollary \ref{corollary 1}. 
As in the rest of the paper we want to consider the demand for quantum system size in the same way. Thus, if in the first step $d_A\equiv d_{A_0}$ dimensional ancilla is used together with the main quantum system of dimension $d_0=\dim {\mathcal{H}_0}$ then after the first step of ASI, which ends with $d_1$ dimensional system only $d_{A_1}=\lfloor\frac{d_0 d_A}{d_1}\rfloor$ dimensional ancillary system can be used for implementation of the second step of the ASI. 
%Let us for simplicity consider that all the involved dimensions are powers of two to avoid complications caused by indivisibility, since one typically works with systems of qubits.
 Analogically, after $k$-th step $d_{A_k}=\lfloor \frac{d_0 d_A}{d_k}\rfloor$ dimensional ancilla is available.
Thus, let us assume we have an arbitrary $N$-step ASI, which uses $d_A$ dimensional ancilla and implements Total instrument  $\mathcal{T}\in Ins(\Omega_N,\mathcal{H}_0,\mathcal{H}_N)$. Our goal is to find an upper bound on the rank of  $\mathcal{T}$.

Let us start with just a $2$-step ASI. Due to Eq. (\ref{eq:splittoN}) we have that
\begin{align}
    r_{a_2}\leq \sum_{a_1} r^{2,a_1}_{a_2} r^{1,1}_{a_1},
\end{align}
where we denoted by $r_{a_2}, r^{2,a_1}_{a_2}, r^{1,1}_{a_1}$ the minimal Kraus rank of $\mathcal{T}_{a_2}, \mathcal{I}^{2,a_1}_{a_2}, \mathcal{I}^{1,1}_{a_1}$, respectively and we used how minimal Kraus rank behaves under sum and composition. Since $r_{\mathcal{T}}=\sum_{a_2} r_{a_2}$ we obtain
\begin{align}
    \label{eq:rankbound}
    r_{\mathcal{T}}\leq \sum_{a_1}(r^{1,1}_{a_1}\sum_{a_2} r^{2,a_1}_{a_2}).
\end{align}
Maximum rank of instrument $\mathcal{I}^{k,a_{k-1}}\in  Ins(\Omega_k, \mathcal{H}_{k-1},\mathcal{H}_k)$, which can be implemented with $d_{A_{k-1}}$ dimensional ancilla is $d_{A_k}$, because that is the dimensionality of the system on which we can perform projective measurement (see also discussion in Section \ref{sec1}). Thus, we have
\begin{align}
    \sum_{a_k} r^{k,a_{k-1}}_{a_k}\leq d_{A_k}.
\end{align}
Inserting this into Eq. (\ref{eq:rankbound}) we obtain
\begin{align}
    \label{eq:rankbound2s}
%    r_{\mathcal{T}}\leq \sum_{a_1}r^{1,1}_{a_1}d_{A_2}\leq d_{A_1}d_{A_2}=\frac{(d_0 d_A)^2}{d_1 d_2}.
    r_{\mathcal{T}}\leq \sum_{a_1}r^{1,1}_{a_1}d_{A_2}\leq d_{A_1}d_{A_2}=\lfloor\frac{d_0 d_A}{d_1}\rfloor \;\lfloor\frac{d_0 d_A}{d_2}\rfloor.
\end{align}
Repeating the same steps and logic for $N$-step ASI gives
\begin{align}
    \label{eq:rankboundNs}
%    r_{\mathcal{T}}\leq d_{A_1}\ldots d_{A_N}=\frac{d^N_0 d^N_A}{d_1\ldots d_N}.
    r_{\mathcal{T}}\leq d_{A_1}\ldots d_{A_N}= \lfloor\frac{d_0 d_A}{d_1}\rfloor \dots \lfloor\frac{d_0 d_A}{d_N}\rfloor.
\end{align}
If for a fixed number of steps $N$ we want to achieve the highest possible rank $r_{\mathcal{T}}$ then we need to lower dimensionality of intermediate Hilbert spaces $\mathcal{H}_k$, $k=1,\ldots,N-1$ as much as possible. Corollary \ref{corollary 1} shows that $d_k=d_0$ for $k=1,\ldots,N-1$ is always possible leading to 
\begin{align}
    \label{eq:rankboundNsd0}
    r_{\mathcal{T}}\leq d_A^{N-1} \lfloor\frac{d_0 d_A}{d_N}\rfloor.
\end{align}
For $\dim \mathcal{H} \leq \dim \mathcal{K}$ inequality (\ref{eq:rankboundNsd0}) (almost precisely) coincides with (\ref{eq:nstepbound})    
if we take into account that $d_A$ should be chosen to be divisible by $g$. If we consider for simplicity that all the involved dimensions are powers of two to avoid complications caused by indivisibility, since one typically works with systems of qubits, then the two inequalities match exactly.
Advantage of inequality (\ref{eq:rankboundNs}) is that it is valid also for the case of  $\dim \mathcal{H} > \dim \mathcal{K}$ and we saw in Theorem \ref{theorem3} that also $d_k<d_0$ is possible with the methods developed in this paper.

From inequality (\ref{eq:rankboundNs}) we see that if we cannot lower dimensions $d_k$ significantly below $d_0$ then  tradeoff between number of steps $N$ of a general ASI and the number of qubits $n_A$ of the used ancilla will essentially have the form %$N n_A\geq const$ 
\begin{align}
    \label{eq:gentradeoff}
    N n_A\geq const
\end{align}
similarly as in (\ref{eq:tradeoff1}). It seems quite likely that a set quantum operations does not ignore some subspace of the Hilbert space, so one is tempted to think that tradeoff (\ref{eq:gentradeoff}) might be generic for general ASI implementing a fixed Total instrument.

%% What about difference between total instrument and detailed total instrument???
\subsection{Instruments with N-tuple outcome space} \label{Instruments with n-tuple outcome space}

Let us consider a situation in which the outcome space of the considered Total instrument $\mathcal{T}$ is an $N-$fold Cartesian product space $\Omega = \lambda_1 \times\ldots \times\lambda_N$. In this subsection, we will show that instrument $\mathcal{T}$ can be realized using an 
$N$-step ASI 
%adaptive sequence of instruments 
$\mathcal{Q}$ such that its outcome 
$(a_1,\ldots,a_N)\in \Omega$ is produced by obtaining an element $a_k$ at the $k^{th}$ step of the adaptive sequence $\mathcal{Q}$.

%the outcome at the $i^{th}$ step is obtained as $x_i$ where $x_i \in x^N$.  \\
%We would formulate a corollary of Theorem (\ref{main theorem}) as our proposition, which %will be constrained by the conditions 

We will do this by demonstrating that the considered scenario is a special case of %Theorem \ref{main theorem}.
Corollary \ref{corollary 1}. Thus, we define $\Omega_k = \lambda_1 \times \ldots\times \lambda_k$ and POVMs $B^k\in \mathcal{O}(\Omega_k,\mathcal{H})$ as 
\begin{align} 
\label{eq:def_marg_POVM}
   B^k_{a_1,\ldots,a_k}=\sum_{a_{k+1}\in \Omega_{k+1}} B^{k+1}_{a_1,\ldots,a_{k+1}}, 
   %\quad \forall k =1,\ldots,N-1
\end{align}
for all $k=1,\ldots,N-1$ with $B^N\equiv A^{\mathcal{T}}$. 
Clearly, marginalization over the rightmost outcome, which relates POVMs $B^{k+1}$ and $B^{k}$, can be understood as the postprocessing relation $B^{k+1}\rightarrow B^k$ by setting 
\begin{align}
\label{eq:numarg}
    \nu_{\vec{a}_{k+1}\, \vec{a}'_{k}}=\delta_{a_1 a'_1}\ldots\delta_{a_k a'_k} \quad \forall \vec{a}_{k+1}\in\Omega_{k+1}\;\;\forall \vec{a}'_{k}\in\Omega_k,
\end{align}
where we used the notation $\vec{a}_{m}\equiv (a_1,\ldots,a_{m})$. 
Clearly, $\forall \vec{a}_{k+1}$ $\sum_{\vec{a}'_{k}\in\Omega_k} \nu_{\vec{a}_{k+1}\, \vec{a}'_{k}}=1$ and $\nu_{\vec{a}_{k+1}\, \vec{a}'_{k}}$ is nonzero only if $a_m=a'_m \; \forall m=1,\ldots,k$. Due to %Theorem \ref{main theorem}
Corollary \ref{corollary 1}, the Total Instrument $\mathcal{T}$ can be realized as an ASI  
%adaptive sequence of instruments 
in which outcomes at the $k^{th}$ step belong to $\Omega_k$ and their probability of appearance depends on the POVM $B^k$. Thus, if outcome $a'_1$ is measured in the first step, then outcome $\vec{a}_{2}=(a_1,a_2)$ will be obtained in the second step. Here, we must obtain %a result 
an outcome such that $\nu_{\vec{a}_{2}\, \vec{a}'_{1}}$ is nonzero (see Remark \ref{re_outcomes}). As we have mentioned above, this happens only if $a_1=a'_1$. 
%This is due to Eq.(\ref{eq:numarg}) and by the fact that the outcomes obtained in the sequential realization of $\mathcal{T}$ are restricted by the condition $\nu_{\vec{a}_{k+1}\, \vec{a}'_{k}}>0$, i.e., some of the outcomes of the Residual Instrument would not occur at all for all the states which will be produced by quantum operation of the Initial Instrument. Here, the argument used is similar to that used for Theorem (\ref{main theorem}). 
Thus, due to Eq.(\ref{eq:numarg}) we can simplify %the sums 
Eq. (\ref{eq:splittoN}), 
%which appears if we apply Corollary \ref{corollary 1} to the considered scenario, as follows
which upon applying Corollary \ref{corollary 1} to the considered scenario appears as follows
\begin{comment}
\tcr{
\begin{align}
 \mathcal{T}_{\vec{a}_N}&=\!\!\sum_{\vec{a}_{N-1} \in \Omega_{N-1}}\!\!\!\!\!\cdots\sum_{\vec{a}_1 \in \Omega_1} \mathcal{I}^{N,\vec{a}_{N-1}}_{\vec{a}_N}\!\circ \ldots \circ\mathcal{I}^{2,\vec{a}_1}_{\vec{a}_2}\circ\mathcal{I}^{1,1}_{\vec{a}_1} \nonumber \\
 %\mathcal{T}_{\vec{a}_N}
 &=\!\!\sum_{a_{N-1} \in \lambda_{N-1}}\!\!\!\!\!\cdots\sum_{a_1 \in \lambda_1} \mathcal{I}^{N,\vec{a}_{N-1}}_{\vec{a}_N}\!\circ \ldots \circ\mathcal{I}^{2,\vec{a}_1}_{\vec{a}_2}\circ\mathcal{I}^{1,1}_{\vec{a}_1}, %\nonumber \\
\end{align}}
where in the first line, $\vec{a}_k$ stands for independent $k$ tuple variables, while in the second line, the variables are $a_1,\ldots,a_N$ and $\vec{a}_k$ denotes $k$ tuple $(a_1,\ldots,a_k)$ formed from them.
In conclusion, the above findings can be summarized in the following corollary.
\end{comment}
\begin{align}
 \mathcal{T}_{\vec{a}_N}&=\!\!\sum_{\vec{a}'_{N-1} \in \Omega_{N-1}}\!\!\!\!\!\cdots\sum_{\vec{a}'_1 \in \Omega_1} \mathcal{I}^{N,\vec{a}'_{N-1}}_{\vec{a}_N}\!\circ \ldots \circ\mathcal{I}^{2,\vec{a}'_1}_{\vec{a}'_2}\circ\mathcal{I}^{1,1}_{\vec{a}'_1} \nonumber \\
 %\mathcal{T}_{\vec{a}_N}
 &=\mathcal{I}^{N,\vec{a}_{N-1}}_{\vec{a}_N}\!\circ \ldots \circ\mathcal{I}^{2,\vec{a}_1}_{\vec{a}_2}\circ\mathcal{I}^{1,1}_{\vec{a}_1}, %\nonumber \\
\end{align}
where in the first line, $\vec{a}'_k$ stands for independent $k$ tuple variables, while in the second line, 
$\vec{a}_k$ appearing on the Right Hand Side of equation are fixed by $N$-tuple $\vec{a}_N$ from the Left Hand Side. In conclusion, the above findings can be summarized in the following corollary.

%Now, we will state the realization theorem for our claim, we will be a corollary of Theorem (\ref{main theorem}). Here, we have denoted $\{\lambda\}^N \equiv \lambda_1 \times\ldots \times\lambda_N$.

\begin{corollary} \label{corollary 2}
    Given a Total instrument $\mathcal{T}\in Ins(\lambda_1\times\ldots \times\lambda_N,\mathcal{H}_0,\mathcal{H}_N)$ there exists an adaptive sequence of instruments $\mathcal{Q}=({\mathcal{Q}^1,\cdots \mathcal{Q}^N})$ in which the instrument 
    $\mathcal{I}^{k,\vec{a}_{k-1}}\in \mathcal{Q}^k$ applied in the $k^{th}$ step determines the $k^{th}$ element of the overall outcome $\vec{a}_N=(a_1, \ldots, a_N)$ and 
\begin{align} \label{eq:n tuple long}
 \mathcal{T}_{\vec{a}_N} =%\!\!\sum_{a_{N-1} \in \lambda_{N-1}}\!\!\!\!\!\cdots\sum_{a_1 \in \lambda_1} 
 \mathcal{I}^{N,\vec{a}_{N-1}}_{\vec{a}_N}\!\circ \ldots \circ\mathcal{I}^{2,\vec{a}_1}_{\vec{a}_2}\circ\mathcal{I}^{1,1}_{\vec{a}_1}, %\nonumber \\
\end{align}
where the Initial instrument $\mathcal{I}^{1,1}$ is the Lüders instrument.
\end{corollary}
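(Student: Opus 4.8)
The plan is to exhibit Corollary \ref{corollary 2} as a direct specialization of Corollary \ref{corollary 1}, so the proof is mostly a matter of choosing the right postprocessing chain of POVMs and then checking that the adaptivity built into Corollary \ref{corollary 1} collapses to the desired ``$k$-th outcome produced at $k$-th step'' behaviour. First I would set $\Omega_k=\lambda_1\times\cdots\times\lambda_k$ and define the marginal POVMs $B^k\in\mathcal{O}(\Omega_k,\mathcal{H}_0)$ by $B^k_{a_1,\ldots,a_k}=\sum_{a_{k+1}\in\lambda_{k+1}}B^{k+1}_{a_1,\ldots,a_{k+1}}$ with $B^N\equiv A^{\mathcal{T}}$, exactly as in Eq. (\ref{eq:def_marg_POVM}). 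Then I would verify that each marginalization is a legitimate POVM postprocessing $B^{k+1}\to B^k$, realized by the deterministic (zero/one) stochastic matrix $\nu_{\vec a_{k+1}\vec a'_k}=\delta_{a_1a'_1}\cdots\delta_{a_ka'_k}$ of Eq. (\ref{eq:numarg}); the normalization $\sum_{\vec a'_k}\nu_{\vec a_{k+1}\vec a'_k}=1$ is immediate. This gives the chain $A^{\mathcal{T}}\to B^{N-1}\to\cdots\to B^1$ required as the hypothesis of Corollary \ref{corollary 1}.

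Next I would invoke Corollary \ref{corollary 1} on this chain to obtain an $N$-step ASI $\mathcal{Q}=(\mathcal{Q}^1,\ldots,\mathcal{Q}^N)$ whose initial instrument $\mathcal{I}^{1,1}$ is the Lüders instrument of $B^1$ and which satisfies Eq. (\ref{eq:splittoN}). The substantive observation is then that the deterministic structure of the $\nu$'s forces the sum over intermediate outcomes to collapse: by Remark \ref{re_outcomes}, in the decomposition coming from Theorem \ref{main theorem} the residual instrument at step $k$ can return outcome $\vec a_k\in\Omega_k$ with nonzero probability only when $\nu_{\vec a_k\vec a_{k-1}}>0$, i.e.\ only when $\vec a_k$ is an extension of the previous outcome $\vec a_{k-1}$ by one further coordinate $a_k\in\lambda_k$. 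Consequently, once the first-step outcome $a_1$ is fixed, every term in Eq. (\ref{eq:splittoN}) with an intermediate outcome that is not the appropriate prefix of $\vec a_N$ contributes zero, and the multi-sum reduces to the single term $\mathcal{I}^{N,\vec a_{N-1}}_{\vec a_N}\circ\cdots\circ\mathcal{I}^{2,\vec a_1}_{\vec a_2}\circ\mathcal{I}^{1,1}_{\vec a_1}$, which is precisely Eq. (\ref{eq:n tuple long}). Here $\vec a_k=(a_1,\ldots,a_k)$ on the right-hand side is determined by the given $N$-tuple $\vec a_N$ on the left, so the $k$-th step genuinely produces the $k$-th coordinate of the overall outcome.

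Finally I would note that all instruments in $\mathcal{Q}^1,\ldots,\mathcal{Q}^{N-1}$ are indecomposable (they arise from applying Theorem \ref{main theorem} to Lüders instruments), and that $\mathcal{I}^{1,1}$ being the Lüders instrument of $B^1$ is part of the conclusion inherited from Corollary \ref{corollary 1}. The main (and really the only) obstacle is the bookkeeping in the collapse of the sum: one has to be careful that ``the residual instrument can only output an extension of the previous outcome'' is used correctly at \emph{every} step of the recursion, not just the last, so that \emph{all} $N-1$ intermediate sums degenerate simultaneously; this is exactly the content of Remark \ref{re_outcomes} applied inductively, together with Eq. (\ref{eq:numarg}). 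No new estimates or constructions are needed beyond what Corollary \ref{corollary 1} already provides.
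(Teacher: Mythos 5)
Your proposal is correct and follows essentially the same route as the paper: define the marginal POVMs $B^k$ on $\Omega_k=\lambda_1\times\cdots\times\lambda_k$, observe that marginalization is a deterministic postprocessing chain $A^{\mathcal{T}}\rightarrow B^{N-1}\rightarrow\cdots\rightarrow B^1$, apply Corollary \ref{corollary 1}, and use Remark \ref{re_outcomes} together with Eq. (\ref{eq:numarg}) to collapse all intermediate sums in Eq. (\ref{eq:splittoN}) to the single prefix-consistent term. No gaps; your explicit note that the collapse must be applied inductively at every step is exactly the point the paper makes when it traces the outcome $a'_1$ through to $\vec{a}_2=(a_1,a_2)$.
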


%\subsection{Implementing $2^N$ outcome Lüders instruments with one ancilla qubit?}
%Obviously the result from the beginning of the previous subsection can be easily generalized to qubit $2^N$ outcome Lüders instruments, which do not have collinear rank one effects. In such case What changes if we consider qudit $2^N$ outcome Lüders instruments?

%\tcr{Prepare better for the next REMARK, make it more detailed, e.g. define sets $\Omega_k$ and other details.}

\begin{remark}
   If $\mathcal{T}\in Ins(\Omega_{2^N},\mathcal{H},\mathcal{H})$ is the Lüders instrument %with the induced POVM elements being rank one 
    having $|\Omega|=2^N$ outcomes, then the single step   
   instrument dilation requires at least $N$-qubit ancilla. By choosing any one to one mapping of $\Omega_{2^N}$ to an $N$-fold Cartesian product space $\Omega = \lambda_1 \times\ldots \times\lambda_N$ of $2$ outcome sets and using Corollary \ref{corollary 2}, one can obtain $N$-step ASI, which uses only a single qubit ancilla for realization of each of it's $N$ steps. 
\end{remark}

The above situation can be considered as describing two extreme cases of trade of between time and space resources, i.e. between having $N$ additional ancilla qubits just for a single dilation step or having just a single repeatedly reused ancilla for $N$ consecutively performed dilation steps. The saving of such space resources comes at a price of having adaptive quantum circuit as well as having possibility to store classical information at each intermediate step. 
The benefits of one or the other form of implementation of the same instrument might depend heavily on the particular quantum hardware and available resources. We comment more on this in 
%the final part of the paper.
Section \ref{sec:summary}.

% Currently, the adaptive choice of quantum gates in quantum circuits prolongs their realization times roughly be factor three in case of superconducting qubits (see [ref]), which may become relevant once the total realization time of the quantum circuit becomes comparable to decoherence times. Another source of errors might cummulation / propagation of errors from the intermediate measurements

%\subsection{2-step sequence example}
%\subsection{Saving one qubit for qubit $4$ outcome Lüders instruments}

\subsection{Smallest ancilla size implementation of a Quantum instrument}

Let us now come back to Eq. (\ref{eq:dA_gen}), which implies that dimension of the ancilla is at least $d_A\geq g$. One can observe that this lower bound could be reached for high enough $N$. This motivates us to formulate a generalization of corollary \ref{cor:low_mem_Inst_2s} to the $N$-step context.

\begin{corollary}
\label{cor:low_mem_Inst_Ns}
For any instrument $\mathcal{T} \in Ins(\Omega,\mathcal{H},\mathcal{K})$ with $g= \lceil\dim \mathcal{K}/\dim\mathcal{H}\rceil > 1$ there exist $N$ such that it can be implemented with a $N$-step 
adaptive sequence of instruments (plus final postprocessing of classical outcomes), which does not require other ancillary systems than those needed to form the final output space ($d_A=g$). 
In particular, one can choose POVMs in Corollary \ref{corollary 1} so that 
after the $(N-1)$-th step of ASI ancilla is reset and reused to realize an isometry conditionally chosen based on the obtained outcome $a_{N-1}$. Finally, outcome $a_{N-1}$ needs to be 
coarse-grained back from $\Omega_D$ to the set 
$\Omega$.
\end{corollary}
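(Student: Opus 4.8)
\textbf{Proof plan for Corollary \ref{cor:low_mem_Inst_Ns}.}
The plan is to reduce the statement to Corollary \ref{corollary 1} applied to the detailed instrument $\mathcal{T}^D$, exactly as in Corollary \ref{cor:low_mem_Inst_2s} but with the $2$-step bound replaced by the $N$-step bound of Eq. (\ref{eq:dA_gen}). First I would pass from $\mathcal{T}$ to its minimal detailed instrument $\mathcal{T}^D\in Ins(\Omega_D,\mathcal{H},\mathcal{K})$, whose quantum operations all have Kraus rank one and whose outcome set has cardinality $r_{\mathcal{T}}=|\Omega_D|$; by the discussion preceding Theorem \ref{th:opt2stepsplit}, any ASI implementing $\mathcal{T}^D$ plus a final coarse-graining of $\Omega_D$ back to $\Omega$ implements $\mathcal{T}$, so it suffices to work with $\mathcal{T}^D$. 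Then I would invoke Eq. (\ref{eq:dA_gen}), which says that an $N$-step ASI obtained from Corollary \ref{corollary 1} can realize $\mathcal{T}^D$ with ancilla dimension $d_A=g\lceil \frac{1}{g}\sqrt[N]{g\,r_{\mathcal{T}}}\,\rceil$. Since $g>1$, for $N$ large enough we have $\sqrt[N]{g\,r_{\mathcal{T}}}\le g$, hence the ceiling equals $1$ and $d_A=g$; concretely any $N\ge \log_g(g\,r_{\mathcal{T}})$ works, e.g. $N=\lceil \log_2(g\,r_{\mathcal{T}})\rceil$ suffices since $g\ge 2$.

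Next I would exhibit the explicit POVM chain that realizes this. Choose $N$ as above and, for $k=1,\ldots,N-1$, let $B^{k}\in\mathcal{O}(\Omega_k,\mathcal{H})$ be POVMs with $|\Omega_k|\le g^{k}$ and with each $B^{k}$ a coarse-graining of $B^{k+1}$ (where $B^{N}\equiv A^{\mathcal{T}^D}$), so that $A^{\mathcal{T}^D}\rightarrow B^{N-1}\rightarrow\cdots\rightarrow B^{1}$; this is possible precisely because $r_{\mathcal{T}}\le g^{N-1}$, so the $r_{\mathcal{T}}$ outcomes of $A^{\mathcal{T}^D}$ can be organized into a $g$-ary tree of depth $N-1$. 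Feeding this chain into Corollary \ref{corollary 1} yields an ASI whose first $N-1$ instruments are Lüders instruments of the $B^{k}$ and whose steps use only a $g$-dimensional ancilla: the ancilla is reset after every intermediate measurement, and since each $B^{k}_{a}$ has rank at most $\dim\mathcal{H}$ the intermediate Hilbert spaces can be taken equal to $\mathcal{H}$ (using Theorem \ref{main theorem} with $\mathcal{H}_0=\mathcal{H}_1=\mathcal{H}$ at each stage). In the last step, because at the penultimate node only one Kraus operator of $\mathcal{T}^D$ survives for each outcome $a_{N-1}$ (the tree has one leaf per branch at that depth by Remark \ref{re_outcomes}), the residual instrument $\RI^{N,a_{N-1}}$ has Kraus rank one and is therefore an isometry $\mathcal{H}\to\mathcal{K}$ conditionally chosen by $a_{N-1}$; embedding $\mathcal{K}$ into $\mathcal{H}\otimes\mathcal{V}$ with $\dim\mathcal{V}=g$ shows the reused $g$-dimensional ancilla is consumed exactly in this final step to store the output dimension increase. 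Finally, coarse-grain the reported outcome from $\Omega_D$ back to $\Omega$ to obtain $\mathcal{T}$.

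The main obstacle is bookkeeping rather than a deep difficulty: one must check that the coarse-graining chain $A^{\mathcal{T}^D}\rightarrow B^{N-1}\rightarrow\cdots\rightarrow B^{1}$ can genuinely be arranged so that at every stage the hypotheses of Theorem \ref{main theorem} (and of Corollary \ref{corollary 1}) are met with the intermediate ancilla never exceeding dimension $g$. Two points deserve care. First, one must verify that the ranks $m_j$ of the residual instruments stay bounded by $d_A/g=1$ at the final step and by $d_A=g$ at intermediate steps; this is exactly where the $g$-ary branching of the outcome tree is used, since by Eq. (\ref{eq:defmj}) and the zero/one postprocessing matrices of a coarse-graining each residual instrument at step $k$ inherits only the Kraus operators in its sub-branch. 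Second, one must confirm that throwing away the second component of the detailed outcome at the very end does not interfere with the adaptivity, which is immediate because the coarse-graining is applied only after all quantum operations are performed. Everything else is an instantiation of results already proved, so the argument is short once the tree structure is set up.
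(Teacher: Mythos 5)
Your proposal is correct and follows essentially the same route as the paper's proof: choose $N$ so that $r_{\mathcal{T}}\leq g^{N-1}$, pass to the detailed instrument $\mathcal{T}^D$, organize its $r_{\mathcal{T}}$ rank-one outcomes into a $g$-ary coarse-graining tree of depth $N-1$ fed into Corollary \ref{corollary 1}, note that each intermediate residual instrument then has at most $g$ nonzero operations (so a reset $g$-dimensional ancilla suffices), and observe that the last step is a conditionally chosen isometry consuming the ancilla as output. The only cosmetic difference is that the paper pads the leaf set to exactly $g^{N-1}$ outcomes with zero POVM elements and takes $B^{N}=B^{N-1}$ so the final postprocessing matrix is the identity, which is just a concrete instantiation of your "one leaf per branch" argument.
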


\begin{proof}
   We begin by choosing lowest $N$ for which Eq. (\ref{eq:dA_gen}) returns $d_A=g$, so $r_{\mathcal{T}}\leq g^{N-1}$. Next, we define $\lambda_1=\ldots=\lambda_{N-1}=\{1,\ldots,g\}$, and set $\Omega_k = \lambda_1 \times \ldots\times \lambda_k$ for $k=1,\ldots,N-1$. We also define $\Omega_N=\Omega_{N-1}$. As a next step we 
   construct detailed instrument $\mathcal{T}^D$ and split its outcome set $\Omega_D$ ($|\Omega_D|\leq g^{N-1}$) into $g^{N-1}$ (nonintersecting) subsets %$\{\omega_j\}_{j=1}^{g^{N-1}}$ 
   $\{\omega_j\}_{j\in \Omega_{N-1}}$ each containing zero or one element. We define POVMs $B^{N}=B^{N-1} \in \mathcal{O}(\Omega_{N-1},\mathcal{H}_0)$ via equation
   $B^{N-1}_j=\sum_{k\in \omega_j} A^{\mathcal{T}^D}_k$. Thus, we effectively added %quantum operations that nullify everything 
   zero POVM elements to formally have $g^{N-1}$ outcomes for POVMs $B^{N}=B^{N-1}$. We proceed by defining POVMs  $B^k\in \mathcal{O}(\Omega_k,\mathcal{H})$ $k=1,\ldots,N-2$ via marginalization over the rightmost outcome as in Eq. (\ref{eq:def_marg_POVM}).  
   Application of Corollary \ref{corollary 1} to the current setting has similar consequences as in Corollary \ref{corollary 2}. Most important for us is that for $k\in\{1,\ldots,N-1\}$ instruments $\mathcal{I}^{k,\vec{a}_{k-1}}\in Ins(\Omega_k,\mathcal{H}_0,\mathcal{H}_0)$ have at most $g$ nonzero quantum operations $\mathcal{I}^{k,\vec{a}_{k-1}}_{a_k}$, which follows from Remark \ref{re_outcomes} and postprocessing relation being given by Eq. (\ref{eq:numarg}). Thus, realization of instruments $\mathcal{I}^{k,\vec{a}_{k-1}}$ is possible with $g$-dimensional ancilla, which used for measurement and reset in the first $N-1$ steps of the ASI. In the %last 
   $N$-th step of the constructed ASI, due to $B^N=B^{N-1}$ the postprocessing matrix is an identity, so due to Remark \ref{re_outcomes} there is just a single possible outcome $\vec{a}_{k-1}$ of the residual instruments and so these instruments $\mathcal{I}^{N,\vec{a}_{N-1}}\in Ins(\Omega_N,\mathcal{H}_0,\mathcal{H}_N)$ must be isometries. We conclude that in order to realize them $g$ dimensional ancilla is sufficient, since no measurement is needed at $N$-th step of the constructed ASI.
\end{proof}

%ancilla with dimension $d_A=\lceil\dim \mathcal{K}/\dim\mathcal{H}\rceil$, perform with it Lüders instrument $\mathcal{I}^{B}$ with $B=A^{\mathcal{T}^D}$, reset the 

\subsection{Implementing qubit four outcome Lüders instruments with one ancilla qubit}

Let us consider a practical example of Corollary \ref{corollary 2} by considering %the following %instrument.
%In this section, we illustrate a scenario in which the traditional methods for dilations of instruments and/or POVMs do not provide a clear path to obtaining a quantum circuit with the desired properties. Thus, suppose we are given a qubit four outcome Lüders instrument, but we can use only two qubits to implement it. Corollary \ref{corollary 2} can be used to solve the problem at hand in the following way. 
a qubit four outcome Lüders instrument. Suppose we have a two-qubit quantum computer, hence direct $1$-step dilation, which requires one qubit for the system and two qubits for the ancilla cannot be performed. On the other hand, Corollary \ref{corollary 2} can be used to solve the problem at hand in the following way. We choose a relabeling of the four outcomes of the instrument to be implemented into $2$-tuples $(a_1,a_2)$ $a_1,a_2=0,1$ and we denote it $\mathcal{T}\in Ins(\Omega_A,\mathcal{H}_2,\mathcal{H}_2)$, where $\Omega_A=\{0,1\}\times \{0,1\}$ and $\mathcal{H}_2$ denotes two-dimensional Hilbert space. At this point, it suffices to combine Corollary \ref{corollary 2} with a dilation %\tcr{[ref!!!]}
of the Lüders instrument to obtain the final result. Here we assume that the ancillary qubits are undergoing reset to some fixed state, before every step of the adaptive sequence. The above procedure works for any ranks of the POVM elements defining the decomposed instrument. 

In order to be less abstract, we provide below a case study for three parameter class of qubit four outcome Lüders instruments, which are defined (see Figure \ref{fig:4outcomePovm}) by the following 
%(informationally complete) 
POVM $A\in O(\Omega_{A},\mathcal{H}_2)$:
\begin{align}
\label{eq:defPovmA}
    A_{00}&=\frac{1}{4}[I +\eta(-\sigma_x\tan{\beta} -\sigma_z)]  \nonumber \\
     A_{01}&=\frac{1}{4}[I +\eta(\sigma_x\tan{\beta} -\sigma_z)]  \\
      A_{10}&=\frac{1}{4}[I +\eta(-\sigma_y\tan{\alpha} +\sigma_z)]  \nonumber\\
       A_{11}&=\frac{1}{4}[I +\eta(\sigma_y\tan{\alpha} +\sigma_z)], \nonumber
\end{align}
where $0< \alpha,\beta < \frac{\pi}{2}$, %\tcr{[RANGE of $\eta$]} 
$0< \eta \leq \min{\{ \cos{\alpha},\cos{\beta}\}}$ 
and $\sigma_x,\sigma_y,\sigma_z$ are the Pauli matrices. 
To improve readability of the formulas in this example we shortened subscripts from $(j,k)$ to just $jk$. 
Without loss of generality, we assume that $\alpha\geq\beta$. POVM %$A\in O(\Omega_{A},\mathcal{H}_2)$ 
$A$ is an informationally complete POVM, and it coincides with symmetric informationally complete (SIC)-POVM  \cite{SIC} for %$\alpha=\beta=\cos^{-1}({\frac{1}{\sqrt{3}}})$ 
$\alpha=\beta=\arccos({\frac{1}{\sqrt{3}}})$ and $\eta=\frac{1}{\sqrt{3}}$. 
The Total Instrument $\mathcal{T}\in Ins(\Omega_{A},\mathcal{H}_2,\mathcal{H}_2)$ acting on a state $\rho$ is given by 
 $\mathcal{T}_{jk}(\rho) =\sqrt{A_{jk}}\, \rho \,\sqrt{A_{jk}}$, 
which we can represent using Corollary \ref{corollary 2} as
\begin{equation} \label{eq:exampplleeeeeXX}
    %\mathcal{T}_{jk} = \mathcal{I}^{2,{j}}_{k}\circ \mathcal{I}^{1,1}_{j}
    \mathcal{T}_{jk} = \mathcal{I}^{2,{j}}_{jk}\circ \mathcal{I}^{1,1}_{j}
\end{equation}

\begin{figure} 
\includegraphics[height=8.5cm,width=7.5cm]{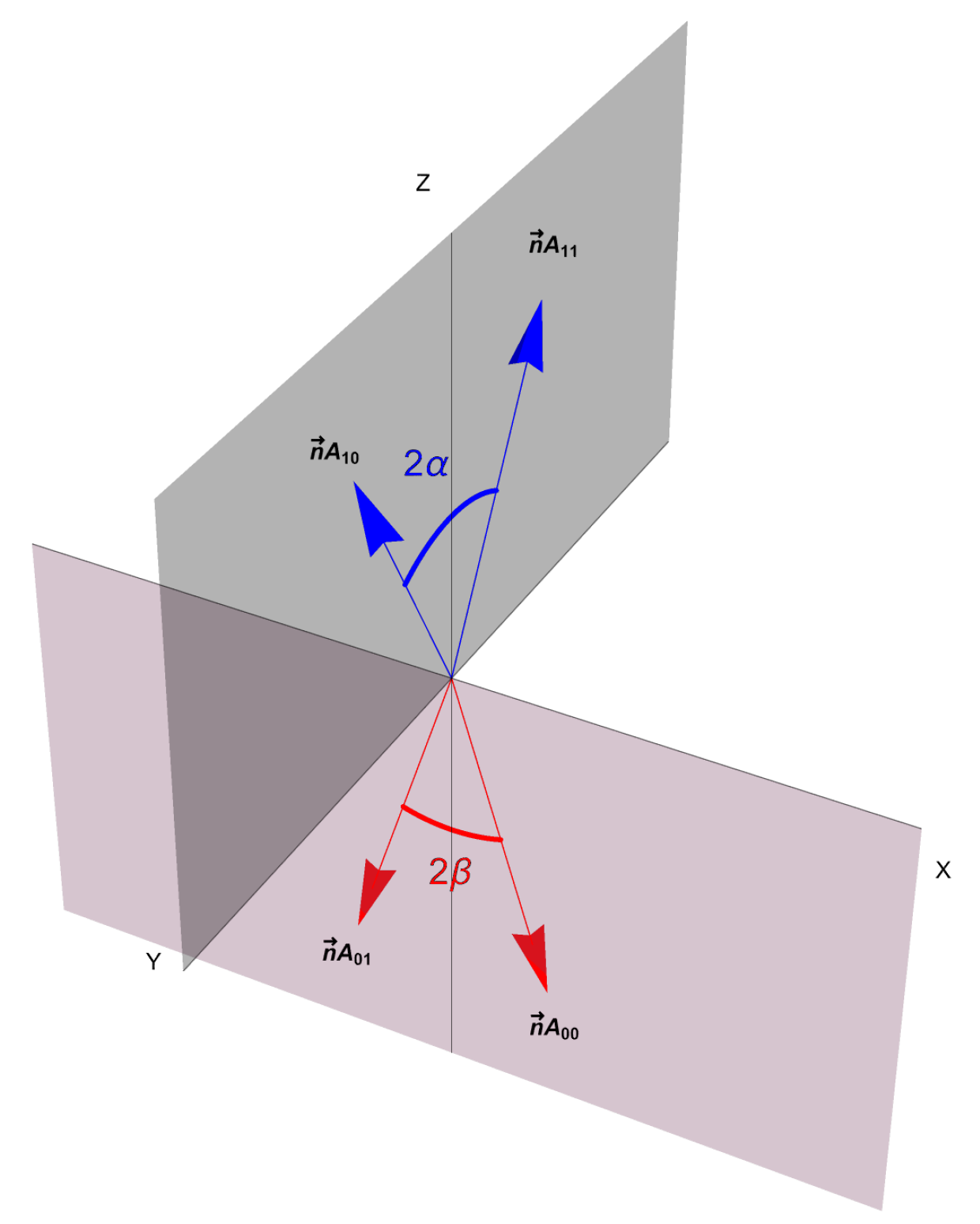}
\caption{
\label{fig:4outcomePovm}
Illustration of directions and parameters that define the POVM $A$. When $A$ is a symmetric informationally complete POVM then ends of the four vectors form a regular tetrahedron. Parameter $\eta$ (see Eq. (\ref{eq:defPovmA})) is not depicted and relates to the sharpness of the POVM %$A$.
elements. 
%, since 
%$A_{0k}=\frac{1}{4}[I +\frac{\eta}{\cos{\beta}}(\sigma_x(-1)^{k+1}\sin{\beta} - \sigma_z \cos{\beta})]$ and 
%$A_{1k}=\frac{1}{4}[I +\frac{\eta}{\cos{\alpha}}(\sigma_y(-1)^{k+1}\sin{\alpha} +\sigma_z \cos{\alpha})]$
}
\end{figure}

%Here, we have used the notation of subscripts similar to Theorem \ref{main theorem} (as the realization is a two-step sequence), and there are no sums over the indices $j,k$ due to specific postprocessing relations from  Eq.(\ref{eq:numarg}), which guarantees that there is only a single value of $j$ that leads to the specific value of $k$.
%We post-process the POVM $A$ to the POVM $B^{1}$ given by,
Thus, outcome $j$ will be obtained in the first step of the ASI, 
%adaptive sequence, 
and based on it, instrument $\mathcal{I}^{2,{j}}$ will be applied in the second step. In particular, instrument $\mathcal{I}^{2,{j}}$ determines outcome $k$, i.e. the second element of the two-tuple $(j,k)$. %jk\equiv(j,k)$. 
In accordance with Eq. (\ref{eq:def_marg_POVM}) we have
$B^{1}_{j}=\sum_{k=0}^{1} A_{jk}$ $j=0,1$. 
%\begin{eqnarray}    
%    B_{j}^{1}=\sum_{k=0}^{1} A_{jk} \quad  j=0,1  \\ \nonumber
%\end{eqnarray}
Since the input and output Hilbert space of the instrument $\mathcal{T}$ have the same dimension no additional Kraus operators will be needed. In particular, POVM elements of $B^1_{j}$ are full rank (sum of non-collinear of positive operators of rank one in $\mathcal{H}_2$ is of full rank) and consequently each of quantum operations of the instruments $\mathcal{I}^{1,1}$, $\mathcal{I}^{2,j}$, has just a single Kraus operator. We also remind that $(B_{j}^1)^{-\frac12}  (B_{j}^1)^{\frac12}=I$ and we denote those single Kraus operators of %Instrument $\mathcal{I}^{k,j}_{k}$ as $K^{k,j}_{k,m}$.
quantum operations $\mathcal{I}^{2,j}_{jk}$, $\mathcal{I}^{1,1}_{j}$ as $K^{2,j}_{jk}$ and $K^{1,1}_{j}$, respectively.
In conclusion, in accordance with Eq. (\ref{eq:RES1}) we get
\begin{align}
     K^{1,1}_{j}&= 
     %\sqrt{{{B_{j}^1}}} \nonumber \\
     (B_{j}^1)^{\frac{1}{2}} \nonumber \\
      K^{2,j}_{jk}&= \sqrt{{{A_{jk}}}}(B_{j}^1)^{-\frac{1}{2}}     
 \end{align} 
The above Kraus operators can be expressed using the Identity and Pauli matrices, which simplifies the calculation of powers of matrices (for a short reminder, see Appendix \ref{appendix2}). In this way, we obtain the following explicit expressions for the Kraus operators.
\begin{align}
   K^{1,1}_{j}&=F_{+}{I}+F_{-}(\hat{n}_{B_j}\cdot\vec{\sigma})  \\
    K^{2,j}_{jk}&=[G_{k,+}{I}+G_{k,-}(\hat{n}_{A_{jk}}\cdot\vec{\sigma})][H_{+}{I}+H_{-}(\hat{n}_{B_j}\cdot\vec{\sigma})] \nonumber
\end{align}
where the vectors 
\begin{align}
    \hat{n}_{A_{00}}&=(-\sin{\beta},0,-\cos{\beta}) &  \hat{n}_{A_{01}}&=(\sin{\beta},0,-\cos{\beta}) \nonumber\\
      \hat{n}_{A_{10}}&=(0,-\sin{\alpha},\cos{\alpha})&  \hat{n}_{A_{11}}&=(0,\sin{\alpha},\cos{\alpha}) \nonumber 
\end{align}
and $\hat{n}_{B_j}=\sum_{k}\hat{n}_{A_{jk}}$ for $j,k=0,1$ and,
\begin{align}
   F_{\pm}&=\frac{f_{+} \pm f_{-}}{2}&\quad  f_{\pm}&=\sqrt{\frac{1\mp\eta}{2}}\nonumber  \\
   G_{j,\pm}&=\frac{g_{j,+} \pm g_{j,-}}{2}&\quad  g_{j,\pm}&= \frac{1}{2}\sqrt{1\mp \eta \sec{\theta_{j}}}\nonumber \\
H_{\pm}&=\frac{h_{+} \pm h_{-}}{2}&\quad  h_{\pm}&=\frac{1}{f_{\pm}},\nonumber  \\
\end{align}
%\begin{align}
    %F_{\pm}&=\frac{f_{+} \pm f_{-}}{2};&\quad G_{j,\pm}&=\frac{g_{j,+} \pm g_{j,-}}{2}; & \quad 
    %H_{\pm}&=\frac{h_{+} \pm h_{-}}{2}; \nonumber  \\
   %f_{\pm}&=\sqrt{\frac{1\mp\eta}{2}} &\quad  g_{j,\pm}&= \frac{1}{2}\sqrt{1\mp \eta \sec{\theta_{j}}}%\nonumber \\
%&\quad  h_{\pm}&=\frac{1}{f_{\pm}},\nonumber  \\
%\end{align}
where $\theta_0=\beta$ and $\theta_1=\alpha$.
%in the notation used here,
%\begin{align*}
%    k=0 \implies \theta_0=\beta ;\quad  k=1 \implies \theta_1=\alpha
%\end{align*}
One can verify that Eq. (\ref{eq:exampplleeeeeXX}) is fullfilled and that Kraus operators $K^{2,j}_{jk}$ and $K^{1,1}_{j}$ define valid instruments $\mathcal{I}^{2,j}$, $\mathcal{I}^{1,1}$, respectively.

\section{Sequential realization of Positive Operator-Valued Measures} \label{POVM section}

Any POVM can be understood as special case of a quantum instrument, whose output space is one-dimensional and it's outcome space and input Hilbert space are the same as for the POVM. In this way, all the results derived in this paper directly apply also to POVMs. Thus, we explicitly adapt the presented results to POVMs and discuss their relation to the previous work. In analogy to the instrument case we refer to the POVM to be realized as the \emph{Total POVM}. As a direct consequence of Theorem \ref{main theorem} we obtain the following corollary.

\begin{corollary} \label{corollary 4}
Given a Total POVM  
%a (total) POVM 
$A$ $\in \mathcal{O}(\Omega,\mathcal{H})$ and a postprocessing relation $A\rightarrow B$ to another POVM $B$ $\in \mathcal{O}(\Omega_B,\mathcal{H})$, there exists 
a $2$-step adaptive sequence consisting of Lüders Instrument $\mathcal{J}\in Ins(\Omega_B,\mathcal{H},\mathcal{H})$ of POVM $B$
%, $\mathcal{J}_{j}(\rho)=\sqrt{B_{j}}\rho\sqrt{B_{j}}, \forall \rho \in \mathcal{S(H)}$, $\forall j \in \Omega_B$ 
and a set of 
POVMs $C^{j} \in \mathcal{O}(\Omega,\mathcal{H)}$ $\forall j\in \Omega_B$ such that 
\begin{align} \label{povmins}
    Tr[\rho A_k]&=Tr[\sum_{j\in \Omega_B} \mathcal{J}_{j} (\rho) C^{j}_k] \nonumber\\
    &=Tr[\rho \sum_{j\in \Omega_B}  \sqrt{B_j} C^{j}_k \sqrt{B_j}]\quad   \quad  \forall k \in \Omega.
\end{align}
\end{corollary}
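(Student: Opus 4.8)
The plan is to obtain Corollary \ref{corollary 4} as a direct specialization of Theorem \ref{main theorem}, exploiting the standard identification between instruments with one-dimensional output space and POVMs.

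First I would promote the Total POVM $A$ to a Total instrument with a trivial output. Fix a one-dimensional Hilbert space $\mathcal{K}$ with basis vector $\ket{0}$ and define $\mathcal{T}\in Ins(\Omega,\mathcal{H},\mathcal{K})$ by $\mathcal{T}_k(\rho)=Tr[A_k\rho]\,\ket{0}\bra{0}$. A minimal-Kraus representation of $\mathcal{T}_k$ is given by operators $K^{(k)}_l=\sqrt{a^{(k)}_l}\,\ket{0}\bra{\phi^{(k)}_l}$ from the spectral decomposition $\sqrt{A_k}=\sum_l\sqrt{a^{(k)}_l}\,\ket{\phi^{(k)}_l}\bra{\phi^{(k)}_l}$, and $\sum_{k,l}K^{(k)\dagger}_lK^{(k)}_l=\sum_k A_k=I$, so $\mathcal{T}$ is a valid instrument with induced channel $\phi^{\mathcal{T}}(\rho)=Tr[\rho]\,\ket{0}\bra{0}$ and induced POVM $A^{\mathcal{T}}=A$. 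Hence the hypothesis $A\to B$ of the corollary is precisely the hypothesis $A^{\mathcal{T}}\to B$ needed to invoke Theorem \ref{main theorem} on $\mathcal{T}$.

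Next I would apply Theorem \ref{main theorem} to $\mathcal{T}$ and the postprocessing $A^{\mathcal{T}}\to B$. This produces an Initial Instrument $\mathcal{J}\in Ins(\Omega_B,\mathcal{H},\mathcal{H})$, namely the Lüders instrument $\mathcal{J}_j(\rho)=\sqrt{B_j}\rho\sqrt{B_j}$ of $B$, together with Residual Instruments $\RI^{j}\in Ins(\Omega,\mathcal{H},\mathcal{K})$ obeying $\mathcal{T}_k=\sum_{j\in\Omega_B}\RI^{j}_k\circ\mathcal{J}_j$ for all $k\in\Omega$. Since $\dim\mathcal{K}=1$, each quantum operation $\RI^{j}_k$ is merely a positive linear functional on $\mathcal{L}(\mathcal{H})$, so $\RI^{j}_k(\sigma)=Tr[C^{j}_k\sigma]\,\ket{0}\bra{0}$ for a unique $C^{j}_k$ with $0\le C^{j}_k\le I$; the channel property of $\sum_k\RI^{j}_k$ forces $\sum_{k\in\Omega}C^{j}_k=I$, so $C^{j}:=\{C^{j}_k\}_{k\in\Omega}\in\mathcal{O}(\Omega,\mathcal{H})$ is a genuine POVM (it is exactly the induced POVM $A^{\RI^{j}}$). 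I would remark here that in the present regime $\dim\mathcal{H}>\dim\mathcal{K}$ Theorem \ref{main theorem} may append auxiliary Kraus operators to the Residual instruments, but this is immaterial: it does not change the fact that each $\RI^{j}$ is a valid instrument with one-dimensional output, hence corresponds to a POVM.

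Finally I would read off Eq. (\ref{povmins}) by taking traces: for any $\rho\in\mathcal{S}(\mathcal{H})$ and $k\in\Omega$,
\begin{align}
Tr[\rho A_k]&=Tr[\mathcal{T}_k(\rho)]=\sum_{j\in\Omega_B}Tr\big[(\RI^{j}_k\circ\mathcal{J}_j)(\rho)\big]\nonumber\\
&=\sum_{j\in\Omega_B}Tr\big[C^{j}_k\,\mathcal{J}_j(\rho)\big]=\sum_{j\in\Omega_B}Tr\big[C^{j}_k\sqrt{B_j}\rho\sqrt{B_j}\big]\nonumber\\
&=Tr\Big[\rho\sum_{j\in\Omega_B}\sqrt{B_j}\,C^{j}_k\,\sqrt{B_j}\Big],
\end{align}
where the last equality is cyclicity of the trace. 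This is exactly the asserted identity. There is no genuinely hard step; the only points requiring a little care are the clean identification of one-dimensional-output instruments with POVMs and the observation that the $\dim\mathcal{H}>\dim\mathcal{K}$ branch of Theorem \ref{main theorem} causes no difficulty in this specialization, since extra Kraus operators of the Residual instruments leave their induced POVMs (the objects $C^{j}$ we actually need) untouched.
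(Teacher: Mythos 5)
Your proposal is correct and follows the same route as the paper: both promote the Total POVM $A$ to an instrument $\mathcal{T}$ with one-dimensional output space and then specialize Theorem \ref{main theorem}. The difference is in how the POVMs $C^{j}$ are obtained. The paper works concretely through the Kraus operators produced by the proof of Theorem \ref{main theorem} — including the additional operators $R'^{j}_{k,m}$ needed because $\dim\mathcal{H}>\dim\mathcal{K}$ — and arrives at the explicit expression $C^{j}_k=B_j^{-1/2}\nu_{kj}A_kB_j^{-1/2}+C'^{j}_k$, whose positivity and normalization it then verifies by hand (using Lemma \ref{lmm:supsum} and stochasticity of $\nu$). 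You instead invoke the abstract one-to-one correspondence between instruments with one-dimensional output and POVMs, so that $C^{j}$ is automatically a valid POVM as the induced POVM of the residual instrument $\RI^{j}$, and the identity (\ref{povmins}) drops out of taking traces of $\mathcal{T}_k=\sum_j\RI^{j}_k\circ\mathcal{J}_j$. Your version is shorter and logically cleaner, and correctly observes that the extra Kraus operators in the $\dim\mathcal{H}>\dim\mathcal{K}$ branch are harmless here; what the paper's more explicit computation buys is a closed-form construction of the $C^{j}_k$ in terms of $A$, $B$, and the postprocessing matrix, which is the practically usable output emphasized elsewhere in the manuscript.
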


\begin{proof}
One dimensional Hilbert space, $\mathcal{H}_{1d}$, is isomorphic to space of Complex numbers, since all vectors are unique multiples of a chosen basis vector. For the same reason also operators on $\mathcal{H}_{1d}$ are isomorphic to complex numbers. Having this in mind, we can define minimal set of Kraus operators for POVM $A$, which can be understood as an instrument $\mathcal{T}\in Ins(\Omega,\mathcal{H},\complexn)$ as 
\begin{align}
    T'_{k,m}=\sqrt{\lambda^k_m}\bra{v^k_m} \quad \forall k\in \Omega,\; m=1,\ldots,r_k,
\end{align}
where we used spectral decomposition of POVM elements 
%$A_k=\sum_{m=1}^{r_k}\lambda^k_m\ket{v^k_m}\bra{v^k_m}$. 
\begin{align}
\label{eq:spect_Ak}
    A_k=\sum_{m=1}^{r_k}\lambda^k_m\ket{v^k_m}\bra{v^k_m}.
\end{align}
Indeed,
\begin{align}
    \mathcal{T}_k(\rho)=\sum_m T'_{k,m}\rho T'^\dagger_{k,m}=Tr(\rho A_k) \quad \forall k\in \Omega,
\end{align}
which verifies that the probability of outcome $k$ is determined by POVM $A$.

Since $\mathcal{T}$ maps from $d$-dimensional Hilbert space to one dimensional Hilbert space, from proof of Theorem \ref{main theorem} it is obvious that for instruments $\mathcal{R}^j$ additional Kraus operators determined by $R'^j\in Ins(\Omega,(I_\mathcal{H}-\Pi_j)\mathcal{H},\complexn)$ will be needed (see Eq. (\ref{eq:def_RESIDUAL_2})). 
They will guarantee the validity of Eq. (\ref{eq:normRjkm}). By definition we have 
\begin{align}
\label{eq:ortoKraus}
    R'^j_k B_j =0, \quad \sum_{k,m} R'^{j \dagger}_{k,m} R'^j_{k,m}=I_\mathcal{H}-\Pi_j. 
\end{align}
Let us denote 
\begin{align}
    C'^j_k\equiv\sum_{m} R'^{j \dagger}_{k,m} R'^j_{k,m}, 
\end{align}
since $\sum_m R'^j_{k,m} \rho R'^{j \dagger}_{k,m} = Tr(\rho C'^j_k)$.
It is obvious that $C'^j_k\geq 0$, $\sum_{k\in \Omega} C'^j_k=I_\mathcal{H}-\Pi_j$. We remind that due to Eq. (\ref{def:Kkjm}) we have ${T}_{k,jm}= \sqrt{\nu_{kj}\;\lambda^k_m}\bra{v^k_m}$ and together with the above equations we obtain
\begin{align}
     \label{eq:povm_case}
    \sum_{j \in \Omega_B} &(\RI^{j}_k \,\circ \,\mathcal{J}_{j})(\rho)=   \nonumber %= \nonumber\\
%    =& \sum_{j,m} \sqrt{\nu_{kj}}\sqrt{\lambda^k_m}\bra{v^k_m} B_{j}^{-\frac{1}{2}}B_{j}^{\frac{1}{2}}\rho B_{j}^{\frac{1}{2}}B_{j}^{-\frac{1}{2}} \sqrt{\nu_{kj}}\sqrt{\lambda^k_m}\ket{v^k_m}   \nonumber\\
     \\ &= \sum_{j,m} \nu_{kj}\lambda^k_m\bra{v^k_m} B_{j}^{-\frac{1}{2}}B_{j}^{\frac{1}{2}}\rho B_{j}^{\frac{1}{2}}B_{j}^{-\frac{1}{2}}\ket{v^k_m} \nonumber\\ 
&\phantom{=} + \sum_m R'^j_{k,m} B_{j}^{\frac{1}{2}}\rho B_{j}^{\frac{1}{2}} R'^{j \dagger}_{k,m} \\
  &= Tr(\rho \sum_j B_{j}^{\frac{1}{2}}\;C^j_k\;B_{j}^{\frac{1}{2}})        \nonumber
\end{align} 

where we used Eq. (\ref{eq:spect_Ak}), 
cyclic property of the trace and we defined
\begin{align}
\label{eq:def_Cjk}
    C^j_k\equiv B_{j}^{-\frac{1}{2}} \nu_{kj} A_k B_{j}^{-\frac{1}{2}} + C'^j_k.
\end{align}
Although, we do not need to prove validity of Eq. (\ref{biggy}) in the considered special case, it is easy to check it. Due to Eqs. (\ref{eq:ortoKraus}), (\ref{eq:def_Cjk}) %right hand side of 
Eq. (\ref{eq:povm_case}) can be written as
\begin{align} 
    \label{eq:povm_case_final}
    \sum_{j \in \Omega_B} (\RI^{j}_k \,\circ \,\mathcal{J}_{j})(\rho) 
    =& Tr(\rho \sum_j  \nu_{kj} \Pi_{j} A_k\Pi_{j})   \\
    =& Tr(\rho \sum_j \nu_{kj} A_k) = Tr(\rho A_k),   \nonumber
\end{align}
where due to Lemma \ref{lmm:supsum} Eq. (\ref{postproc_POVM}) implies $A_k\Pi_{j}=A_k$ and we also used stochasticity of matrix $\nu_{k}$. 
Finally, from Eq. (\ref{eq:def_Cjk}) it is clear that $C^j_k\geq0$ and one can check that $\forall j\in \Omega_B$ $\sum_{k\in \Omega} C^j_k=\Pi_j+(I_\mathcal{H}-\Pi_j)=I_\mathcal{H}$, where we used stochasticity of matrix $\nu_{k}$. Thus, $C^j\in\mathcal{O}(\Omega_k,\mathcal{H})$  is a valid POVM, which concludes the proof.
\end{proof}

Similarly, Corollaries \ref{corollary 1} and \ref{corollary 2} can be rewritten for POVMs telling us how an $N$ step sequential implementation for Total POVM can be obtained. 
In particular, by restricting Corollary \ref{corollary 2} to POVMs with two outcome sets $\lambda_i$ one recovers the result of Andersson and Oi \cite{AndOi} for implementation of POVMs by binary search trees. 

Another perspective on the obtained result for POVMs from Corollary \ref{corollary 4}, and also on Theorems \ref{main theorem} - \ref{theorem3} follows from considering the classical information obtained during the sequential implementation of instruments. We develop such approach in the next section.

%DONE Points to do:
%- generalization to N steps,
%- relation to Anderson and  Oi.

\section{Sequential implementation vs. postprocessing and incompatibility of instruments}
\label{sec:rel_incomp_postproc}

The aim of this section is to put the original results of this paper into slightly wider context. While for the sequential implementation of a Total instrument intermediate classical outcomes of the ASI can be forgotten (or discarded) they can be also kept (or copied) and analyzed. They provide some information about the actions we performed on the tested quantum system. In particular, in the $2$-step scenario outcome $j$ of POVM $B$ from Theorem \ref{main theorem}  
is simultaneously obtained together with simultaneous realization of the Total instrument $\mathcal{T}$. This situation is exactly what a compatibility of a Total instrument $\mathcal{T}$ and POVM $B$ asks for (for exact mathematical statement see \cite{lm2}, Definition $5$ and consider joint instrument defined by $\mathcal{G} \in Ins(\Omega_B\times \Omega, \mathcal{H},\mathcal{K})$, $\mathcal{G}_{(j,k)}=\RI^{j}_k \circ \mathcal{J}_{j}$). In other words, any implementation of a Total Instrument via ASI implies that all the effective POVMs associated to the intermediate outcomes must be compatible with the Total instrument. This shows a close link between the problem of sequential implementation of instruments and incompatibility of instrument and a POVM. 
This relation can be described also starting from the incompatibility of instruments. Suppose one chooses a compatible pair of  instrument and a POVM. Corollary $3$ of \cite{lm2} states that instrument $\mathcal{T}$ and a POVM $B$ are compatible if and only if the instrument $\mathcal{T}$ can be obtained by instrument postprocessing ($\mathcal{T}\preceq \mathcal{I}^B$) from a Lüders instrument $\mathcal{I}^B$ defined by the POVM $B$. 
%On the other hand, since ASI was defined as a generalization of the instrument postprocessing to multistep scenario for $2$-step ASI we realize that it's Total instrument $\mathcal{T}$ is automatically insgtrument postprocessing (see Definition \ref{def:postproc}) of it's initial instrument $\mathcal{I}$ 
On the other hand, ASI was defined as a generalization of the instrument postprocessing to multi-step scenario. As we already stated in Remark \ref{remarkASI1} each $2$-step ASI is a particular realization of an instrument postprocessing relation of the ASI's Initial instrument to it's Total instrument $\mathcal{T}$.
%(see Definitions \ref{def:postproc}, \ref{def:TotalI}). 
%Thus, corrollary $3$ of \cite{lm2} essentially says that $2$-step ASI of the form used in Theorem \ref{main theorem} exists if and only if POVM $B$ is compatible with the Total instrument. 
Thus, Corollary $3$ of \cite{lm2} essentially says the following.

\begin{proposition}
Consider a Total Instrument $\mathcal{T}$ $\in Ins(\Omega,\mathcal{H},\mathcal{K})$ and a POVM $B \in \mathcal{O}(\Omega_B,\mathcal{H})$. Two step adaptive sequence constituted by an Initial Instrument $\mathcal{J}$ $\in Ins(\Omega_B,\mathcal{H},\mathcal{H})$ and a set of Residual Instruments $\RI^{j} \in Ins(\Omega,\mathcal{H},\mathcal{K})$ $\forall j \in \Omega_B$ such that
\begin{equation} \label{eq:totallll}
    \mathcal{T}_k(\rho)= \sum_{j\in \Omega_B} (\RI^{j}_k \circ \mathcal{J}_{j})(\rho)=\sum_{j\in \Omega_B} \RI^{j}_k (\sqrt{B_{j}}\rho\sqrt{B_{j}})  
    \end{equation}   
$\forall k\in \Omega$, exists if and only if  POVM $B$ is compatible with the Total instrument $\mathcal{T}$. 
\end{proposition}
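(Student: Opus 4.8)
The plan is to obtain this proposition almost for free by combining Remark~\ref{remarkASI1} of the present paper with Corollary~3 of \cite{lm2}, and to make the compatibility claim concrete by writing down the obvious joint instrument. So the proof will be a short composition of cited results rather than a fresh computation.

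First I would establish that the existence of the claimed $2$-step ASI is equivalent to an instrument postprocessing relation. By Remark~\ref{remarkASI1}, a $2$-step ASI whose Initial Instrument is fixed to be the L\"uders instrument $\mathcal{I}^{B}$ of $B$ is exactly a concrete witness of the relation $\mathcal{I}^{B}\rightarrow\mathcal{T}$ in the sense of Definition~\ref{def:postproc}: the Residual Instruments $\RI^{j}$ play the role of the postprocessing instruments, and the displayed identity $\mathcal{T}_k=\sum_{j}\RI^{j}_k\circ\mathcal{J}_{j}$ with $\mathcal{J}_{j}=\mathcal{I}^{B}_j$ is literally the defining condition. Conversely, any postprocessing relation $\mathcal{I}^{B}\rightarrow\mathcal{T}$ supplies, by definition, operations $\RI^{j}_k$ realizing such an ASI. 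Hence ``the $2$-step ASI of the statement exists'' is equivalent to $\mathcal{T}\preceq\mathcal{I}^{B}$.

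Next I would invoke Corollary~3 of \cite{lm2}, which asserts that $\mathcal{T}\preceq\mathcal{I}^{B}$ holds if and only if the instrument $\mathcal{T}$ and the POVM $B$ are compatible; chaining this with the previous paragraph gives the proposition. To keep the ``only if'' direction self-contained, and to pin down which notion of compatibility is meant, I would additionally record the explicit joint instrument $\mathcal{G}\in Ins(\Omega_B\times\Omega,\mathcal{H},\mathcal{K})$ defined by $\mathcal{G}_{(j,k)}=\RI^{j}_k\circ\mathcal{J}_{j}$. One checks it is a valid instrument, since each $\mathcal{G}_{(j,k)}$ is CPTNI and $\sum_{j,k}\mathcal{G}_{(j,k)}=\sum_{j}\phi^{\RI^{j}}\circ\mathcal{J}_{j}$ is trace preserving because every $\phi^{\RI^{j}}$ is a channel and $\sum_{j}Tr[\mathcal{J}_{j}(\rho)]=\sum_{j}Tr[B_j\rho]=Tr[\rho]$. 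Its two marginals are exactly the pair in question: summing over $j$ gives $\sum_{j}\mathcal{G}_{(j,k)}=\mathcal{T}_k$ by the ASI identity, and summing over $k$ gives $\sum_{k}\mathcal{G}_{(j,k)}=\phi^{\RI^{j}}\circ\mathcal{J}_{j}$, whose induced effect acts as $Tr[\phi^{\RI^{j}}(\sqrt{B_j}\rho\sqrt{B_j})]=Tr[\sqrt{B_j}\rho\sqrt{B_j}]=Tr[B_j\rho]$, again by trace preservation of $\phi^{\RI^{j}}$. Existence of such a $\mathcal{G}$ is precisely compatibility of $\mathcal{T}$ with $B$.

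The only step needing care is the bookkeeping around the definition of compatibility of an instrument with a POVM: one must confirm that Definition~5 / Corollary~3 of \cite{lm2} ask for exactly a joint instrument with marginals $\mathcal{T}$ and $B$, matching the $\mathcal{G}$ above. Once that identification is in place there are no estimates or constructions left, and the proposition follows by combining Remark~\ref{remarkASI1} with the cited corollary.
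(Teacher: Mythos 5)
Your proposal is correct and follows essentially the same route as the paper: the authors likewise obtain the proposition by identifying the $2$-step ASI with the postprocessing relation $\mathcal{T}\preceq\mathcal{I}^B$ via Remark~\ref{remarkASI1} and then invoking Corollary~3 of \cite{lm2}, even pointing to the same joint instrument $\mathcal{G}_{(j,k)}=\RI^{j}_k\circ\mathcal{J}_{j}$ as the compatibility witness. Your explicit verification of the marginals of $\mathcal{G}$ is a small useful addition but not a different argument.
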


More precisely, it means that implementability of an instrument via $2$-step ASI is in one to one correspondence with compatibility of that instrument and a POVM %restricting 
induced by the first step of such ASI. At first sight, this result seems quite useful, but one should note that compatibility of instruments is not yet sufficiently understood and the methods developed in \cite{lm2} are not very practical to construct the desired ASI.

In contrast, the goals of this paper were much more practical. We aimed at explicitly constructing quite wide set of ASI, which implement the given instrument and we aimed at minimizing the used ancilla dimension through out the implementation. From this viewpoint Theorem \ref{main theorem} and especially it's constructive proof delivers better starting point for the derivation of all the results of this paper. 

An instrument is always compatible with it's induced POVM. Since we can apply classical postprocessing on the obtained outcomes it is compatible also with any postprocessing of it's induced POVM. However, in general such POVMs will be only subset of the POVMs compatible with the given instrument. Thus, one might expect that Theorem \ref{main theorem} does not provide all possible $2$-step ASI realizing the given instrument while having Lüders Initial instrument. 
A simple example of this type arises for the implementation of POVMs. While in Corollary \ref{corollary 4} POVM $B$ defining the first step of ASI must be a postprocessing of the Total POVM $A$, in Corollary $4$ from \cite{lm2} it can be any POVM $B$ compatible with POVM $A$. Although, the results presented in this paper provided only a subset of sequential implementation possibilities, they provide better control over needed resources and clear way of extending the implementation methods to multi-step scenario. For example, possibility of sequential implementation of a compatible POVMs was described already by Heinosaari and Miyadera \cite{univ}. However, to apply their methods to construct ASI one needs to first verify whether the chosen initial POVM and the Total POVM are compatible, which is a separate nontrivial problem often solvable only using (numerical) SDP techniques. Although our result is for POVMs less general, allowed initial POVMs are much more easily parametrized (simply by choosing a postprocessing matrix) and the proof of our Corollary \ref{corollary 4} provides also simple ways to construct the corresponding ASI.

Second concept that is intimately linked to sequential implementation of instruments is instrument postprocessing.  In \cite{lm1} postprocessing of instruments was defined and the primary focus was to identify instruments to which a given instrument can be postprocessed. On the other hand, the aim of this paper can be seen as a search for all instruments that can be postprocessed to a given fixed instrument or in the multi-step case to find sequences of postprocessing relations that yield the Total instrument. Similarly to incompatibility of instruments, also topic of instrument postprocessing is not yet very developed. Nevertheless, there are still some findings, which we can relate to the considered problem. In particular, if we rewrite Proposition $8$ from \cite{lm1} %the authors prove 
in the language of this paper then the authors prove that if the Total instrument $\mathcal{T}$ of a $2$ step ASI is indecomposable then the initial instruments induced POVM must be a postprocessing of POVM $A^{\mathcal{T}}$. This shows that for indecomposable instruments Theorem \ref{main theorem} already covers all the possibilities for the  POVM induced by the first step of the $2$-step ASI. 
It seems one could apply this reasoning  recursively as in Corollary \ref{corollary 2}. However, it seems that to do so rigorously and to formulate useful conclusions an amount of work is needed, which goes beyond the scope of this manuscript. Thus, exploration of this idea will be kept for future work.   

\section{Summary and Discussion}
\label{sec:summary}
This manuscript focuses on optimization of space resources while implementing the desired quantum evolution. 
Such task is relevant for majority of current NISQ devices, since they implement quantum circuits in which mid-circuits measurements are becoming available and reasonably reliable. The presented work is %mostly 
theoretical and it is based on mathematically describing mid-circuit measurements as quantum instruments, thus capturing both statistics of outcomes as well as the state change induced by such measurement. In particular, we define the notion of $N$-step adaptive sequence of instruments (ASI), which %corresponds to
describes any quantum evolution interleaved with %$N-1$ intermediate and a final measurement
$N$ measurements. Quantum instruments can have different dimension of input and output Hilbert space, or cardinality of the outcome space, which allows then to be equivalent to quantum states, channels or POVMs if these parameters are suitably chosen. % special cases.
Thus, the results we obtained are valid not only for sequential realization of instruments, but also for sequential realization of channels and POVMs. Central question is to find ASI for a fixed Total instrument that we intend to realize efficiently. Theorem \ref{main theorem} and it's proof constructively show how this can be done starting from any postprocessing of the induced POVM of the Total instrument. 
%\tcr{[I moved this statement nowhere...]} When minimization of the number of used ancillary qubits is a priority, we show that 
In the lab one still needs to realize each of the quantum instruments in the ASI by using ancillary system, unitary transformations and projective measurements on some subsystems. Here we assume quantum systems, can be reset and reused \cite{reuse},  %\tcb{which is a key step for circuit optimization for current NISQ era devices} 
%\tcr{which is a currently an achievable step} 
which is currently becoming a widely available step
for circuit optimization for %current 
NISQ era devices. We analyze the dimensionality of the ancillary systems needed in all branches and steps of the evolution taking into account also the differences in the input/output dimension. We found that coarse-grainings, i.e. postprocessings which just join outcomes, in general lead to smaller ancilla dimension in applications of Theorem \ref{main theorem}. For Total instruments, whose output Hilbert space dimension is not smaller than that of input, % Hilbert space  
in Theorem \ref{th:opt2stepsplit} we determined the minimal ancilla dimension for the $2$ step ASI.
On the other hand, for Total instruments, whose output Hilbert space dimension is smaller than that of input,
the situation is more complex. Nevertheless, we presented Theorem \ref{theorem3}, which achieves an improvement in the dimension of the used ancillary system with respect to Theorem \ref{main theorem} in this case thanks to choosing smaller intermediate Hilbert space, which connects the two steps of the ASI. We generalize the obtained results to $N$ step scenario in Section \ref{sec:n-asi}. Interestingly, if the Total instrument has outcome space with a cartesian product structure then we show that  such $N$-step ASI exist, which determine in each step one of the elements $(a_1,\ldots, a_N)$  
of the final $N$ tuple outcome. This has an advantage of requiring less classical memory to store intermediate results, or earlier termination of the evolution if we are waiting only for some subset of results. 
Previously proposed schemes for sequential implementation of POVMs \cite{AndOi,danobouda,MCMexpt2024}, channels or instruments \cite{channelcons} can be seen as a special case of Corollary \ref{corollary 1}
%more general results 
proved in this paper. 
%\tcb{In %the same 
%the end of} Section \ref{sec:n-asi} 
As a case study we also illustrate %the whole procedure 
application of our Theorems for Lüders instruments defined by a three parametric class of $2$-qubit informationally complete POVMs. 

Particularly relevant in the NISQ era might be the result we managed to obtain 
%in Section \ref{sec:n-asi}, which concern achievable tradeoff between currently scarce resources. In particular, we show that in 
for sequential implementations of instruments with non-shrinking Hilbert space dimension.  %derived via Corrolary \ref{corollary 1} due to Corrolary
 ASI, which are constructed using Corollary \ref{corollary 1} have due to %\ref{cor:tradeoff1} claim that the number of mid-circuits measurements (steps of ASI) 
 Corollary \ref{cor:tradeoff1} the number of mid-circuits measurements (steps of ASI) 
 times the number of used ancillary qubits lower bounded by a constant, which depends on the Kraus rank of the implemented instrument. This provides a practical way of trading off these two resources. Moreover, our considerations %in Section \ref{sec:limitations_N-ASI} 
 about general limitations of $N$-step ASI 
 suggest that qualitatively this type of tradeoff is inevitable if the intermediate Hilbert space dimensions of the ASI can not be (sufficiently often) below the input dimension. 
%We phrased another interesting result as Corrolary \ref{cor:low_mem_Inst_Ns}.
Another interesting result is expressed by Corollary \ref{cor:low_mem_Inst_Ns}. In terms of qubits, we show that for any quantum instrument which transforms $n$ to $m(>n)$ qubits, there exist $N$-step ASI implementing it just with $(m-n)$ ancillary qubits, which are remeasured $(N-1)$ times and finally used as output qubits. 
In contrast, in previously published results on sequential decompositions of POVMs, channels/instruments %previously considered cases,
increase of Hilbert space dimension was not considered and at least one qubit ancilla was needed in the studied cases.

The results of this manuscript could be seen as providing tools for performing tradeoff between number of qubits used (system plus ancilla) and the number of (intermediate) mid-circuit measurements. Presented methods were designed with the aim to provide constructive and practical solutions, while keeping the above resources under control. As we discuss in Section \ref{sec:rel_incomp_postproc} in general the problem of finding $2$-step ASI for a given Total instrument is related to compatibility of the Total instrument and a POVM. How this relation looks in the $N$-step case is an open question. Already for the $2$-step case presented results do not need to provide exhaustive / optimal possibilities for construction of ASI, since (except for indecomposable instruments) they explore only a subset of compatibility possibilities. Thus, future research could explore potential of these uncovered cases. Progress in that direction would benefit from further progress on incompatibility of quantum instruments. On the other hand, our results %in Section \ref{sec:limitations_N-ASI} 
on general limitations of $N$-step ASI show that ultimate trade-off bounds could be constructed if we would have better understanding of achievable Hilbert space dimensions in the ASI for a given Total instrument. 
Another important parameter for NISQ devices is the number of $2$-qubit gates. 
Universal decompositions of state preparations, unitary transformations and isometries into quantum gates could be used for implementing instruments obtained by methods developed in this paper. It is known that such decompositions in general require exponential number of gates with respect to number of qubits on which the transformation acts. However, transformation relevant for quantum computing must be implementable with polynomial %resources 
number of gates. Unfortunately, universal decompositions often yield exponentially long quantum circuits also for transformations known to be efficiently realizable. Thus, this is partly the reason, why it is difficult to extend our analysis from the level of needed ancilla dimension to the number of needed gates. Conclusions made based on universal decompositions may be true on most random instances of the problem (such approach to resource counting  for sequential implementation was considered before by Iten, Colbeck, and Christandl in \cite{ICC_2017}), however most likely one is really interested in the set of measure zero, which is relevant for practical applications. Anyway, exploration of above discussed workflow for medium system sizes for practically relevant instances of quantum channels, POVMs or instruments might be also interesting path, which could help us understand the potential of the presented methods for practical applications.  
Another practically  useful generalization of the presented results, especially in the NISQ era is approximate realization of quantum instruments. For POVMs \cite{oz1,oz2,oz3} and recently also for instruments \cite{tava} the problem was in single step approach addressed by developing dedicated methods. However, how to mathematically address this kind of question for instruments and in multi-step approach remains an open question.

\acknowledgements 
The authors would like to thank M. Ziman for fruitful discussions around the topic. %M.S. and S.S. 
This work was supported by the Slovak Research and Development Agency through the project APVV-22-0570 (DeQHOST), by project VEGA 2/0164/25 (Quantum Structures) and by project Cost CA22113. 
%S.S. acknowledges the support from 
%M.S. acknowledges the support from project 
M.S. was partly funded by the EU NextGenerationEU through the Recovery and Resilience Plan for Slovakia under the project No. 09I03-03-V04-00777. %názov: Quantum entanglement network applications
%and 
 %M.S. was further supported by project Cost CA22113}. %" Fundamental challenges in theoretical physics " 
%....... M.S. was further supported by ....Sk-QCI?, Cost ? Doktogrand?
%This work was supported by project skQCI no. 101091548, which is funded by the European Union from the Digital Europe Programme and the NextGenerationEU Programme from the Recovery and Resilience Plan of the Slovak Republic.

    %skQCI DIGITAL-2021-QCI-01 No. 101091548
%Title: Slovak Quantum Communication Infrasture

\appendix

\section{Calculation of matrix powers in $\mathcal{H}_2$} \label{appendix2}
Any positive Hermitian operator $X$ acting on two-dimensional Hilbert space (due to the isomorphism with $\mathbb{C}_2$), can be written in the operator basis formed by identity and Pauli matrices, i.e. 
\begin{eqnarray}
    X= \alpha I + \beta (\vec{n}\cdot\vec{\sigma}).
\end{eqnarray}

The operator $\vec{n}.\vec{\sigma}$ can be diagonalized as
\begin{equation*}
   \vec{n}.\vec{\sigma} = \ket{\vec{n}+}\bra{\vec{n}+}-
\ket{\vec{n}-}\bra{\vec{n}-}
\end{equation*}
where $\ket{\vec{n}+}$ and $\ket{\vec{n}-}$ are 
%vectors on two opposite ends of a Bloch sphere. \\
two orthogonal pure states corresponding to the two opposite points of a Bloch sphere.
Consequently, operator $X$ can be recast in the spectral form as
\begin{equation}
   X= \lambda_{\vec{n}+}\ket{\vec{n}+}\bra{\vec{n}+}+\lambda_{\vec{n}-}\ket{\vec{n}-}\bra{\vec{n}-},
\end{equation}
where $\lambda_{\vec{n}+}=\alpha+\beta$, $\lambda_{\vec{n}-}=\alpha-\beta$. 
Now, it is straightforward to calculate any powers of the operator $X$,
%due to the spectral decomposition theorem, 
i.e., for any real number $\gamma$, 
\begin{equation}
    X^{\gamma}= \lambda_{\vec{n}+}^{\gamma}\ket{\vec{n}+}\bra{\vec{n}+}+\lambda_{\vec{n}-}^{\gamma}\ket{\vec{n}-}\bra{\vec{n}-}
\end{equation}
and the result can be again recast in the Pauli basis 
\begin{equation}
    X^{\gamma}=\frac{\lambda_{\vec{n}+}^{\gamma}+\lambda_{\vec{n}-}^{\gamma}}{2}I + \frac{\lambda_{\vec{n}+}^{\gamma}-\lambda_{\vec{n}-}^{\gamma}}{2}(\hat{n}\cdot\vec{\sigma})
\end{equation}

\section{Support of a sum of positive operators} \label{appendix1}
Let us denote the kernel of an operator  $C$ by, $ker(C)$.
%and the linear span space of two operators $C,E$ by $span(C+E)$. 
\begin{lemma} \label{lemma:1} 
 Let  $D \in \mathcal{L(H)}$, be a positive semidefinite operator. Then,  $\braket{\psi|D|\psi}=0$ for a vector $\ket{\psi}\in \mathcal{H}$ if and only if  $\ket{\psi} \in ker(D)$.
\end{lemma}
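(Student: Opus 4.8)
The plan is to prove the two implications separately; only the reverse one needs an argument. First I would dispatch the easy direction: if $\ket{\psi}\in\ker(D)$ then $D\ket{\psi}=0$, so trivially $\braket{\psi|D|\psi}=0$.

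For the converse the key fact I would use is that a positive semidefinite operator factorizes as $D=\sqrt{D}^{\dag}\sqrt{D}$, where $\sqrt{D}$ is the (unique) positive semidefinite square root. Then $\braket{\psi|D|\psi}=\braket{\psi|\sqrt{D}^{\dag}\sqrt{D}|\psi}=\big\|\sqrt{D}\ket{\psi}\big\|^{2}$, so the assumption $\braket{\psi|D|\psi}=0$ forces $\sqrt{D}\ket{\psi}=0$ by positive-definiteness of the norm, and hence $D\ket{\psi}=\sqrt{D}\big(\sqrt{D}\ket{\psi}\big)=0$, i.e. $\ket{\psi}\in\ker(D)$.

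As an equivalent alternative I could instead invoke the spectral decomposition $D=\sum_{i}\lambda_{i}\ket{e_{i}}\bra{e_{i}}$ with $\lambda_{i}\ge 0$ and $\{\ket{e_{i}}\}$ orthonormal. Then $\braket{\psi|D|\psi}=\sum_{i}\lambda_{i}\,|\braket{e_{i}|\psi}|^{2}$ is a sum of non-negative terms, so it vanishes only if $\braket{e_{i}|\psi}=0$ for every $i$ with $\lambda_{i}>0$; but then every surviving term of $D\ket{\psi}=\sum_{i}\lambda_{i}\braket{e_{i}|\psi}\ket{e_{i}}$ vanishes, giving $D\ket{\psi}=0$ once more.

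I do not expect any genuine obstacle here: the statement is a standard fact about positive operators, invoked later in Appendix \ref{appendix1} and in the main text. The only point to be careful about is to pass through a bona fide factorization $D=C^{\dag}C$ (via the square root, or via the spectral decomposition coming from positivity) rather than trying to manipulate the quadratic form $\braket{\psi|D|\psi}$ directly. I would present the square-root argument as the main line of proof, since it is the shortest and requires no choice of basis.
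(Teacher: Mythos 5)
Your proof is correct. Your preferred main line — writing $D=\sqrt{D}^{\dag}\sqrt{D}$ so that $\braket{\psi|D|\psi}=\bigl\|\sqrt{D}\ket{\psi}\bigr\|^{2}$ and concluding $\sqrt{D}\ket{\psi}=0$, hence $D\ket{\psi}=0$ — is a basis-free argument that the paper does not use; the paper instead takes exactly the route you list as your alternative: it writes the spectral decomposition $D=\sum_{i=1}^{r_D}d_i\ket{v_i}\bra{v_i}$ with $d_i>0$, extends $\{\ket{v_i}\}$ to an orthonormal basis, expands $\ket{\psi}$ in that basis, and argues that vanishing of the expectation value forces all coefficients along the support of $D$ to vanish. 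The two arguments are logically interchangeable here; the square-root version is shorter and avoids choosing a basis, while the spectral version is what the paper reuses verbatim in the proof of its Lemma on supports of sums of positive operators (Appendix \ref{appendix1}), where having the eigenvectors in hand is convenient. Either way the content is identical and there is no gap.
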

\begin{proof}
Clearly, $\ket{\psi} \in ker(D)$ implies $\braket{\psi|D|\psi}=0$. For the converse implication, we consider spectral decomposition $D=\sum_{i=1}^{r_D} d_i \ket{v_i}\bra{v_i}$ of operator $D$, in which $d_i>0$ and vectors $\{\ket{v_i}\}_{i=1}^{r_D}$ are orthonormal. We can extend these vectors into an orthonormal basis $\{v_i\}_{i=1}^{d} \in \mathcal{H}$. Thus, an arbitrary pure state $\ket{\psi}\in \mathcal{H}$ can be decomposed as  $\ket{\psi}=\sum_{i=1}^d c_i\ket{v_i}=\sum_{i=1}^{r_D} c_i\ket{v_i}+ \ket{\psi'}$, and $1= \sum_{i=1}^{r_D}|c_i|^2 + \braket{\psi'|\psi'}$. Expectation value $\braket{\psi|D|\psi}=0$ implies $\sum_{i=1}^{r_D} |c_i|^2=0$ and consequently $c_i=0$ for $i=1,\ldots,r_D$, $\ket{\psi}=\ket{\psi'}$. Due to $\braket{a_i|\psi'}=0$ for $i=1,\ldots,r_D$ we have $D\ket{\psi}=0$. Thus, we proved $\ket{\psi} \in ker(D)$, which concludes the proof.
\end{proof}
  
\begin{lemma}
\label{lmm:supsum}
For two positive-semidefinite operators $A,B \in \mathcal{L(H)}$,
\begin{equation} \label{oppp}
 supp(A+B)=span(supp(A),supp(B))   
\end{equation}
    
\end{lemma}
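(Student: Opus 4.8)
The plan is to pass to kernels and then take orthogonal complements. Since $A$ and $B$ are positive semidefinite, so is $A+B$, and for any positive semidefinite operator $C$ one has $supp(C)=ker(C)^\perp$. Hence it suffices to establish the identity $ker(A+B)=ker(A)\cap ker(B)$ and afterwards apply $\perp$ to both sides.

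First I would verify the easy inclusion $ker(A)\cap ker(B)\subseteq ker(A+B)$: if $A\ket{\psi}=0$ and $B\ket{\psi}=0$ then $(A+B)\ket{\psi}=0$. For the reverse inclusion I would take any $\ket{\psi}$ with $(A+B)\ket{\psi}=0$, so that $\braket{\psi|(A+B)|\psi}=0$, i.e. $\braket{\psi|A|\psi}+\braket{\psi|B|\psi}=0$. Because $A$ and $B$ are positive, both summands are non-negative, so each vanishes separately. Applying Lemma \ref{lemma:1} to $A$ and to $B$ yields $\ket{\psi}\in ker(A)$ and $\ket{\psi}\in ker(B)$, hence $\ket{\psi}\in ker(A)\cap ker(B)$. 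This proves $ker(A+B)=ker(A)\cap ker(B)$.

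Finally, taking orthogonal complements and using the standard finite-dimensional identity $(\mathcal{U}\cap\mathcal{W})^\perp=\mathcal{U}^\perp+\mathcal{W}^\perp$, I would conclude
\begin{equation}
supp(A+B)=ker(A+B)^\perp=\bigl(ker(A)\cap ker(B)\bigr)^\perp=ker(A)^\perp+ker(B)^\perp=supp(A)+supp(B),
\end{equation}
which is exactly $span(supp(A),supp(B))$.

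There is no genuine obstacle in this argument: the only step that needs a little care is using positivity of $A$ and $B$ to split the vanishing sum $\braket{\psi|A|\psi}+\braket{\psi|B|\psi}=0$ into two separately vanishing expectation values, and that is precisely where Lemma \ref{lemma:1} is invoked to pass from a zero expectation value back to membership in the kernel. Everything else is routine linear algebra valid in finite dimension.
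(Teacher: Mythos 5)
Your proof is correct and follows essentially the same route as the paper's: both establish $ker(A+B)=ker(A)\cap ker(B)$ via positivity and Lemma \ref{lemma:1}, then pass to supports by orthogonal complementation. The only cosmetic difference is that you invoke the identity $(\mathcal{U}\cap\mathcal{W})^\perp=\mathcal{U}^\perp+\mathcal{W}^\perp$ directly, where the paper checks both inclusions by hand using the spectral bases.
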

\begin{proof}
    Let us expand $A$ and $B$ using their spectral decompositions.
    %w.r.t bases $\{w_i\}$ and $\{u_i\}$ respectively. 
    So we have, $A=\sum_i e_i \ket{w_i}\bra{w_i}$ and $B=\sum_j f_j \ket{u_j}\bra{u_j}$ where $e_i,f_j>0$. Consider a state $\ket{\psi} \in ker(A+B)$. From Lemma (\ref{lemma:1}), we infer that $\braket{\psi|A+B|\psi}=0$. Due to $A,B\geq 0$ this implies $\braket{\psi|A|\psi}=\braket{\psi|B|\psi}=0$. So $\ket{\psi} \in ker(A)$ as well as $\ket{\psi} \in ker(B)$. Thus $ker(A+B)=ker(A)\cap ker(B)$.
    
    As we can write $ker(A)=(span(\{w_i\}_{i=1}^{r_A}))^\perp$ and $ker(B)=(span(\{u_i\}_{i=1}^{r_B}))^{\perp}$, we conclude that any $\ket{\psi} \in ker(A+B)$ must be orthogonal to all vectors $\{w_i\}_{i=1}^{r_A}, \{u_i\}_{i=1}^{r_B}$ and consequently also to their linear combinations. This proves $ker(A+B)\subseteq (span(supp(A),supp(B)))^\perp$. On the other hand, for every  $\ket{\psi}\in(span(supp(A),supp(B)))^\perp$ we clearly have $(A+B)\ket{\psi}=0$, so  $\ket{\psi} \in ker(A+B)$ and we conclude  Eq. (\ref{oppp}) holds.    
    %$span(supp(A), supp(B))$. So we have proved eqn. (\ref{oppp}).
\end{proof}


\begin{thebibliography}{100}
\bibitem{preskill} J. Preskill, \textit{Quantum Computing in the NISQ era and beyond}, 	\href{https://doi.org/10.22331/q-2018-08-06-79}{Quantum \textbf{2}, 79 (2018)}
\bibitem{Naimark} M. Neumark, \textit{On a representation of additive operator set functions}, Dokl. Acad. Sci. USSR \textbf{41} 359 (1943)
 
\bibitem{stinespring} W.F. Stinespring, \textit{Positive functions on $C^{*}$-algebras.} \href{https://doi.org/10.2307/2032342}{Proc. Amer. Math. Soc., \textbf{6} 211, (1955)}
\bibitem{ozawa} M. Ozawa, \textit{Quantum measuring processes of continuous observables}, \href{https://doi.org/10.1063/1.526000}{J. Math. Phys. \textbf{25}, 79 (1984)}

\bibitem{bharti} K. Bharti, A. C. Lierta, T. H. Kyaw, T. Haug, S. A. Lea, A. Anand, M. Degroote, H. Heimonen, J. S Kottmann, T. Menke, W. K. Mok, S. Sim, L.C. Kwek, A. A. Guzik, \textit{Noisy intermediate-scale quantum algorithms}, \href{https://doi.org/10.1103/RevModPhys.94.015004}{Rev. Mod. Phys. \textbf{94}, 015004 (2022)}

\bibitem{zoltan} Z. Zimborás, B. Koczor, Z. Holmes, et al.,\textit{Myths around quantum computation before full fault tolerance: What no-go theorems rule out and what they don't}, \href{https://doi.org/10.48550/arXiv.2501.05694}{arXiv:2501.05694 [quant-ph] (2025)}

\bibitem{lloyd} S. Lloyd, L. Viola, \textit{Engineering quantum dynamics}, \href{https://doi.org/10.1103/PhysRevA.65.010101}{Phys. Rev. A, \textbf{65}, 010101(R) (2001)}
\bibitem{Ying} G. Wang and M. Ying, \textit{Realization of positive-operator-valued measures by projective measurements without introducing ancillary dimensions}, \href{https://doi.org/10.48550/arXiv.quant-ph/0608235}{arXiv:quant-ph/0608235v3 (2006)}
\bibitem{AndOi} E. Andersson and D. K. L. Oi, \textit{Binary search trees for generalized measurements}, \href{https://doi.org/10.1103/PhysRevA.77.052104}{Phys. Rev. A \textbf{77}, 052104 (2008)}
\bibitem{tomo2} V. A. Zhuravlev and S. N. Filippov, \textit{Quantum state tomography via sequential uses of the same informationally incomplete measuring apparatus}, \href{https://doi.org/10.1134/S1995080220120434}{Lobachevskii J. Math. \textbf{41}, 2405 (2020)}
\bibitem{comp} A. J. Roncaglia, L. Aolita, A. Ferraro, and A. Acín, \textit{Sequential measurement-based quantum computing with memories}, \href{https://doi.org/10.1103/PhysRevA.83.062332}{Phys. Rev. A \textbf{83}, 062332 (2011)}
\bibitem{wilde} M. M. Wilde, \textit{Sequential decoding of a general classical-quantum channel}, \href{http://doi.org/10.1098/rspa.2013.0259}{Proc. R. Soc. London, Ser. A \textbf{469}, 20130259 (2013)}
\bibitem{ssd} J. Bergou, E. Feldman, and M. Hillery, \textit{Extracting Information from a Qubit by Multiple Observers: Toward a Theory of Sequential State Discrimination}, \href{https://doi.org/10.1103/PhysRevLett.111.100501}{Phys. Rev. Lett. \textbf{111}, 100501 (2013)}
\bibitem{incomp} T. Heinosaari, T. Miyadera, and M. Ziman, \textit{An invitation to quantum incompatibility}, \href{https://doi.org/10.1088/1751-8113/49/12/123001}{J. Phys. A: Math. Theor. \textbf{49}, 123001 (2016)}
\bibitem{danobouda} D. Reitzner, J. Bouda, \textit{General Measurements with Limited Resources and Their Application to Quantum Unambiguous State Discrimination}, \href{https://doi.org/10.1007/s11128-024-04468-w}{Quantum Inf. Process., \textbf{23}, 268 (2024)}
\bibitem{channelcons} C. Shen, K. Noh, V. V. Albert et. al., \textit{Quantum channel construction with circuit quantum electrodynamics}, \href{https://doi.org/10.1103/PhysRevB.95.134501}{Phys. Rev. B \textbf{95}, 134501 (2017)}
\bibitem{mcm_first}K. Rudinger,  G. J. Ribeill, L. C.G. Govia, M. Ware, E.Nielsen, K. Young, T. A. Ohki, R. B.-Kohout, and T. Proctor, \textit{Characterizing Midcircuit Measurements on a Superconducting Qubit Using Gate Set Tomography}, \href{https://doi.org/10.1103/PhysRevApplied.17.014014}{Phys. Rev. App. \textbf{17}, 014014 (2022)}
%DOI: 10.1103/PhysRevApplied.17.014014
\bibitem{mcmbench} L.C.G. Govia, P. Jurcevic, C.J. Wood, N. Kanazawa, S.T. Merkel, D.C. McKay, \textit{A randomized benchmarking suite for mid-circuit measurements}, \href{https://doi.org/10.1088/1367-2630/ad0e19}{New J. Phys. \textbf{25} 123016 (2023)}
\bibitem{mcm_tomo_1}
J. Z. Blumoff, K. Chou, C. Shen, M. Reagor, C. Axline, R. T. Brierley , M. P. Silveri, C. Wang , B. Vlastakis, et al., \textit{Implementing and Characterizing Precise multi-qubit Measurements}, \href{https://doi.org/10.1103/PhysRevX.6.031041}{Phys. Rev. X \textbf{6}, 031041 (2016)}
\bibitem{mcm_tomo_2}
D. Riste, C.C. Bultink, K.W. Lehnert, and L. DiCarlo, 
\textit{Feedback Control of a Solid-State Qubit Using High-Fidelity Projective Measurement}, \href{https://doi.org/10.1103/PhysRevLett.109.240502}{Phys. Rev. Lett. \textbf{109}, 240502 (2012)}
\bibitem{mcm_tomo_4} T. C. Ralph, S. D. Bartlett, J. L. O’Brien, G. J. Pryde, and H. M. Wiseman, \textit{Quantum non-demolition measurements for quantum information}, \href{https://doi.org/10.1103/PhysRevA.73.012113}{Phys. Rev. A \textbf{73}, 012113 (2006)}
\bibitem{mcm_tomo_5} L. Pereira, J. J. García-Ripoll, T. Ramos, 
\textit{Complete Physical Characterization of Quantum Nondemolition Measurements via Tomography}, \href{https://doi.org/10.1103/PhysRevLett.129.010402}{Phys. Rev. Lett.  \textbf{129}, 010402 (2022)}
%DOI: 10.1103/PhysRevLett.129.010402

%\bibitem{mcm_tomo_3}
%L. C. G. Govia, P. Jurcevic, C. J. Wood, N. Kanazawa, S. T. Merkel and D. C. McKay,
%A randomized benchmarking suite for mid-circuit measurements, 
%New J. Phys. 25, 123016 (2023)
%DOI 10.1088/1367-2630/ad0e19  (repeating)
%DOI: 10.1103/PhysRevA.73.012113

\bibitem{mcm_tomo_6}
D. Hothem, J. Hines, C. Baldwin, D. Gresh, R. B.-Kohout, T. Proctor, \textit{Measuring error rates of mid-circuit measurements}, \href{https://doi.org/10.1038/s41467-025-60923-x}{Nat. Comm.\textbf{16}, 5761 (2025)}

\bibitem{mcm_tomo_8}
Z. Zhang, S. Chen, Y. Liu, and L. Jiang, \textit{Generalized Cycle Benchmarking Algorithm for Characterizing Midcircuit Measurements}, \href{https://doi.org/10.1103/PRXQuantum.6.010310}{PRX Quantum \textbf{6}, 010310 (2025)}
%DOI: 10.1103/PRXQuantum.6.010310
\bibitem{algoseq} A. D. Córcoles, M. Takita, K. Inoue, S. Lekuch, Z.K. Minev, J.M. Chow, J. M. Gambetta, \textit{Exploiting Dynamic Quantum Circuits in a Quantum Algorithm with Superconducting Qubits},  \href{https://doi.org/10.1103/PhysRevLett.127.100501}{Phys. Rev. Lett. \textbf{127}, 100501 (2021)}
\bibitem{MCMexpt2024} P. Ivashkov, G. Uchehara, L. Jiang, D.S. Wang, A. Seif, \textit{High-Fidelity, Multiqubit Generalized Measurements with Dynamic Circuits}, \href{https://doi.org/10.1103/PRXQuantum.5.030315}{PRX Quantum \textbf{5}, 030315 (2024)}
\bibitem{entangleMCM} E. Bäumer, V. Tripathi, D.S. Wang, P. Rall, E. H. Chen, S. Majumder, A. Seif, Z.K. Minev, \textit{Efficient Long-Range Entanglement Using Dynamic Circuits}, \href{https://doi.org/10.1103/PRXQuantum.5.030339}{PRX Quantum \textbf{5}, 030339 (2024)}
\bibitem{am1} A. Mitra,  M. Farkas, \textit{Compatibility of quantum instruments}, \href{https://doi.org/10.1103/PhysRevA.105.052202}{Phys. Rev. A \textbf{105}, 052202 (2022)}
\bibitem{am2} A. Mitra,  M. Farkas, \textit{Characterizing and quantifying the incompatibility of quantum instruments}, \href{https://doi.org/10.1103/PhysRevA.107.032217}{Phys. Rev. A \textbf{107}, 032217 (2023)}
\bibitem{lm2} L. Leppäjärvi, M. Sedlák, \textit{Incompatibility of quantum instruments}, \href{https://doi.org/10.22331/q-2024-02-12-1246}{Quantum \textbf{8}, 1246 (2024)}
\bibitem{cleanpovm} F. Buscemi, M. Keyl, G. M. D’Ariano, P. Perinotti, R. F. Werner, \textit{Clean Positive Operator Valued Measures}, \href{https://doi.org/10.1063/1.2008996}{J. Math. Phys. \textbf{46}, 082109 (2005)}
\bibitem{Kraus} K. Kraus, \href{https://doi.org/10.1007/3-540-12732-1}{States, Effects, and Operations}, \textit{Lecture Notes in Physics }, Springer-Verlag, Berlin, (1983)





\bibitem{Davies} E. B. Davies, J. T. Lewis, \textit{An operational approach to quantum probability}, \href{https://doi.org/10.1007/BF01647093}{Commun. Math. Phys. \textbf{17}, 239 (1970)}


\bibitem{measurement}  P. Busch, P. J. Lahti, J. P. Pellonpää, K. Ylinen,  \href{https://doi.org/10.1007/978-3-319-43389-9_1}{Quantum measurement}(Vol. 23). Berlin: Springer, (2016)
\bibitem{ziman} T. Heinosaari, M. Ziman, \href{https://doi.org/10.1017/CBO9781139031103}{The mathematical language of quantum theory: from Uncertainty to Entanglement}, Cambridge University Press, (2011)
\bibitem{luders} G. Lüders., \textit{Über die Zustandsänderung durch den Messprozess}, \href{https://doi.org/10.1002/andp.19504430510}{Ann. Physik \textbf{6},322 (1951)}
\bibitem{hayashi1} M. Hayashi, \href{https://doi.org/10.1007/3-540-30266-2}{Quantum Information: An Introduction}, Springer-Verlag, Berlin, (2006). Translated from the 2003 Japanese original
\bibitem{lm1} L. Leppäjärvi, Michal Sedlák, \textit{Postprocessing of quantum instruments}, \href{https://doi.org/10.1103/PhysRevA.103.022615}{Phys. Rev. A \textbf{103}, 022615 (2021)}

\bibitem{moo}  E.H. Moore, \textit{On the reciprocal of the general algebraic matrix}, Bul. Am. Math. Soc., \textbf{26} 9 (1920)
\bibitem{pen} R. Penrose, \textit{A generalized inverse for matrices}, \href{https://doi.org/10.1017/S0305004100030401}{Math. Proc. Camb. Philos. Soc., \textbf{51} 3 (1955)}
\bibitem{SIC} J. M. Renes, R. B.-Kohout, A. J. Scott, C. M. Caves, \textit{Symmetric informationally complete quantum measurements}, \href{https://doi.org/10.1063/1.1737053}{J. Math. Phys. \textbf{45}, 2171 (2004)}
\bibitem{reuse} M. DeCross, E. Chertkov, M. Kohagen,M. F-Feig, \textit{Qubit-Reuse Compilation with Mid-Circuit Measurement and Reset}, \href{https://doi.org/10.1103/PhysRevX.13.041057}{Phys. Rev. X \textbf{13}, 041057 (2023)}
\bibitem{univ}  T. Heinosaari and T. Miyadera, \textit{Universality of sequential quantum measurements}, \href{https://doi.org/10.1103/PhysRevA.91.022110}{Phys. Rev. A \textbf{91}, 022110 (2015)}
\bibitem{ICC_2017}
Raban Iten, Roger Colbeck, and Matthias Christandl, 
\textit{Quantum circuits for quantum channels}, \href{https://doi.org/10.1103/PhysRevA.95.052316}{Phys. Rev. A \textbf{95}, 052316 (2017)}
% references arranged until here

\bibitem{oz1} M. Oszmaniec, L. Guerini, P. Wittek, A. Acín, \textit{Simulating Positive-Operator-Valued Measures with Projective Measurements}, \href{https://doi.org/10.1103/PhysRevLett.119.190501}{Phys. Rev. Lett. \textbf{119}, 190501 (2017)}
\bibitem{oz2} M. Oszmaniec, F. B. Maciejewski, and Z. Puchała, \textit{Simulating all quantum measurements using only projective measurements and postselection}, \href{https://doi.org/10.1103/PhysRevA.100.012351}{Phys. Rev. A \textbf{100}, 012351 (2019)}
\bibitem{oz3} T. Singal, F.B. Maciejewski, M. Oszmaniec, \textit{Implementation of quantum measurements using classical resources and only a single ancillary qubit}, \href{https://doi.org/10.1038/s41534-022-00589-1}{npj Quantum Inf \textbf{8}, 82 (2022)}
\bibitem{tava} S. Khandelwal, A. Tavakoli,
 \textit{Simulating Quantum Instruments with Projective Measurements and Quantum Postprocessing}, \href{https://doi.org/10.1103/bhr5-g71p}{Phys. Rev. Lett. \textbf{135}, 040202 (2025)}



\begin{comment}
   tcr{arranged until here as it is yet to be edited }
\bibitem{ozawa} M. Ozawa, \textit{Quantum measuring processes of continuous observables, J. Math. Phys.} \textbf{25}, 79 (1984)



\bibitem{relab1} E. Haapasalo, T. Heinosaari and T. Miyadera, \textit{The unavoidable information flow to environment in quantum measurements}, \href{https://doi.org/10.1063/1.5029399}{J. Math. Phys. \textbf{59}, 082106 (2018)}


\bibitem{roman} R. Stricker, D. Vodola, A. Erhard, L. Postler, M. Meth, M. Ringbauer, P. Schindler, R. Blatt, M. Müller, T. Monz,\textit{Characterizing Quantum Instruments: From Nondemolition Measurements to Quantum Error Correction},  PRX Quantum \textbf{3}, 030318 (2022) 



\bibitem{Peres} A.Peres, \textit{Quantum Theory: Concepts and Methods}, Kluwer Academic Publishers, Dordrecht (1995)
%\bibitem{x1} R. B. M. Clarke, A. Chefles, S. M. Barnett and E. Riis, \textit{Experimental demonstration of optimal unambiguous state discrimination}, Phys. Rev. A, \textbf{63}, 040305 (R) (2001)
%\bibitem{expp}Y.-y. Zhao, N.-k. Yu, P. Kurzyński, G.-y. Xiang, C.-Feng Li, and G.-Can Guo, \textit{Experimental realization of generalized qubit measurements based on quantum walks}, Phys. Rev. A \textbf{91}, 042101 (2015)
%\bibitem{wigner} E.P. Wigner, \textit{The Problem of Measurement}, Am. J. Phys. \textbf{31}, 6 (1963) 
%\bibitem{Max} M. Schlosshauer, \textit{Decoherence, the measurement problem, and interpretations of quantum mechanics}, Rev. Mod. Phys. \textbf{76}, 1267 (2005)
\bibitem{von} J. von Neumann, \textit{Mathematical Foundations of Quantum Mechanics}, Princeton University Press, (1955)

\bibitem{discrim} J. A. Bergou, \textit{Discrimination of quantum states}, J. Mod. Opt., \textbf{57}(3), 160 (2010)
\bibitem{est} R. Derka, V. Buzek, and A. K. Ekert, \textit{Universal Algorithm for Optimal Estimation of Quantum States from Finite Ensembles via Realizable Generalized Measurement}, Phys. Rev. Lett. \textbf{80}, 1571 (1998) 
\bibitem{qkd} C. H. Bennett, Quantum cryptography using any two nonorthogonal states, Phys. Rev. Lett. \textbf{68}, 3121(1992) 


%\bibitem{Juha1} J.-P. Pellonpää, \textit{Quantum instruments: I. Extreme instruments}, J. Phys. A: Math. Theor. \textbf{46} 025302 (2013)
%\bibitem{Juha2} J.-P. Pellonpää, \textit{Quantum instruments: II. Measurement theory}, J. Phys. A: Math. Theor. \textbf{46} 025303 (2013)


\bibitem{dressjordan} J. Dressel, A. N. Jordan, \textit{Quantum instruments as a foundation for both states and observables}, Phys. Rev. A \textbf{88}, 022107 (2013)
%\bibitem{cq} S. Wagner, J.-D. Bancal, N. Sangouard, and P. Sekatski, \textit{Device-independent characterization of quantum instruments}, Quantum \textit{4}, 243 (2020).


\bibitem{Naimark} M. A. Naimark, Izv. Akad. Nauk. SSSR, Ser. Mat. \textbf{4}, 53277 (1940)

 



%\bibitem{Dressel} J. Dressel, T. A. Brun, and A. N. Korotkov, \textit{Implementing generalized measurements with superconducting qubits},  Phys. Rev. A \textbf{90}, 032302  (2014)
%\bibitem{tomo1} C. Carmeli, T. Heinosaari, and A. Toigo, \textit{Informationally complete joint measurements on finite quantum systems}, Phys. Rev. A \textbf{85}, 012109 (2012)



\bibitem{happsalo} E. Haapasalo, T. Heinosaari, Yui Kuramochi, \textit{Saturation of repeated quantum measurements}, J. Phys. A: Math. Theor. \textbf{49} 33LT01  (2016)


\bibitem{guhne} O. Gühne, E. Haapasalo, T. Kraft, J.-P. Pellonpää, and R. Uola, \textit{Colloquium: Incompatible measurements in quantum information science}, Rev. Mod. Phys. \textbf{95}, 011003 (2023)

%\bibitem{tetra} J. M. Renes, \textit{Spherical-code key-distribution protocols for qubits}, Phys. Rev. A, \textbf{70}, 052314 (2004)


%\bibitem{x2} C.L.-Yong, S. X.-Qiang, Z. Shou and Y. K.-Hwang, \textit{Implementation of positive-operator-value measurements for single spin qubit via Heisenberg model}, Chinese Phys. B \textbf{19} 090311 (2010)
\bibitem{circuit} Y. S. Yordanov, C. H. W. Barnes, \textit{Implementation of a general single-qubit positive operator-valued measure on a circuit-based quantum computer}, Phys. Rev. A \textbf{100}, 062317 (2019)
\bibitem{bosonic} W.-L. Ma, S. Puri, R. J. Schoelkopf, M. H. Devoret, S.M. Girvin, L. Jiang, \textit{Quantum control of bosonic modes with superconducting circuits}, Science Bulletin, \textbf{66}, 1789 (2021)
\bibitem{optical} Y.-H. Kang, Y. Xia , P.-M. Lu, \textit{Two-photon phase gate with linear optical elements and atom–cavity system}, Quantum. Inf. Process. \textbf{15}, 4521 (2016)

\bibitem{ancilla} J. Anders, D. K. L. Oi, E. Kashefi, D. E. Browne, and Erika Andersson, \textit{Ancilla-driven universal quantum computation}, Phys. Rev. A \textbf{82}, 020301(R) (2010)

\bibitem{murao} K. Matsui, J.-Y. Wu, H. Yamasaki, M.-H. Hsieh, M. Murao, \textit{Characterizing Space-Constrained Implementability of Quantum Instruments via Signaling Conditions}, 	arXiv:2510.08313 [quant-ph] (2025)

%\bibitem{seq} W.-L. Ma, S.-S. Li, and R.-B.Liu, \textit{Sequential generalized measurements: Asymptotics, typicality, and emergent projective measurements}, Phys. Rev. A \textbf{107}, 012217 (2023) 
\end{comment}

\end{thebibliography}
\end{document}